\newtheorem{theorem}{Theorem}
\newtheorem{lemma}[theorem]{Lemma}
\newtheorem{definition}{Definition}
\newtheorem{proposition}[theorem]{Proposition}
\newtheorem{corollary}[theorem]{Corollary}
\DeclareMathOperator*{\argmax}{arg\,max}
\renewcommand{\Pr}[1]{{\rm Pr}\left[#1\right]}
\newcommand{\E}[1]{{\rm E}\left[#1\right]}
\newcommand{\Var}[1]{{\rm Var}\left[#1\right]}
\newcommand{\set}[1]{\mathcal{#1}}
\newcommand{\cv}{{\bf c}}
\newcommand{\ev}{{\bf e}}
\newcommand{\mv}{{\bf m}}
\newcommand{\tv}{{\bf t}}
\newcommand{\vv}{{\bf v}}
\newcommand{\xv}{{\bf x}}
\newcommand{\yv}{{\bf y}}
\newcommand{\zerov}{{\bf 0}}
\newcommand{\Gm}{{\bf G}}
\newcommand{\Id}{{\bf I}}
\newcommand{\Qm}{{\bf Q}}
\newcommand{\Vm}{{\bf V}}
\newcommand{\Xm}{{\bf X}}
\newcommand{\Ym}{{\bf Y}}
\newcommand{\Zm}{{\bf Z}}
\newcommand{\Nc}{{\cal N}}
\newcommand{\Pc}{{\cal P}}
\renewcommand{\Re}{{\rm Re}}
\begin{document}

\title{Stability of Bernstein's Theorem and \\ Soft Doubling for Vector Gaussian Channels}

\author{
\IEEEauthorblockN{Mohammad Mahdi Mahvari,~\IEEEmembership{Student Member,~IEEE} and Gerhard Kramer,~\IEEEmembership{Fellow,~IEEE}}

\thanks{
Date of current version \today.
This work was supported by the 6G Future Lab Bavaria funded by the Bavarian State Ministry of Science and the Arts, the project 6G-life funded by the Germany Federal Ministry for Education and Research (BMBF), and by the German Research Foundation (DFG) through projects 421682817 and 509917421. This paper was presented in part at the 2023 IEEE Information Theory Workshop [DOI: 10.1109/ITW55543.2023.10161689].
}

\thanks{
The authors are with the Institute for Communications Engineering, School of Computation, Information and Technology, Technical University of Munich (TUM), 80333 Munich, Germany (e-mail: mahdi.mahvari@tum.de; gerhard.kramer@tum.de).
}
}

\maketitle
\thispagestyle{plain}
\pagestyle{plain}
\begin{abstract}
The stability of Bernstein's characterization of Gaussian distributions is extended to vectors by utilizing characteristic functions. Stability is used to develop a soft doubling argument that establishes the optimality of Gaussian vectors for certain communications channels with additive Gaussian noise, including two-receiver broadcast channels. One novelty is that the argument does not require the existence of distributions that achieve capacity.
\end{abstract}

\begin{IEEEkeywords}
additive Gaussian noise, Bernstein’s theorem, random vectors, stability, statistical independence
\end{IEEEkeywords}

\section{Introduction}
\label{sec:introduction}

The following characterization of vector Gaussian distributions builds on the work of Kac~\cite{Kac1939} and Bernstein~\cite{bernstein41} and is a particular case of the main results in~\cite{darmois53,Skitovic53,ghurye1962characterization}.

\begin{theorem}
\label{thm:Bernstein}
Consider the independent\footnote{
By ``independent," we mean statistical independence instead of, e.g., linear independence.} $d$-dimensional random vectors $\Xm_1$ and $\Xm_2$. If $\Xm_1+\Xm_2$ and $\Xm_1-\Xm_2$ are also independent, then $\Xm_1$ and $\Xm_2$ are Gaussian and have the same covariance matrix.
\end{theorem}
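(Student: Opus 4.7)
The plan is to characterize $\Xm_1, \Xm_2$ through their characteristic functions $\phi_i(\uv) \coloneqq \E{e^{i\uv^T \Xm_i}}$, $i = 1,2$. Evaluating the joint characteristic function of $(\Xm_1+\Xm_2, \Xm_1-\Xm_2)$ at $(\sv, \tv) \in \mathbb{R}^{2d}$ in two ways---first using the independence of $\Xm_1, \Xm_2$, then using the hypothesized independence of $\Xm_1+\Xm_2$ and $\Xm_1-\Xm_2$---produces the functional equation
\begin{equation*}
\phi_1(\sv+\tv)\,\phi_2(\sv-\tv) = \phi_1(\sv)\,\phi_2(\sv)\,\phi_1(\tv)\,\phi_2(-\tv).
\end{equation*}
Before taking logarithms I would verify that $\phi_1, \phi_2$ have no zeros: setting $\sv = \tv = \wv/2$ in the equation gives $\phi_1(\wv) = \phi_1(\wv/2)^2\,|\phi_2(\wv/2)|^2$ (and symmetrically for $\phi_2$), so a zero at any $\wv\neq\zerov$ would propagate along the sequence $\wv/2^n \to \zerov$, contradicting continuity and $\phi_i(\zerov)=1$. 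Hence $\psi_i \coloneqq \log \phi_i$ is well-defined globally with $\psi_i(\zerov) = 0$, and the functional equation linearizes to
\begin{equation*}
\psi_1(\sv+\tv) + \psi_2(\sv-\tv) = \psi_1(\sv) + \psi_2(\sv) + \psi_1(\tv) + \psi_2(-\tv).
\end{equation*}

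Adding and subtracting this identity with the version obtained by replacing $\tv$ with $-\tv$ yields one equation for $F \coloneqq \psi_1 + \psi_2$ and one for $G \coloneqq \psi_1 - \psi_2$. Splitting $F = F_e + F_o$ into even and odd parts and projecting by parity, I would obtain the parallelogram identity $F_e(\sv+\tv) + F_e(\sv-\tv) = 2F_e(\sv) + 2F_e(\tv)$ together with Jensen's equation $F_o(\sv+\tv)+F_o(\sv-\tv)=2F_o(\sv)$; by continuity of $\psi_i$, the former forces $F_e$ to be a quadratic form and the latter forces $F_o$ to be linear. Substituting $\sv = \tv$ in the $G$-equation kills $G_e$ identically, so the even parts of $\psi_1$ and $\psi_2$ already coincide, and what remains reduces to Cauchy's equation, giving a linear $G_o$. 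The conjugate-symmetry $\phi_i(-\uv) = \overline{\phi_i(\uv)}$ forces each even part to be real and each odd part purely imaginary, so I can write $\psi_i(\uv) = i\,\mv_i^T \uv - \tfrac{1}{2}\,\uv^T \Sigma\, \uv$ with a common positive semidefinite matrix $\Sigma$ (nonnegativity from $|\phi_i| \le 1$) and individual means $\mv_i$. This exhibits each $\phi_i$ as a Gaussian characteristic function with covariance $\Sigma$, giving both assertions of the theorem.

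The hardest step is the no-zeros argument in the first paragraph: the Pexider/parallelogram/Cauchy machinery only delivers polynomial solutions once the logarithmic functional equation is known to hold on all of $\mathbb{R}^d$, so controlling the zero set of $\phi_i$ globally is essential. The transition from scalar to vector is then painless, because the identities above are satisfied pointwise in $\sv, \tv \in \mathbb{R}^d$ and continuity of $\psi_i$ on $\mathbb{R}^d$ remains enough to upgrade Cauchy/Jensen solutions to linear/quadratic forms.
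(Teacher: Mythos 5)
Your proof is correct and takes a genuinely different, more elementary route than the paper. The paper does not prove Theorem~\ref{thm:Bernstein} directly---it cites the Darmois--Skitovi{\v c} literature and, in Sec.~\ref{subsec:stability2-proof}, recovers the theorem as the $\epsilon=0$ case of the substitution chain \eqref{eq:bern:005}--\eqref{eq:bern:048} built for Theorem~\ref{thm:boundedstability}: one forms the increment functions $g'_{i,\tv}$, isolates a Cauchy-type equation \eqref{eq:bern:018} through a sequence of substitutions, invokes Lemma~\ref{lemma:hyperstability}, and extracts a symmetric bilinear form by averaging evaluations at scaled coordinate vectors. You instead apply a parity decomposition directly to the linearized functional equation: combining it with its $\tv\mapsto-\tv$ version isolates $F=\psi_1+\psi_2$ and $G=\psi_1-\psi_2$, and the even/odd split in $\sv$ lands you on the parallelogram identity (quadratic form), Jensen's equation (linear odd part of $F$), and---after $G_e\equiv 0$ falls out---Cauchy's equation for $G_o$, each with a well-known continuous solution. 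Both routes deliver the common covariance, yours because $G_e\equiv 0$ forces $\Re\psi_1=\Re\psi_2$, the paper's because the vectors $\cv_\tv,\cv_\xv$ in \eqref{eq:bern:027}--\eqref{eq:bern:028} are independent of $i$. Your version also makes the no-zeros argument explicit (the $\wv/2^n$ propagation), a point the paper only addresses implicitly when remarking that at $\epsilon=0$ one may ``discard the constraint \eqref{eq:bern:002}.'' The trade-off is that your cleaner decomposition does not track perturbation terms through each step, which is precisely what the paper's substitution chain is designed to do for the stability theorems.

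One small piece of bookkeeping worth spelling out when you write this up: the $G$-equation $G(\sv+\tv)-G(\sv-\tv)=G(\tv)-G(-\tv)$ is not literally Cauchy. Once you know $G_e\equiv0$, set $\sv=\tv$ to get $G_o(2\tv)=2G_o(\tv)$, then substitute $\sv=(\av+\bv)/2$, $\tv=(\av-\bv)/2$ and use oddness to reach $G_o(\av+\bv)=G_o(\av)+G_o(\bv)$; continuity then gives linearity. The same parity/continuity argument upgrades the parallelogram and Jensen identities to a symmetric bilinear form and a linear form respectively in $\mathbb R^d$, and the conjugate symmetry $\psi_i(-\tv)=\psi_i(\tv)^*$ then forces the even (quadratic) part to be real and the odd (linear) part to be purely imaginary, exactly as you state.
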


We refer to Theorem~\ref{thm:Bernstein} as Bernstein's theorem. The result has been used to establish the optimality of Gaussian functions and Gaussian random vectors for several inequalities, including inequalities with applications to reliable communications over channels with additive Gaussian noise (AGN). The following sections review applications of Bernstein's theorem and motivate stability theorems.

\subsection{Applications of Bernstein's Theorem}
\label{subsec:app-Bernstein}

Lieb~\cite{lieb1990gaussian} used the {``}$O(2)$ rotation invariance of products of centered Gaussians'' to show that Gaussian functions achieve equality in the generalized Brascamp-Lieb inequality \cite{brascamp1976best}. Lieb's method is closely related to Bernstein's theorem as it considers products of a function with the vector arguments $(\xv_1+\xv_2)/\sqrt{2}$ and $(\xv_1-\xv_2)/\sqrt{2}$. A particular case of the generalized Brascamp-Lieb inequality is Young's inequality which is met with equality by Gaussian functions. Carlen~\cite{carlen1991superadditivity} used Lieb's technique to prove that Gaussian functions achieve equality in the logarithmic Sobolev inequalities; he refers to the technique as a ``doubling trick''.

More recently, two doubling tricks were used to prove inequalities related to communications problems~\cite{nair2014extremal}:
one based on Bernstein's theorem and another on the central limit theorem (CLT); see also~\cite{courtade2014extremal,courtade2017strong,liu2017information,liu2018information,liu2018forward, anantharam2022unifying,aras2022entropy}. These doubling tricks help to characterize the capacity region, or capacity points, of vector Gaussian broadcast channels \cite{Geng-Nair-IT14,chong2016capacity,ramachandran2017feedback,goldfeld2019mimo,lau2022uniqueness}, multiaccess channels with feedback~\cite{Sula20}, relay channels \cite{el2022strengthened}, Z-interference channels \cite{gohari2021information, costa2020structure}, Gray-Wyner networks \cite{gastpar2019relaxed}, source coding problems \cite{bross2020source,xu2021vector}, and two-way wiretap channels \cite{wang2021secret}. 
The proof using Bernstein's theorem is stronger since it establishes the uniqueness of the capacity-achieving distribution. However, the CLT-based proof seems to apply more generally, e.g., the CLT was needed to determine the sum-rate capacity for multiaccess channels with feedback~\cite{Sula20}.

\subsection{Motivation}
\label{subsec:motivation}

A key step in applying Bernstein's theorem to communications problems is establishing the existence of distributions achieving rate tuples on the boundaries of capacity regions. The primary motivation for this paper was to investigate the necessity of this step because it has several restrictive traits. First, in practice, one can only approach rather than achieve the capacity of noisy channels, so requiring a capacity supremum to be a maximum seems more of mathematical than engineering relevance. Second, if a maximizing distribution is continuous, there is no guarantee that practical (moderate-size finite) modulation alphabets achieve rates close to capacity.

A third limitation is that the existence proof in~\cite{Geng-Nair-IT14} requires several technical theorems on the convergence of sequences of distributions, including Prokhorov's theorem~\cite{prokhorov56}, the converse~\cite{boos85} of the Scheff\'e-Riesz theorem~\cite{scheffe47}, L\'evy's continuity theorem, and a theorem of Godavarti-Hero~\cite{Godavarti04}. As a consequence, proving existence can be tedious, and most papers \cite{ramachandran2017feedback, el2022strengthened, gohari2021information, bross2020source, chong2016capacity, costa2020structure, gastpar2019relaxed, xu2021vector} simplify exposition by referring to~\cite{Geng-Nair-IT14} for the approach. Instead, we wish to have an accessible proof that requires only basic theory and considers individual distributions rather than sequences.

A second motivation for this paper was to extend the stability of Bernstein's theorem from scalars to vectors. For scalars, this stability is based on the stability of the Cauchy functional equation for bi-infinite~\cite{hyers1941stability} and finite~\cite{skof1983proprieta} intervals; treating vectors requires extensions to multivariate functions~\cite{kominek1989local}. We also use stability results for vector differential entropies based on individual distributions~\cite{Ghourchian2017}.

\subsection{Stability of Cram\'er's Theorem}
\label{subsec:stability}

Stability for statistical independence has a long history~\cite{lukacs77aap}. For example, a theorem of Cram\'er~\cite{cramer36math} states that the sum $X=X_1+X_2$ of independent $X_1$ and $X_2$ is Gaussian if and only if $X_1$ and $X_2$ are Gaussian. A corresponding stability theorem is due to Sapogov~\cite{sapogov1951izv,sapogov1955lenin}, see also~\cite[Sec.~3]{lukacs77aap} that cites~\cite[p.~100]{levy37}. To state his result, consider the Gaussian cumulative distribution function (c.d.f.) with zero mean and unit variance, namely
\begin{align}
    \mathcal{F}(x) = \frac{1}{\sqrt{2\pi}} \int_{-\infty}^x e^{-y^2/2} \, dy, \quad x\in\mathbb{R}.
\end{align}
Similarly, define the Gaussian c.d.f.s $\mathcal{F}_i$ with means $m_i$ and variances $\sigma_i^2$, $i=1,2$. Let $\E{X}$ and $\Var{X}=\E{(X-\E{X})^2}$ denote the expectation and variance of $X$.

Consider the uniform distance, or Kolmogorov distance~\cite[eq.~(2.1)]{lukacs77aap}, between between the c.d.f.s $F$ and $G$:
\begin{align}
    d(F,G) = \sup_{x\in\mathbb R} |F(x) - G(x)| .
    \label{eq:uniform-metric}
\end{align}
Now if $X=X_1+X_2$ is approximately Gaussian, in the sense that its c.d.f. $F_X$ satisfies
$d( F_X,\mathcal F) < \epsilon$
for some $\epsilon$ satisfying $0<\epsilon<1$, then for $i=1,2$ we have
\begin{align}
    d\left( F_{X_i}, \mathcal{F}_i \right) <  c \, \sigma_i^{-3/4} (-\ln \epsilon)^{-1/8}
    \label{eq:sapogov}
\end{align}
where $c$ is a positive constant independent of $\epsilon$, $m_i\approx \E{X_i}$ and $\sigma_i^2 \approx \Var{X_i}$; see~\cite[Eq. (3.1)]{lukacs77aap} for the precise definitions.  Sapogov later \cite{sapogov1959izv} improved the right-hand side of~\eqref{eq:sapogov} to scale as $(-\ln \epsilon)^{-1/2}$ rather than $(-\ln \epsilon)^{-1/8}$
with $\epsilon$, and this scaling is the best possible in general~\cite{Maloshevsii68tvp}.

\subsection{Stability of Bernstein's Theorem}
\label{subsec:stability-bernstein}

Turning to Bernstein's theorem, several stability results for scalars are described in~\cite[Sec.~4]{lukacs77aap} with different assumptions on the sums and differences of independent $X_1$ and $X_2$. For example, consider~\cite[Thm.~4.4]{lukacs77aap} that uses the uniform metric, i.e., $X_1+X_2$ and $X_1-X_2$ are said to be $\epsilon$-dependent\footnote{This property is called $\epsilon$-independent in~\cite[Sec.~2]{lukacs77aap} but it seems more natural to identify independence with $0$-dependence rather than $0$-independence.} if
\begin{align}
    d\left( F_{X_1+X_2,X_1-X_2}, F_{X_1+X_2}F_{X_1-X_2} \right) < \epsilon
    \label{eq:uniform-metric2}
\end{align}
where the supremum in \eqref{eq:uniform-metric2} is over both real arguments of the c.d.f.s. Now suppose $0<\epsilon<1$ and $\E{|X_i|^{2(1+\delta)}}<\infty$ for $i=1,2$ and some $\delta$ satisfying $0<\delta\le 1$. Then \cite[Thm.~4.4]{lukacs77aap} states that \eqref{eq:uniform-metric2} implies
\begin{align}
    d\left( F_{X_i}, \mathcal{F}_i \right) <  c \, (-\ln \epsilon)^{-1/2}
    \label{eq:hhn}
\end{align}
for $i=1,2$, where $c$ is independent of $\epsilon$, $m_i=\E{X_i}$, and $\sigma_i^2 = (\Var{X_1}+\Var{X_2})/2$; see the text following~\eqref{eq:sapogov}.

The discussion in~\cite{lukacs77aap} describes several other stability metrics, including the L\'evy metric~\cite{levy37} that measures the distance between c.d.f.s $F$ and $G$ as
\begin{align}
    d_L(F,G) = \inf \big\{ h \ge 0:
    & F(x-h)-h \le G(x) \nonumber \\
    & \le F(x+h)+h
    \text{ for all } x \big\}.
    \label{eq:Levy-metric}
\end{align}
We instead follow Klebanov-Yanushkyavichyus~\cite{Klebanov1985,Klebanov1986} (see also~\cite{Yanushkevichius98}) and consider the uniform metric in the characteristic function (c.f.) domain. Let $j=\sqrt{-1}$ and let
\begin{align}
    f_X(t) = \E{e^{j t X}}, \;
    f_{X_1,X_2}(t_1,t_2) = \E{e^{j t_1 X_1 + j t_2 X_2}}
\end{align}
be the c.f.s of $X$ and the pair $(X_1,X_2)$, respectively. For example, the c.f. of a Gaussian distribution with mean $m$ and variance $\sigma^2$ is
\begin{align}
    \Phi(t) = e^{j m t - \frac{1}{2} \sigma^2 t^2}, \quad t\in\mathbb{R}.
\end{align}
$X_1$ and $X_2$ are said to be $\epsilon$-dependent in the c.f. domain if
\begin{align}
    d\left( f_{X_1,X_2}, f_{X_1}f_{X_2} \right) \le \epsilon 
    \label{eq:uniform-metric3}
\end{align}
where the supremum in \eqref{eq:uniform-metric3} is over both real arguments of the c.f.s. The paper~\cite{Klebanov1986} develops the following stability theorem. Let $\set P_\epsilon$ be the class of $(X_1,X_2)$ for which $X_1$ and $X_2$ are independent and $X_1+X_2$ and $X_1-X_2$ are $\epsilon$-dependent in the c.f. domain. Then we have (see~\cite[Thm.~1]{Klebanov1986})
\begin{align}
    c_1 \epsilon \le \sup_{(X_1,X_2)\in \set P_\epsilon} \,
    \max_{i=1,2} \; d\left( f_{X_i}, \Phi_i \right) \le c_2 \epsilon
    \label{eq:uniform-metric4}
\end{align}
for Gaussian c.f.s $\Phi_i$, $i=1,2$,
where $c_1$ and $c_2$ are positive constants independent of $\epsilon$. The bounds~\eqref{eq:uniform-metric4} imply that the scaling proportional to $\epsilon$ is generally the best possible.

\subsection{Multivariate Stability}
\label{subsec:multivariate-stability}

Gabovi{\v c}~\cite{Gabovic1976} established stability for a vector form of the Darmois-Skitovi{\v c} theorem~\cite{darmois53,Skitovic53} that generalizes Bernstein's theorem. However, there are several differences to the models and metrics of Sec.~\ref{subsec:stability-bernstein}. Consider the $d$-dimensional random vectors $\Xm_1$ and $\Xm_2$.
\begin{itemize}
\item Gabovi{\v c}~\cite{Gabovic1976} defines $\epsilon$-dependence for a vector form of the L\'evy metric \eqref{eq:Levy-metric} rather than a vector form of \eqref{eq:uniform-metric3};
\item $\Xm_1$ and $\Xm_2$ are permitted to be $\epsilon$-dependent (in the L\'evy metric) and not only $\Xm_1+\Xm_2$ and $\Xm_1-\Xm_2$; we also treat this case in Sec.~\ref{subsec:discussion} below;
\item the joint distribution of $\Xm_1,\Xm_2$ is shown to be near-Gaussian, whereas for independent $\Xm_1$ and $\Xm_2$ one may show that $\Xm_1$ and $\Xm_2$ are individually near-Gaussian;
\item the random vectors must satisfy a special condition ``to prevent the `leakage' of a significant probabilistic mass to infinity''~\cite[p.~5]{Gabovic1976}; this restriction seems to prevent the theory from fully generalizing Bernstein's theorem;
\item the stability in the L\'evy metric converges slowly in $\epsilon$ and behaves as
\begin{align}
    \frac{\big[\log \log \log (1/\epsilon)\big]^{(d+3)/4}}
    {\big[\log \log (1/\epsilon)\big]^{1/8}} \; ;
\end{align}
\item there is no claim of an identical covariance matrix for $\Xm_1$ and $\Xm_2$.
\end{itemize}

Thus, there are several advantages of studying stability in the c.f. domain. First, we need not \emph{a-priori} exclude certain random vectors, i.e., the stability theory generalizes Theorem~\ref{thm:Bernstein}. Second, we prove a common covariance matrix for sufficiently small $\epsilon$. Third, convergence is proportional to $\epsilon$ which is the best possible scaling, see \eqref{eq:uniform-metric4} and
Theorems~\ref{thm:boundedstability} and~\ref{thm:unboundedstability} below. Finally, we can relate $\epsilon$-dependence to mutual information; see Lemma~\ref{lemma:dt-to-mutualinfo} below. It is unclear whether one can improve the approach in~\cite{Gabovic1976} to give such properties.

\subsection{Organization}
\label{subsec:organization}

This paper has two main parts. The first part deals with the stability of Bernstein's theorem for random vectors. Sec.~\ref{sec:prelim} develops notation and reviews properties of multivariate c.f.s (Lemmas~\ref{lemma:c.f.upperbound} and~\ref{lemma:dt-to-mutualinfo}). Sec.~\ref{sec:stability-lemmas} develops several stability results for c.f.s and p.d.f.s (Lemmas~\ref{lemma:hyperstability}-\ref{lemma:pointwise-pdf-to-L1}). Sec.~\ref{sec:stability-theorems} states and proves our main stability theorems (Theorems~\ref{thm:boundedstability}-\ref{thm:unboundedstability-general}).

The second part of the paper applies the stability theory to AGN channels. Sec.~\ref{sec:soft-doubling} develops ``soft'' versions (Proposition~\ref{prop:agn} to Theorem~\ref{theorem:LV-IT07-Thm8}) of the ``hard'' doubling arguments in~\cite{Geng-Nair-IT14} for point-to-point channels, product channels, and two-receiver broadcast channels. Sec.~\ref{sec:conclusions} concludes the paper. 

Appendixes~\ref{appendix:a}-\ref{appendix:quadratic} develop results on the stability of Cauchy's functional equation (Lemmas~\ref{lemma:Hyers-stability}-\ref{lemma:multivariate-biadditve}) including for multivariate biadditive functions. Appendices~\ref{appendix:klebanov}-\ref{appendix:lemma:pointwise-pdf-to-L1} prove Lemmas~\ref{lemma:klebanov1}, \ref{lemma:pointwise-cf-to-pdf}, and~\ref{lemma:pointwise-pdf-to-L1}, respectively. Appendix~\ref{appendix:robust} treats a metric that is more restrictive than $\epsilon$-dependence (Lemma~\ref{lemma:robustly-epsilon-dependent}).

\section{Preliminaries}
\label{sec:prelim}

\subsection{Basic Notation}
\label{subsec:notation}

The $p$-norm for $d$-dimensional vectors is written as
\begin{align}
 \|\xv\|_p=\left(\sum_{i=1}^d |x_i|^p\right)^{1/p}
\end{align}
and we write $\|\xv\|_\infty=\max_{1\le i\le d}\,|x_i|$. We usually consider the 1-norm that we write as $\|\xv\|=\|\xv\|_1$. We have the bounds
\begin{align}
    & \|\xv\|_2 \, \le \|\xv\| \, \le \sqrt{d}\, \|\xv\|_2 
    \label{eq:normbound1} \\
    & \|\xv\|_\infty \, \le \|\xv\| \, \le d\, \|\xv\|_\infty .
    \label{eq:normbound2}
\end{align}
For complex-valued functions on $\mathbb R^d$ we write
\begin{align}
 \|f\|_p=\left(\int_{\mathbb R^d} |f(\tv)|^p \, d\tv\right)^{1/p}.
\end{align}
The $\ell_p$ distance of $f$ from $g$ is $\|f-g\|_p$. The volume of a ball of radius $r$ in $d$ dimensions with respect to the $p$-norm is
\begin{align}
    V_{p,d}(r) = \int_{\|\tv\|_p\le r} d\tv = \frac{\left(2 \Gamma(1+1/p)\right)^d}{\Gamma(1+d/p)} \, r^d
\end{align}
where $\Gamma$ is the gamma function. For example, if $p=1$, then
\begin{align}
    V_{1,d}(r) = \frac{2^d}{d!} \, r^d 
    \;\;\Rightarrow\;\; V_{1,d}(r) \le 2 \, r^d .
    \label{eq:volume-1-d-ball}
\end{align}

For a square matrix $\Qm$, we write $\det\Qm$ for the determinant of $\Qm$, and $\Qm'\preceq\Qm$ if $\Qm-\Qm'$ is positive semi-definite. The $d\times d$ identity matrix is written as $\Id_d$. The vector with zero entries except for a 1 in entry $i$ is written as $\ev_i$.

We write sets with calligraphic letters such as $\set E$. Set complements and direct products are written as $\set{E}^c$ and $\set{E}_1 \times \set{E}_2$, respectively.

The distribution, c.d.f., mean, and covariance matrix of $\Xm$ are written as $P_{\Xm}$, $F_{\Xm}$, $\mv_{\Xm}=\E{\Xm}$, and
\begin{align}
    \Qm_{\Xm}=\E{(\Xm-\mv_{\Xm})(\Xm-\mv_{\Xm})^T}
\end{align}
respectively, where $\tv^T$ is the transpose of $\tv$. The distribution $P_{\Xm}$ is absolutely continuous (a.c.) with respect to the Lebesgue measure if and only if a p.d.f. exists that we write as $p_{\Xm}$.

The notation $h(p)$, $h(\Xm)$, $I(\Xm;\Ym)$, and $D(p||q)$ refers to the differential entropy of the p.d.f. $p$, the differential entropy of $\Xm$, the mutual information of $\Xm$ and $\Ym$, and the informational divergence of the p.d.f.s $p$ and $q$, respectively. We often discard subscripts on probability distributions and other functions for notational convenience.

\subsection{Multivariate Characteristic Functions}
\label{subsec:cf}

The characteristic function (c.f.) of the $d$-dimensional real-valued $\Xm$ evaluated at $\tv \in {\mathbb R}^d$ is
\begin{align}
    f_{\Xm}(\tv) = \E{e^{j \tv^T \Xm}} . 
\end{align}
If the p.d.f. $p_{\Xm}$ exists then  $(p_{\Xm},f_{\Xm})$ can be interpreted as a Fourier transform pair. We will also consider
\begin{align}
    g_{\Xm}(\tv)=\ln{f_{\Xm}(\tv)}
\end{align}
and $g_{\Xm}$ is sometimes called the second c.f. of $\Xm$. If these functions have derivatives of all orders, then one may use a multivariate version of Taylor's theorem to write $f_{\Xm}(\tv)$ as an expansion of the moments of $\Xm$, and one can write $g_{\Xm}(\tv)$ as an expansion of the cumulants of $\Xm$.

The c.f. of the pair $\Xm_1,\Xm_2$ evaluated at $\tv_1,\tv_2$ is
\begin{align}
    f_{\Xm_1,\Xm_2}(\tv_1,\tv_2)
    = \E{e^{j \tv_1^T \Xm_1 + j \tv_2^T \Xm_2}} 
\end{align}
and similarly $g_{\Xm_1,\Xm_2}(\tv_1,\tv_2)=\ln{f_{\Xm_1,\Xm_2}(\tv_1,\tv_2)}$. Note that $\Xm_1$ and $\Xm_2$ need not have the same dimension.

\subsection{Properties of Characteristic Functions}
\label{subsec:cf-properties}

Four basic properties of c.f.s are as follows; see~\cite[p.~55]{ushakov1999selected}: $f_{\Xm}({\bf 0})=1$; $\left|f_{\Xm}(\tv)\right| \le 1$; $f_{\Xm}(-\tv)=f_{\Xm}(\tv)^*$ where $x^*$ is the complex conjugate of $x$; $f_{\Xm}$ is uniformly continuous and therefore non-vanishing in a region around $\tv={\bf 0}$. We thus also have: $\left|g_{\Xm}({\bf 0})\right|=0$; $\Re\{g_{\Xm}(\tv)\} \le 0$; $g_{\Xm}(-\tv)=g_{\Xm}(\tv)^*$; $\Re\{g_{\Xm}(\tv)\}>-\infty$ for a region around $\tv={\bf 0}$.

Another property is the following upper bound on $|f_{\Xm}(\tv)|$ for a region around $\tv={\bf 0}$.

\begin{lemma}[See {\cite[p.~114, Theorem~2.7.1]{ushakov1999selected}}]
\label{lemma:c.f.upperbound}
    Let $f_{\Xm}$ be the c.f. of a non-degenerate distribution in $\mathbb R^d$, i.e., the distribution is not concentrated on a hyperplane of dimension smaller than $d$. Then there exist positive constants $c,T$ such that
    \begin{align}
        |f_{\Xm}(\tv)| \leq 1 - c\|\tv\|^2 \text{ for } \|\tv\| \leq T.
        \label{eq:ushakov-theorem}
    \end{align}
\end{lemma}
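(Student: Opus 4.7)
The plan is to work with $|f_{\Xm}(\tv)|^2$ and the symmetrized variable $\Ym = \Xm - \Xm'$, where $\Xm'$ is an independent copy of $\Xm$. Since
\begin{align*}
|f_{\Xm}(\tv)|^2 = \E{e^{j \tv^T \Ym}} = \E{\cos(\tv^T \Ym)},
\end{align*}
we have $1 - |f_{\Xm}(\tv)|^2 = \E{1 - \cos(\tv^T \Ym)}$. I would first observe that $\Ym$ inherits non-degeneracy from $\Xm$: if $\uv^T \Ym$ were a.s.\ constant for some unit vector $\uv$, then by the symmetry of $\Ym$ that constant must be $0$, so $\uv^T \Xm = \uv^T \Xm'$ a.s.; since $\Xm$ and $\Xm'$ are independent, $\uv^T \Xm$ would be independent of itself and hence a.s.\ constant, putting $\Xm$ on a proper hyperplane and contradicting the hypothesis.

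The main estimate combines the elementary inequality $1 - \cos u \ge (2/\pi^2)\, u^2$, valid for $|u|\le\pi$, with a truncation of $\Ym$. By Cauchy--Schwarz and \eqref{eq:normbound1}, $|\tv^T \Ym| \le \|\tv\|\,\|\Ym\|$, so for any $R>0$ and $\|\tv\|\le \pi/R$ the cosine bound applies pointwise on $\{\|\Ym\| \le R\}$ and yields
\begin{align*}
1 - |f_{\Xm}(\tv)|^2 \;\ge\; \frac{2}{\pi^2}\, \tv^T \Sigma_R \tv, \qquad \Sigma_R := \E{\Ym \Ym^T \, \mathbf{1}_{\{\|\Ym\|\le R\}}} .
\end{align*}
If I can exhibit $R, c_0>0$ with $\Sigma_R \succeq c_0 \Id_d$, then $1 - |f_{\Xm}(\tv)|^2 \ge (2 c_0 /(\pi^2 d))\,\|\tv\|^2$ via $\|\tv\|_2^2 \ge \|\tv\|^2/d$, and the elementary bound $\sqrt{1-x}\le 1-x/2$ on $[0,1]$ delivers \eqref{eq:ushakov-theorem} after possibly shrinking $T$ so the argument of the square root stays in $[0,1]$.

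The crux, and the main obstacle, is producing $R$ and $c_0$ \emph{uniformly} in direction without assuming any moments of $\Xm$. Here I would consider the function $g(\uv) := \E{\min((\uv^T\Ym)^2,\,1)}$, which is continuous on the unit sphere by dominated convergence (the integrand is bounded by $1$) and strictly positive by the non-degeneracy of $\Ym$ established above. Compactness of the sphere then gives $g(\uv) \ge g_{\min} > 0$ uniformly. Splitting the expectation according to whether $(\uv^T\Ym)^2$ exceeds $g_{\min}/2$ yields $\Pr{(\uv^T\Ym)^2 \ge g_{\min}/2} \ge g_{\min}/2$ for every unit $\uv$. Choosing $R$ so large that $\Pr{\|\Ym\|>R} \le g_{\min}/4$ then produces $\uv^T \Sigma_R \uv \ge (g_{\min}/2)(g_{\min}/4)$ for every unit $\uv$, i.e., $\Sigma_R \succeq (g_{\min}^2/8)\Id_d$, which closes the argument. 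The truncation against the bounded integrand $\min(\cdot,1)$ is exactly what lifts the direction-wise positivity (which is all non-degeneracy gives) to a uniform lower bound; without it, second moments could fail to exist and the compactness step would lose its grip.
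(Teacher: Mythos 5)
Your proof is correct and complete. The paper itself gives essentially no proof of this lemma: it cites Ushakov's Theorem~2.7.1 and only remarks that the switch from the 2-norm to the 1-norm is harmless because $c,T$ are generic. You have instead reconstructed a self-contained argument, and the structure is sound: symmetrize via $\Ym=\Xm-\Xm'$ so that $1-|f_{\Xm}(\tv)|^2=\E{1-\cos(\tv^T\Ym)}$, truncate $\Ym$ to radius $R$, apply the Jordan-type bound $1-\cos u\ge (2/\pi^2)u^2$ on $|u|\le\pi$ (valid because $|\tv^T\Ym|\le\|\tv\|\,\|\Ym\|\le\pi$ when $\|\tv\|\le\pi/R$ and $\|\Ym\|\le R$), and then lower-bound $\Sigma_R$ uniformly over directions. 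The genuinely delicate step, which you handle correctly, is converting the purely qualitative non-degeneracy hypothesis into a quantitative uniform bound with no moment assumptions: working with $g(\uv)=\E{\min((\uv^T\Ym)^2,1)}$ is exactly the right device, since the truncation to $[0,1]$ makes $g$ continuous by dominated convergence and hence bounded below on the compact sphere, after which the level-set and tail-truncation estimates give $\Sigma_R\succeq(g_{\min}^2/8)\Id_d$. The inheritance of non-degeneracy by $\Ym$ via the independence/symmetry argument is also correct. One small simplification: the caveat about "possibly shrinking $T$ so the argument of the square root stays in $[0,1]$" is unnecessary, because $1-|f_{\Xm}(\tv)|^2\le 1$ always, so the inequality $1-|f_{\Xm}(\tv)|^2\ge a\|\tv\|^2$ already forces $a\|\tv\|^2\le 1$; thus $T=\pi/R$ suffices. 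What your approach buys over the paper's citation is transparency and self-containment; what the citation buys is brevity and offloading the argument to a standard reference — a reasonable trade given the lemma is classical.
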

\begin{proof}
See \cite[Theorem~2.7.1]{ushakov1999selected}. Note that\cite{ushakov1999selected} uses the 2-norm. However, since $c,T$ are generic, the bound \eqref{eq:normbound1} permits using the 1-norm. 
\end{proof}

We next state two properties of pairs of random vectors and define two versions of $\epsilon$-dependence.  The first property is that $\Xm_1$ and $\Xm_2$ are statistically independent if and only if $f_{\Xm_1,\Xm_2}$ factors as $f_{\Xm_1} f_{\Xm_2}$. Second, the following lemma relates the c.f.s of pairs of random vectors and their mutual information.

\begin{lemma}\label{lemma:dt-to-mutualinfo}
Suppose $\Xm_1, \Xm_2$ have dimensions $d_1,d_2$ and joint p.d.f. $p_{\Xm_1,\Xm_2}$. Then for all $\tv_1\in\mathbb R^{d_1}$ and $\tv_2\in\mathbb R^{d_2}$ we have
\begin{align}
    & \left| f_{\Xm_1, \Xm_2}(\tv_1, \tv_2) - f_{\Xm_1}(\tv_1) f_{\Xm_2}(\tv_2) \right| \nonumber \\
    & \quad \le \left\| p_{\Xm_1,\Xm_2} - p_{\Xm_1}p_{\Xm_2} \right\| \nonumber \\
    & \quad \le \sqrt{2I(\Xm_1;\Xm_2)}
    \label{eq:006}
\end{align}
where the mutual information is measured in nats.
\end{lemma}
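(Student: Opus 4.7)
The plan is to chain two classical inequalities, one for each of the two bounds.

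For the first inequality, I would use that the joint c.f.\ is the Fourier transform of the joint p.d.f., so that
\begin{align}
    f_{\Xm_1,\Xm_2}(\tv_1,\tv_2) - f_{\Xm_1}(\tv_1)f_{\Xm_2}(\tv_2)
    = \int_{\mathbb R^{d_1+d_2}} e^{j\tv_1^T\xv_1+j\tv_2^T\xv_2}\bigl[p_{\Xm_1,\Xm_2}(\xv_1,\xv_2) - p_{\Xm_1}(\xv_1)p_{\Xm_2}(\xv_2)\bigr] d\xv_1\, d\xv_2.
\end{align}
Pulling the absolute value inside the integral and using $|e^{j\tv_1^T\xv_1+j\tv_2^T\xv_2}|=1$ gives the first bound uniformly in $(\tv_1,\tv_2)$. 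This step is essentially an exercise in the triangle inequality for integrals.

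For the second inequality, the key observation is that $\|p_{\Xm_1,\Xm_2}-p_{\Xm_1}p_{\Xm_2}\|_1$ equals twice the total variation distance between the joint distribution and the product of its marginals, and that the corresponding KL divergence is precisely the mutual information,
\begin{align}
    D\bigl(p_{\Xm_1,\Xm_2}\,\|\,p_{\Xm_1}p_{\Xm_2}\bigr) = I(\Xm_1;\Xm_2).
\end{align}
I would then invoke Pinsker's inequality in its standard form $\tfrac{1}{2}\|p-q\|_1 \le \sqrt{D(p\|q)/2}$, which rearranges to $\|p-q\|_1 \le \sqrt{2D(p\|q)}$; specialising $p=p_{\Xm_1,\Xm_2}$ and $q=p_{\Xm_1}p_{\Xm_2}$ gives the claim in nats.

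Neither step contains a real obstacle: Pinsker's inequality is standard (one can cite Cover--Thomas or Csisz\'ar--K\"orner), and the Fourier step is immediate. The only subtlety to state explicitly is that the bound is uniform in $(\tv_1,\tv_2)$, i.e., the right-hand sides do not depend on the c.f.\ arguments, so that \eqref{eq:006} indeed holds pointwise. A short remark could also note that $I(\Xm_1;\Xm_2)$ is well-defined (possibly $+\infty$) because $p_{\Xm_1,\Xm_2}$ exists, in which case the inequality is trivial if $I(\Xm_1;\Xm_2)=\infty$.
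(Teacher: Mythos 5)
Your proposal is correct and follows exactly the paper's route: write the c.f.\ difference as the Fourier transform of $p_{\Xm_1,\Xm_2}-p_{\Xm_1}p_{\Xm_2}$, bound by the $\ell_1$ distance via the triangle inequality, and then apply Pinsker's inequality after identifying $D(p_{\Xm_1,\Xm_2}\,\|\,p_{\Xm_1}p_{\Xm_2})$ with $I(\Xm_1;\Xm_2)$. The only addition is your (correct and harmless) remark about the trivial case $I(\Xm_1;\Xm_2)=\infty$.
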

\begin{proof}
One may write
\begin{align*}
    & \left| f_{\Xm_1, \Xm_2}(\tv_1, \tv_2) - f_{\Xm_1}(\tv_1) f_{\Xm_2}(\tv_2) \right| \nonumber \\
    & = \left| \int_{\mathbb{R}^{d_1+d_2}} e^{j\tv_1^T\xv_1}e^{j\tv_2^T\xv_2} \big[ p(\xv_1,\xv_2) - p(\xv_1) p(\xv_2) \big] d\xv_1 d\xv_2  \right| \\
    & \le \int_{\mathbb{R}^{d_1+d_2}} \left| p(\xv_1,\xv_2) - p(\xv_1) p(\xv_2) \right| d\xv_1 d\xv_2 \\
    & \le \sqrt{2 D(p_{\Xm_1,\Xm_2}||p_{\Xm_1} p_{\Xm_2})}
\end{align*}
where the final step is Pinsker's inequality~\cite[p.~44]{csiszar2011information} and the informational divergence is measured in nats.
\end{proof}

\begin{definition} \label{def:eT-dependent}
Let $\epsilon$ and $T$ be non-negative constants. The random vectors $\Xm_1$ and $\Xm_2$ are \emph{$(\epsilon,T)$-dependent} if 
\begin{align}
    \sup_{\|\tv_1\| \le T, \|\tv_2\| \le T}
    |f_{\Xm_1,\Xm_2}(\tv_1, \tv_2)-f_{\Xm_1}(\tv_1)f_{\Xm_2}(\tv_2)| \leq \epsilon . \label{eq:eT-dependent}
\end{align}
Similarly, $\Xm_1$ and $\Xm_2$ are \emph{$\epsilon$-dependent} if they are $(\epsilon,T)$-dependent for all non-negative $T$.
\end{definition}

The $\epsilon$-dependence of Definition~\ref{def:eT-dependent} can be interpreted as $(\epsilon,\infty)$-dependence. Also, $\Xm_1$ and $\Xm_2$ are $0$-dependent (or $(0,\infty)$-dependent) if and only if they are independent.

\subsection{Gaussian Vectors}
\label{subsec:cfs-Gauss}

We write $\Xm \sim \Nc(\mv_{\Xm},\Qm_{\Xm})$ if $\Xm$ is Gaussian with mean $\mv_{\Xm}$ and covariance matrix $\Qm_{\Xm}$, i.e., the p.d.f of $\Xm$ is
\begin{align}
    \phi_{\Xm}(\xv) = \frac{1}{\det\left(2\pi \Qm_{\Xm}\right)^{1/2}} e^{-\frac{1}{2}(\xv-\mv_{\Xm})^T \Qm_{\Xm}^{-1} (\xv-\mv_{\Xm})}
    \label{eq:pdf-Gauss}
\end{align}
where we assumed that $\Qm_{\Xm}$ is invertible. More generally, the Gaussian c.f. is
\begin{align}
    \Phi_{\Xm}(\tv) = e^{ \tv^T \left(j \mv_{\Xm} -\frac{1}{2} \Qm_{\Xm} \, \tv \right)}
    \label{eq:cf-Gauss}
\end{align}
and we have $|\Phi_{\Xm}(\tv)|=1$ if and only if $\tv$ lies in the null space of $\Qm_{\Xm}$. Otherwise, $|\Phi_{\Xm}(c\cdot\tv)|$ strictly decreases from 1 to 0 as $c$ increases from $c=0$ to $c=\infty$. Furthermore, we have
\begin{align}
    \Phi_{\Xm}(2\tv) = \Phi_{\Xm}(\tv)^2 \left| \Phi_{\Xm}(\tv) \right|^2
    \label{eq:cf-Gauss2}
\end{align}
so that for the integer $k\ge0$ we have $|\Phi_{\Xm}(2^k\cdot\tv)|=|\Phi_{\Xm}(\tv)|^{4^k}$ which decreases rapidly with $k$ if $\tv\ne{\bf 0}$.

Finally, a common approach to smooth an $\Xm$, e.g., having a degenerate distribution or having Dirac-delta components, is to add a non-degenerate Gaussian $\Zm$ with small covariances. The distribution of $\Ym=\Xm+\Zm$ is then a.c.\ with respect to the Lebesgue measure, since
\begin{align}
    p_{\Ym}(\yv) = \int_{\mathbb R^d} \phi_{\Zm}(\yv - \xv) \, P_{\Xm}(d\xv)
    \label{eq:ac-density}
\end{align}
serves as a p.d.f. of $\Ym$.\footnote{Some authors prefer to write $dP_{\Xm}(\xv)$ or $dP_{\Xm}$ instead of $P_{\Xm}(d\xv)$ in \eqref{eq:ac-density}. Also, the vector $\Zm$ need not be Gaussian, but it should be non-degenerate and have a p.d.f.}

\section{Stability Lemmas}
\label{sec:stability-lemmas}

This section states several stability lemmas. The first is a local stability of Cauchy's functional equation that we use to prove Theorem~\ref{thm:boundedstability}; see~\eqref{eq:bern:026} and~\eqref{eq:bern:040} below. The second is a multivariate version of a theorem from~\cite{Klebanov1986} that we use to prove Theorem~\ref{thm:unboundedstability}; see~\eqref{eq:005-09} below.

\begin{lemma}[See {\cite[Theorem~1]{kominek1989local}}]
\label{lemma:hyperstability}
Let $g:\left[-T,T\right)^d \rightarrow \mathbb  C$, $T >0$, be a continuous\footnote{It suffices that the projections of $g$ onto each coordinate have at least one continuous point, see Lemma \ref{lemma:kominek-local}.} function satisfying 
\begin{align}
    |g(\xv+\yv)-g(\xv)-g(\yv)| \leq \theta \label{eq:bern:023}
\end{align}
for all $\xv, \yv \in \left[-T,T\right)^d$ such that $\xv+\yv \in \left[-T,T\right)^d$, $\theta >0$. Then there is a continuous and linear function $G: \mathbb R^d \rightarrow \mathbb  C$ such that
\begin{align}
    |g(\xv)-G(\xv)| \leq (4d-1)\theta, \quad \forall \; \xv \in \left[-T,T\right)^d.
    \label{eq:bern:024}
\end{align}
Moreover, if \eqref{eq:bern:023} is valid for all $T>0$ (or $T=\infty$) then $G$ is unique. 
\end{lemma}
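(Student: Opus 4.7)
The plan is to reduce the $d$-dimensional stability to the one-dimensional case by decomposing $\xv \in [-T,T)^d$ along coordinate axes, applying a scalar stability result to each axial restriction, and reassembling via iterated use of the approximate Cauchy hypothesis~\eqref{eq:bern:023}.

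First I would define the $d$ axial restrictions $g_i : [-T,T) \to \mathbb{C}$ by $g_i(s) = g(s \ev_i)$. Applying~\eqref{eq:bern:023} with $\xv = s\ev_i$ and $\yv = t\ev_i$---all three of $s\ev_i$, $t\ev_i$, and $(s+t)\ev_i$ lie in $[-T,T)^d$ whenever $s,t,s+t \in [-T,T)$---immediately gives $|g_i(s+t) - g_i(s) - g_i(t)| \le \theta$, and continuity of $g_i$ is inherited from that of $g$. Next I would invoke Skof's scalar stability theorem~\cite{skof1983proprieta}, whose proof scales $g_i$ to an approximately additive function on all of $\mathbb{R}$ and then applies Hyers' classical theorem~\cite{hyers1941stability}, to obtain constants $c_i \in \mathbb{C}$ such that
\begin{equation}
    |g_i(s) - c_i s| \le 3\theta, \qquad s \in [-T,T).
\end{equation}

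To assemble these scalar bounds into the claimed multivariate bound, I would telescope $\xv = \sum_{i=1}^d x_i \ev_i$ through the partial sums $\xv_k = \sum_{j=1}^k x_j \ev_j$. Since $0 \in [-T,T)$, every $\xv_k$ and every $x_k \ev_k$ lies in $[-T,T)^d$; applying~\eqref{eq:bern:023} to the splitting $\xv_k = \xv_{k-1} + x_k \ev_k$ for $k = 2,\ldots,d$ and summing the resulting $d-1$ error terms yields
\begin{equation}
    \Big| g(\xv) - \textstyle\sum_{i=1}^d g_i(x_i) \Big| \le (d-1)\theta.
\end{equation}
Setting $G(\xv) = \sum_{i=1}^d c_i x_i$ and combining via the triangle inequality then gives $|g(\xv) - G(\xv)| \le (d-1)\theta + 3d\,\theta = (4d-1)\theta$, and $G$ is manifestly continuous and linear on $\mathbb{R}^d$. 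Uniqueness when $T = \infty$ is immediate: if two linear candidates $G_1, G_2$ both satisfy the bound globally, their difference is linear and bounded on all of $\mathbb{R}^d$, hence identically zero.

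The main obstacle lies in the scalar step rather than in the multivariate assembly: Hyers' classical iteration $G(x) = \lim_n 2^{-n} g(2^n x)$ is unavailable on a bounded interval, forcing a dual scaling of the form $\tilde{g}(x) = n\, g(x/n)$ whose approximate additivity must be verified while carefully tracking how errors propagate. It is precisely this bookkeeping that produces the scalar constant $3$, and hence the final $(4d-1)$; the multivariate telescoping is, by contrast, essentially free and cleanly accounts for the remaining $(d-1)\theta$.
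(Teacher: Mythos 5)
Your proof is correct and follows essentially the same route as the paper's proof (Lemma~\ref{lemma:kominek-local} in Appendix~\ref{appendix:a}): restrict $g$ to coordinate axes, apply Skof's scalar stability result with constant $3\theta$, telescope through the partial sums of $\xv$ to pick up $(d-1)\theta$, and combine to get $(4d-1)\theta$. One minor inaccuracy worth flagging: Skof's extension to $\mathbb{R}$ is by periodic tiling (writing $x = k_x T + r_x$ and setting $\tilde g(x) = -k_x g(-T) + g(r_x)$), not by the dual scaling $n\,g(x/n)$ you describe; this does not affect your argument since you invoke Skof's theorem as a black box with the correct constant.
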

\begin{proof}
    See Lemma~\ref{lemma:kominek-local} in Appendix \ref{appendix:a}.
\end{proof}

In the following, to simplify notation we write $f_i$ and $g_i$ for $f_{\Xm_i}$ and $g_{\Xm_i}$, respectively, and similarly for $\mv_i$ and $\Qm_i$. We generally consider $d$-dimensional vectors.

\begin{lemma}[See {\cite[Eq.~(7)]{Klebanov1986}}]\label{lemma:klebanov1}
Suppose $\Xm_1$ and $\Xm_2$ are independent and $\Xm_1+\Xm_2$ and $\Xm_1-\Xm_2$ are $(\epsilon,T)$-dependent. Then for $\|\tv\|\le T$ and $i=1,2$ we have
\begin{align}
    f_i(2\tv) = f_i(\tv)^2 |f_i(\tv)|^2+r^{(3)}_{\epsilon, i}(\tv)
    \label{eq:005-08}
\end{align}
where $\left|r^{(3)}_{\epsilon, i}(\tv)\right| \le 5\epsilon$.
\end{lemma}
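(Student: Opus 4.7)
The plan is to unfold the hypothesis at two particular pairs of arguments and then combine the resulting bounds. By the independence of $\Xm_1$ and $\Xm_2$, the joint and marginal c.f.s of the pair $(\Xm_1+\Xm_2,\,\Xm_1-\Xm_2)$ factor as
\begin{align*}
f_{\Xm_1+\Xm_2,\,\Xm_1-\Xm_2}(\tv_1,\tv_2) &= f_1(\tv_1+\tv_2)\,f_2(\tv_1-\tv_2),\\
f_{\Xm_1+\Xm_2}(\tv) &= f_1(\tv)\,f_2(\tv),\\
f_{\Xm_1-\Xm_2}(\tv) &= f_1(\tv)\,\overline{f_2(\tv)}.
\end{align*}
Substituting $\tv_1=\tv_2=\tv$ and then $\tv_1=-\tv_2=\tv$ into the $(\epsilon,T)$-dependence bound \eqref{eq:eT-dependent} yields, for every $\|\tv\|\le T$,
\begin{align*}
\bigl| f_1(2\tv) - f_1(\tv)^2\,|f_2(\tv)|^2 \bigr| &\le \epsilon,\\
\bigl| f_2(2\tv) - |f_1(\tv)|^2\,f_2(\tv)^2 \bigr| &\le \epsilon.
\end{align*}

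These already have the shape of \eqref{eq:005-08} except that the cross-index magnitude $|f_2|^2$ (resp.\ $|f_1|^2$) appears instead of $|f_1|^2$ (resp.\ $|f_2|^2$). To bridge this mismatch, I take absolute values in both approximate equations: the common product $|f_1(\tv)|^2|f_2(\tv)|^2$ cancels, so subtracting the two gives $\bigl||f_1(2\tv)|-|f_2(2\tv)|\bigr|\le 2\epsilon$ for $\|\tv\|\le T$. Rescaling $\tv\to\tv/2$ extends this to $\bigl||f_1(\tv)|-|f_2(\tv)|\bigr|\le 2\epsilon$ throughout $\|\tv\|\le 2T$, hence in particular on $\|\tv\|\le T$. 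Since $|f_1(\tv)|+|f_2(\tv)|\le 2$, the identity $|a^2-b^2|=|a-b|\,(a+b)$ for nonnegative $a,b$ upgrades the last inequality to $\bigl||f_1(\tv)|^2-|f_2(\tv)|^2\bigr|\le 4\epsilon$.

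Feeding this back into the first displayed bound via the triangle inequality, and using $|f_1(\tv)|\le 1$, gives
\begin{align*}
\bigl| f_1(2\tv) - f_1(\tv)^2\,|f_1(\tv)|^2 \bigr| \le \epsilon + |f_1(\tv)|^2\,\bigl||f_2(\tv)|^2-|f_1(\tv)|^2\bigr| \le 5\epsilon,
\end{align*}
which is \eqref{eq:005-08} for $i=1$; the case $i=2$ is identical by symmetry, starting from the second bound.

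The main obstacle is the ``index swap'' from $|f_2|^2$ to $|f_1|^2$: it is not obvious a priori that $(\epsilon,T)$-dependence of the sum and difference forces any pointwise proximity between $|f_1|$ and $|f_2|$. The key observation that makes the argument work is that the two natural substitutions $\tv_1=\pm\tv_2=\tv$ produce right-hand sides whose \emph{magnitudes} coincide, so subtracting those magnitudes isolates exactly the difference $|f_1|-|f_2|$, and the factor-of-two enlargement of the ball arising from the rescaling $\tv\mapsto\tv/2$ is harmless.
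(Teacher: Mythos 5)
Your proof is correct and follows essentially the same route as the paper's appendix proof: the two substitutions $\tv_1=\tv_2=\tv$ and $\tv_1=-\tv_2=\tv$ into the $(\epsilon,T)$-dependence bound, the observation that the two right-hand products have equal magnitude $|f_1(\tv)|^2|f_2(\tv)|^2$ so that $\bigl||f_1(2\tv)|-|f_2(2\tv)|\bigr|\le 2\epsilon$, the rescaling $\tv\mapsto\tv/2$, the difference-of-squares bound, and the final triangle-inequality step. The only cosmetic difference is that you spell out the c.f.\ factorizations directly rather than routing them through the paper's equation \eqref{eq:bern:004}.
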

\begin{proof}
For general (perhaps dependent) $\Xm_1,\Xm_2$ we have
\begin{align}
    f_{\Xm_1+\Xm_2,\Xm_1-\Xm_2}(\tv_1,\tv_2)
    & = f_{\Xm_1,\Xm_2}(\tv_1+\tv_2,\tv_1-\tv_2) \label{eq:004-a}\\
    f_{\Xm_1+\Xm_2}(\tv_1) f_{\Xm_1-\Xm_2}(\tv_2)
    & = f_{\Xm_1,\Xm_2}(\tv_1,\tv_1) f_{\Xm_1,\Xm_2}(\tv_2,-\tv_2)\label{eq:004-b}
\end{align}
and hence for independent $\Xm_1,\Xm_2$, according to \eqref{eq:eT-dependent} we have
\begin{align}
    &f_1(\tv_1+\tv_2) f_2(\tv_1-\tv_2)\nonumber\\
    &= f_1(\tv_1)f_1(\tv_2) f_2(\tv_1)f_2(-\tv_2)+r_{\epsilon}(\tv_1, \tv_2)
    \label{eq:bern:004}
\end{align}
where
\begin{align}
    |r_{\epsilon}(\tv_1, \tv_2)| \le \epsilon \text{ and }
    \|\tv_i\| \le T, \; i=1, 2.
    \label{eq:bern:004-r}
\end{align}
For the remaining steps of the proof, see Appendix \ref{appendix:klebanov}.
\end{proof}

We next state a result from \cite{Ghourchian2017} on the existence and continuity of differential entropy in the 1-norm. Given $\alpha,m,\nu> 0$, define
$(\alpha,\nu,m)-\mathcal{AC}^d$ to be the class of $d$-dimensional vectors whose distributions are a.c.\ with respect to the Lebesgue measure, and for which the corresponding p.d.f. $p_{\Ym}$ satisfies
\begin{align}
    \underset{\yv\in\mathbb R^d}{\text{ess sup}} \; p_{\Ym}(\yv) < m \;\text{ and }\;
    \E{\|\Ym\|^{\alpha}_{\alpha}} < \nu .
\end{align}

\begin{lemma}[See {\cite[Theorem~1]{Ghourchian2017}}]\label{lemma:stability-entropy}
Let $p$ and $q$ be the p.d.f.s of two random vectors in $(\alpha,\nu,m)-\mathcal{AC}^d$. Then the differential entropies $h(p)$ and $h(q)$ exist. Moreover, if the $\ell_1$ distance satisfies $\|p-q\|\le m$ then
\begin{align}
    |h(p) - h (q)|
    & \le \| p - q \| \cdot \big( c_1 - c_2 \log \| p - q\| \big)
\end{align}
where
\begin{align}
    c_1 & = \frac{d}{\alpha}\left|\log \frac{2\alpha\nu}{d}\right| + |\log (m e)| + \log \frac{e}{2}\nonumber \\
    & \quad + d \log\left[2\Gamma\left(1+\alpha^{-1}\right)\right] + \frac{d}{\alpha}+1 \\ c_2 & = \frac{d}{\alpha} + 2.
\end{align}
\end{lemma}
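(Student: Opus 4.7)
The plan is to prove two things: finiteness of $h(p)$ and $h(q)$, and the quantitative estimate on their difference. For finiteness, the essential-supremum bound $p\le m$ gives the immediate lower bound $h(p)=\int p\log(1/p)\,d\xv\ge -\log m$, and the same for $q$. For the upper bound I would invoke the maximum-entropy principle: subject to the constraint $\E{\|\Ym\|_\alpha^\alpha}\le\nu$, the maximizer is a product of one-dimensional generalized Gaussians with density proportional to $\exp(-\lambda\sum_{i=1}^d|x_i|^\alpha)$ and $\lambda=d/(\alpha\nu)$. A direct computation of its entropy yields
\begin{align*}
h(p)\le \frac{d}{\alpha}\log\frac{\alpha\nu}{d} + d\log\bigl[2\Gamma(1+\alpha^{-1})\bigr]+\frac{d}{\alpha},
\end{align*}
which reproduces several of the summands of $c_1$.

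For the continuity estimate, set $\delta=\|p-q\|$ and $\psi(x)=-x\log x$ for $x\ge 0$. The first step is a pointwise estimate valid for $x,y\in[0,m]$:
\begin{align*}
|\psi(x)-\psi(y)|\le |x-y|\bigl(\bigl|\log|x-y|\bigr|+|\log(me)|+O(1)\bigr).
\end{align*}
This is proved by considering two regimes: when $\min(x,y)\ge|x-y|$, $|\psi'(t)|=|1+\log t|$ is integrable over an interval contained in $[|x-y|,m]$ and yields the two $|\log|\cdot||$ factors; and when $\min(x,y)<|x-y|$, both $\psi(x)$ and $\psi(y)$ are controlled directly since $\psi$ is monotone on $[0,e^{-1}]$ and bounded by a multiple of $|\log(me)|$ on $[e^{-1},m]$.

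The final step is to integrate this pointwise bound over $\mathbb R^d$ after decomposing the domain into a ball $B_R=\{\xv:\|\xv\|_\alpha\le R\}$ and its complement. On $B_R$, substituting $x=p(\xv)$, $y=q(\xv)$ and applying Jensen's inequality to the concave map $t\mapsto t|\log t|$ relative to the normalized Lebesgue measure on $B_R$ converts $\int_{B_R}|p-q|\cdot|\log|p-q||\,d\xv$ into a bound of order $\delta\log(V_{\alpha,d}(R)/\delta)$, which is where the $d\log[2\Gamma(1+\alpha^{-1})]$ summand of $c_1$ originates. On $B_R^c$, Markov's inequality gives $\Pr{\|\Ym\|_\alpha^\alpha>R^\alpha}\le\nu/R^\alpha$, and combining this with a conditional max-entropy estimate on the tail bounds $\int_{B_R^c}(|\psi(p)|+|\psi(q)|)\,d\xv$ by a quantity of order $(\nu/R^\alpha)\log(R^\alpha/\nu)$. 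The main obstacle, and the source of the exact coefficient $c_2=d/\alpha+2$, is the careful balancing of these two contributions: choosing $R$ so that $R^\alpha$ is of order $1/\delta$ absorbs the tail into the leading $\delta|\log\delta|$ term and converts the $\log V_{\alpha,d}(R)$ factor into the $(d/\alpha)|\log(2\alpha\nu/d)|$ constant plus an extra $(d/\alpha)|\log\delta|$ term, which combined with the $|\log\delta|$ factor from the pointwise estimate on $B_R$ yields $c_2=d/\alpha+2$ and collects the remaining additive constants into $c_1$.
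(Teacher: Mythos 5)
The paper never proves this lemma: it is imported verbatim (statement and constants) from \cite[Theorem~1]{Ghourchian2017}, and the source you are being compared against contains only the citation, not an argument. So what you have produced is a blind reconstruction of a proof from a different paper, not a parallel to anything in this one. That is worth noticing, because for a cited result the expected ``proof'' here is simply the reference; reconstructing it is extra work and carries extra risk.

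As a reconstruction, your outline is plausible and captures the main ideas one expects — lower bound on $h$ from the essential-sup constraint, upper bound from the generalized-Gaussian max-entropy law under the $\alpha$-moment constraint, a pointwise Lipschitz-type estimate for $\psi(x)=-x\log x$, a ball/tail split, Jensen on the ball and Markov plus a conditional max-entropy bound on the tail. The max-entropy computation is correct (it reduces to $\tfrac12\log(2\pi e\nu)$ at $d=1$, $\alpha=2$, which is the right sanity check), and the overall shape of $c_1,c_2$ is consistent with this strategy. However, two steps are shakier than you present them. First, you invoke Jensen for ``the concave map $t\mapsto t|\log t|$'': this function is concave only on $(0,1)$ and convex on $(1,\infty)$, and nothing in the hypotheses forces $|p(\xv)-q(\xv)|\le 1$ pointwise (only $|p-q|\le 2m$ and $\|p-q\|_1\le m$, and $m$ can exceed $1$, as it does in the paper's own application in \eqref{eq:appd-bound7a}--\eqref{eq:appd-bound7b}). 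One must either use the genuinely concave $-t\log t$ and handle the sign change separately, or split off the set where $|p-q|>e^{-1}$, whose measure and integrand are both controllable; glossing over this hides exactly the case the lemma is meant to handle. Second, the accounting that is supposed to yield $c_2=\tfrac{d}{\alpha}+2$ does not come out cleanly from your description: with $R^\alpha\asymp 1/\delta$, the Jensen bound on the ball contributes $(\tfrac{d}{\alpha}+1)|\log\delta|$ from $\log(V_{\alpha,d}(R)/\delta)$, and the two tail integrals $\int_{B_R^c}|\psi(p)|$ and $\int_{B_R^c}|\psi(q)|$, each estimated via a conditional max-entropy argument, contribute $|\log\delta|$-type terms of their own; making these add to exactly $\tfrac{d}{\alpha}+2$ rather than something larger requires treating the two tails jointly (as a difference) rather than separately, which you do not spell out. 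These are the two places where the sketch would need genuine care to reproduce the stated constants; for the purposes of this paper, though, the correct move is simply to cite \cite{Ghourchian2017} and not reprove it.
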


Finally, we develop two lemmas that convert a pointwise bound in the c.f. domain to bounds in the p.d.f. domain.

\begin{lemma}\label{lemma:pointwise-cf-to-pdf}
Consider $\Ym=\Xm+\Zm$ where $\Zm \sim \Nc(\zerov, \Qm_{\Zm})$ is non-degenerate and independent of $\Xm$. Let $\lambda_{\Zm,{\rm min}}$ be the smallest eigenvalue of $\Qm_{\Zm}$. Suppose $\epsilon<1-e^{-\lambda_{\Zm,{\rm min}}/2}$ and $\left|f_{\Ym}(\tv)-\Phi(\tv)\right|\le \epsilon$ for all $\tv \in {\mathbb R}^d$ and for some Gaussian c.f. $\Phi$ with Fourier transform $\phi$. Then for all $\yv\in\mathbb R^d$ we have
\begin{align}
    \left|p_{\Ym}(\yv)-\phi(\yv)\right| \le B_1(\epsilon) \label{eq:bern:052-03a}
\end{align}
where $B_1(\epsilon) \rightarrow 0$ as $\epsilon \rightarrow 0$ and $B_1(\epsilon)=0$ if $\epsilon=0$.
\end{lemma}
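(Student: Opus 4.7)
The plan is to use Fourier inversion and split the integral into a ``low-frequency'' region where the pointwise hypothesis $|f_\Ym(\tv)-\Phi(\tv)|\le\epsilon$ controls the integrand, and a ``high-frequency'' region where Gaussian tail decay of both $f_\Ym$ and $\Phi$ takes over. Because $\Zm$ is non-degenerate Gaussian independent of $\Xm$, we have $f_\Ym(\tv)=f_\Xm(\tv)\Phi_\Zm(\tv)$ with $|f_\Xm(\tv)|\le 1$, so
\begin{align}
    |f_\Ym(\tv)| \le |\Phi_\Zm(\tv)| = e^{-\tv^T\Qm_\Zm\tv/2} \le e^{-\lambda_{\Zm,\min}\|\tv\|_2^2/2}. \nonumber
\end{align}
In particular $f_\Ym$ is integrable, so Fourier inversion gives
\begin{align}
    p_\Ym(\yv)-\phi(\yv) = \frac{1}{(2\pi)^d}\int_{\mathbb R^d} e^{-j\tv^T\yv}\bigl[f_\Ym(\tv)-\Phi(\tv)\bigr]\,d\tv
    \nonumber
\end{align}
provided $\phi$ (the Fourier transform of $\Phi$) exists as an honest p.d.f., which requires $\Phi$ to be non-degenerate. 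This is the first place the hypothesis $\epsilon<1-e^{-\lambda_{\Zm,\min}/2}$ enters.

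The main obstacle is to establish a quantitative Gaussian decay estimate on $|\Phi(\tv)|$, because $\Phi$ is a priori just some Gaussian c.f. whose covariance might be arbitrary. I would argue as follows: write $\Phi(\tv)=\exp(j\mv^T\tv-\tfrac{1}{2}\tv^T\widetilde\Qm\tv)$ for some $\widetilde\Qm\succeq 0$. For any unit $2$-norm vector $\vv$, the pointwise hypothesis gives
\begin{align}
    |\Phi(\vv)| \le |f_\Ym(\vv)| + \epsilon \le e^{-\lambda_{\Zm,\min}/2} + \epsilon < 1,
    \nonumber
\end{align}
where the last strict inequality is exactly the hypothesis on $\epsilon$. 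Since $|\Phi(\vv)|=e^{-\vv^T\widetilde\Qm\vv/2}$, this yields a uniform lower bound
\begin{align}
    \widetilde\lambda_{\min} \;\ge\; \mu(\epsilon) \;:=\; -2\log\!\left(e^{-\lambda_{\Zm,\min}/2}+\epsilon\right) \;>\;0
    \nonumber
\end{align}
on the smallest eigenvalue of $\widetilde\Qm$, hence $|\Phi(\tv)|\le e^{-\mu(\epsilon)\|\tv\|_2^2/2}$. In particular $\Phi$ is non-degenerate and $\phi$ is a genuine Gaussian p.d.f.

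With this decay in hand, the remainder is a routine estimate. For a radius $R>0$ to be chosen, bound
\begin{align}
    (2\pi)^d\,|p_\Ym(\yv)-\phi(\yv)|
    & \le \int_{\|\tv\|_2\le R}\!\!\epsilon\,d\tv
    \;+\; \int_{\|\tv\|_2>R}\!\!\bigl[|f_\Ym(\tv)|+|\Phi(\tv)|\bigr]d\tv \nonumber \\
    & \le \epsilon\, V_{2,d}(R) + 2\!\int_{\|\tv\|_2>R}\!\! e^{-\mu(\epsilon)\|\tv\|_2^2/2}\,d\tv,
    \nonumber
\end{align}
where I used $\mu(\epsilon)\le \lambda_{\Zm,\min}$ to dominate $|f_\Ym|$ by the same Gaussian. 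The first term is $O\!\bigl(\epsilon R^d\bigr)$ and the second is a standard Gaussian tail of order $O\!\bigl(R^{d-2}\mu(\epsilon)^{-1}e^{-\mu(\epsilon)R^2/2}\bigr)$. Choosing $R=R(\epsilon)=\sqrt{C\mu(\epsilon)^{-1}\log(1/\epsilon)}$ with a sufficiently large constant $C$ makes the tail term $O(\epsilon)$ and the bulk term $O\!\bigl(\epsilon\,(\log(1/\epsilon))^{d/2}\mu(\epsilon)^{-d/2}\bigr)$, both of which tend to $0$ as $\epsilon\to 0$. Defining $B_1(\epsilon)$ to be the resulting explicit bound (and $B_1(0)=0$, using that $\epsilon=0$ makes $f_\Ym\equiv\Phi$ and hence $p_\Ym\equiv\phi$) completes the proof. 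The only delicate point is the lower bound on $\widetilde\lambda_{\min}$; once that is in place, everything else is Fourier inversion plus tail trimming.
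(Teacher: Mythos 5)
Your proof is correct and follows essentially the same route as the paper: both lower-bound the smallest eigenvalue of the covariance of $\Phi$ via $|\Phi(\tv)|\le|\Phi_{\Zm}(\tv)|+\epsilon$ to get the same quantity $-2\ln\bigl(e^{-\lambda_{\Zm,\min}/2}+\epsilon\bigr)$, then split the Fourier-inversion integral at a radius of order $\sqrt{\log(1/\epsilon)}$, bounding the bulk by $\epsilon$ times a ball volume and the tail by Gaussian decay. The only cosmetic difference is that the paper works with the $1$-norm ball and a $Q$-function/Chernoff tail estimate, whereas you use the $2$-norm ball and a direct spherical tail estimate.
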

\begin{proof}
    See Appendix~\ref{appendix:c}.
\end{proof}

\begin{lemma}\label{lemma:pointwise-pdf-to-L1}
Consider $\Ym_p$ and $\Ym_q$ with finite second moments and respective p.d.f.s $p$ and $q$. Suppose we have
\begin{align}
    \left|p(\yv)-q(\yv)\right| \le B_1(\epsilon) \label{eq:bern:052-03b}
\end{align}    
for all $\yv\in\mathbb R^d$ where $B_1(\epsilon) \rightarrow 0$ as $\epsilon \rightarrow 0$ and $B_1(\epsilon)=0$ if $\epsilon=0$. Then we have
\begin{align}
    \left\|p-q\right\| \le B_2(\epsilon) \label{eq:bern:052-03c}
\end{align}
where $B_2(\epsilon) \rightarrow 0$ as $\epsilon \rightarrow 0$ and $B_2(\epsilon)=0$ if $\epsilon=0$. Moreover, if the fourth moments\footnote{The fourth moments arise because we applied the Cauchy-Schwarz inequality in step $(a)$ of \eqref{eq:appd-bound8b} below. Instead, using H\"older's inequality, one can weaken the requirement and permit the $2+\delta$ moments to be bounded for any $\delta>0$, see the final paragraph of Appendix~\ref{appendix:lemma:pointwise-pdf-to-L1}.} of $\Ym_q$ are also bounded then
\begin{align}
    \E{\Ym_q\Ym_q^T} \preceq \E{\Ym_p\Ym_p^T} + B_3(\epsilon) \, \Id_d
    \label{eq:bern:052-03d}
\end{align}
where $B_3(\epsilon) \rightarrow 0$ as $\epsilon \rightarrow 0$ and $B_3(\epsilon)=0$ if $\epsilon=0$.
\end{lemma}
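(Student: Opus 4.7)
The central difficulty is that a pointwise bound on $|p(\yv)-q(\yv)|$ cannot by itself yield an $\ell_1$ bound over the infinite-measure space $\mathbb R^d$. The plan is a truncation argument for both \eqref{eq:bern:052-03c} and \eqref{eq:bern:052-03d}: split every integral at a ball $\mathcal B_R = \{\yv : \|\yv\|\le R\}$, control the ball part by the uniform bound times the volume \eqref{eq:volume-1-d-ball}, control the tail by a moment bound, and then optimize the radius $R$ as a function of $\epsilon$.

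For \eqref{eq:bern:052-03c}, the ball contribution is $\int_{\mathcal B_R}|p-q|\,d\yv \le B_1(\epsilon)\cdot V_{1,d}(R) \le 2\,B_1(\epsilon)\,R^d$. For the tail, use $|p-q|\le p+q$ and apply Markov's inequality to $\|\Ym\|^2$, invoking the finite second moments to obtain a bound of order $M_2/R^2$ with $M_2 = \max\{\E{\|\Ym_p\|^2},\E{\|\Ym_q\|^2}\}$. Choosing $R$ to balance $B_1(\epsilon)R^d$ against $1/R^2$ yields $R\propto B_1(\epsilon)^{-1/(d+2)}$ and $B_2(\epsilon) = O(B_1(\epsilon)^{2/(d+2)})$, which tends to $0$ as $\epsilon\to 0$ and vanishes at $\epsilon=0$.

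For \eqref{eq:bern:052-03d}, I would reduce the matrix ordering to showing $\E{(\uv^T\Ym_q)^2} - \E{(\uv^T\Ym_p)^2} \le B_3(\epsilon)$ for every unit-2-norm $\uv$, and express the left-hand side as $\int (\uv^T\yv)^2 (q(\yv)-p(\yv))\,d\yv$. Splitting at $\mathcal B_R$ and using $(\uv^T\yv)^2\le\|\yv\|_2^2\le\|\yv\|^2\le R^2$ on the ball (via \eqref{eq:normbound1}), the ball contribution is at most $R^2\,B_2(\epsilon)$ by the first part just proven. For the tail, write $\int_{\mathcal B_R^c}\|\yv\|^2 p\,d\yv = \E{\|\Ym_p\|^2\,1_{\|\Ym_p\|>R}}$ and apply Cauchy-Schwarz to obtain a bound of the form $\sqrt{M_4}\cdot\sqrt{M_2/R^2} = O(1/R)$, using the finite fourth-moment hypothesis, and analogously for $q$. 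Since this bound is independent of $\uv$, balancing $R^2 B_2(\epsilon)$ against $1/R$ gives $R\propto B_2(\epsilon)^{-1/3}$ and $B_3(\epsilon)=O(B_2(\epsilon)^{1/3})$, which again vanishes with $\epsilon$.

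The only real obstacle is picking the two truncation radii correctly; everything else is bookkeeping. The footnoted strengthening from fourth to $(2+\delta)$ moments follows by replacing Cauchy-Schwarz with H\"older's inequality with conjugate exponents $(2+\delta)/2$ and $(2+\delta)/\delta$ in the tail step of the covariance argument, so the tail still decays polynomially in $R$ and the balancing still produces a $B_3(\epsilon)$ that vanishes with $\epsilon$.
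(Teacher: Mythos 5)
Your overall strategy is the same as the paper's: truncate at a ball $\mathcal B_R$, control the ball part by the uniform bound times the volume, control the tail by moment/Markov estimates, and optimize $R$. The proof of \eqref{eq:bern:052-03c} is correct and matches the paper's argument (the paper happens to set $B_1(\epsilon)=T_2^{-(d+3)}$ rather than balancing to $R\propto B_1(\epsilon)^{-1/(d+2)}$, but both choices give $B_2(\epsilon)\to 0$). Your treatment of the ball term in \eqref{eq:bern:052-03d} via $R^2\|p-q\|\le R^2 B_2(\epsilon)$ is also a fine alternative to the paper's direct use of the pointwise bound $B_1(\epsilon)$.

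However, there is a genuine gap in your tail bound for \eqref{eq:bern:052-03d}. You write
\begin{align*}
\int_{\mathcal B_R^c}\|\yv\|^2\, p(\yv)\,d\yv = \E{\|\Ym_p\|^2\,1_{\|\Ym_p\|>R}}
\end{align*}
and then apply Cauchy--Schwarz, which requires $\E{\|\Ym_p\|^4}$ to be finite. But the lemma only assumes bounded fourth moments of $\Ym_q$, not of $\Ym_p$ (note that $\Ym_p$ here stands for the actual channel output, for which no moments beyond the second are available). The symmetric "analogously for $q$" is fine, but the $p$ side is unjustified as written. The fix is to observe that \eqref{eq:bern:052-03d} is a one-sided matrix inequality: in the expression $\int_{\mathcal B_R^c}\bigl(q(\yv)-p(\yv)\bigr)\,|\uv^T\yv|^2\,d\yv$, the contribution $-\int_{\mathcal B_R^c} p(\yv)\,|\uv^T\yv|^2\,d\yv$ is nonpositive and can simply be dropped, so only the $q$-tail needs to be controlled. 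This is exactly what the paper does in \eqref{eq:appd-bound8b}: the entire tail is upper-bounded by $\E{|\xv^T\Ym_q|^2\,1(\|\Ym_q\|>T_2)}$ before Cauchy--Schwarz is applied, and only the fourth moment of $\Ym_q$ ever appears. With this one-line correction your argument goes through; the H\"older refinement in your last paragraph is then consistent with the paper's remark at the end of Appendix~\ref{appendix:lemma:pointwise-pdf-to-L1}.
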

\begin{proof}
    See Appendix~\ref{appendix:lemma:pointwise-pdf-to-L1}.
\end{proof}

\section{Stability Theorems}
\label{sec:stability-theorems}

This section proves stability theorems for $d$-dimensional random vectors. Theorem~\ref{thm:boundedstability} considers local stability for a finite interval around $\tv=\bf 0$. Theorem~\ref{thm:unboundedstability} extends Bernstein's Theorem to include local stability by generalizing the scalar theory in~\cite{Klebanov1986} to vectors. Theorem~\ref{thm:entropies} gives two stability results: one for differential entropy and one for correlation matrices. We emphasize that these theorems have a common covariance matrix $\widehat \Qm$, which is not the case in~\cite{Klebanov1986,Gabovic1976} and is important to develop further results for product channels in Sec.~\ref{sec:AGN-Product} and for broadcast channels in Sec.~\ref{sec:broadcast}. 

\begin{theorem}\label{thm:boundedstability}
    Suppose $\Xm_1$ and $\Xm_2$ are independent random vectors, and $\Xm_1+\Xm_2$ and $\Xm_1-\Xm_2$ are $(\epsilon,T)$-dependent.
    Also, suppose there is a constant $p>0$ such that
    \begin{align}
        |f_i(\tv)| \ge p \quad \text{for $\|\tv\| \le T$ and $i=1,2$.} \label{eq:bern:002}
    \end{align}
    Then for
    $0<\epsilon\le p^4/[360d^2(d+1)]$ and $\|\tv\| \le T/2$ we have
    \begin{align}
        \left|f_i(\tv)-\Phi_i(\tv)\right| \le C(\epsilon) \cdot |\Phi_i(\tv)|, \quad i=1,2
        \label{eq:bern:052}
    \end{align}
    where for some mean vectors $\widehat \mv_i$, $i=1,2$, and for some common covariance matrix $\widehat \Qm$ we have the Gaussian c.f.s
    \begin{align}
        \Phi_i(\tv) & = e^{ \tv^T \left(j \widehat \mv_i -\frac{1}{2} \widehat \Qm \, \tv \right)}, \quad i=1,2
        \label{eq:cf-Gauss-i}
    \end{align}
    and the error term is
    \begin{align}
        C(\epsilon) & = \frac{720 d^2(d+1)}{p^4} \, \epsilon.
        \label{eq:Cep}
    \end{align}
\end{theorem}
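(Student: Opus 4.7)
The plan is to pass from the hypothesized approximate independence of $\Xm_1+\Xm_2$ and $\Xm_1-\Xm_2$ to an approximate Cauchy-type equation for the second c.f.s $g_i=\ln f_i$, split this into real and imaginary parts, and then apply Lemma~\ref{lemma:hyperstability} together with the multivariate biadditive stability result of Appendix~\ref{appendix:quadratic} to show that each $g_i$ is uniformly close on $\|\tv\|\le T/2$ to the logarithm of a Gaussian c.f.\ $\log\Phi_i(\tv) = j\widehat{\mv}_i^T\tv - \tfrac{1}{2}\tv^T\widehat{\Qm}\tv$, with a \emph{common} covariance $\widehat{\Qm}$.

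First, the derivation in the proof of Lemma~\ref{lemma:klebanov1} (specifically~\eqref{eq:bern:004}) gives
\begin{equation*}
\bigl|f_1(\tv+\sv)f_2(\tv-\sv) - f_1(\tv)f_1(\sv)f_2(\tv)f_2(-\sv)\bigr| \le \epsilon
\end{equation*}
whenever $\tv\pm\sv$ have norm at most $T$, in particular for $\|\tv\|,\|\sv\|\le T/2$ by the triangle inequality. Since $|f_i|\ge p$ on the ball of radius $T$, the product on the right has modulus at least $p^4$, so dividing through and choosing the continuous branch of the logarithm yields
\begin{equation*}
\bigl|g_1(\tv+\sv) + g_2(\tv-\sv) - g_1(\tv) - g_1(\sv) - g_2(\tv) - g_2(-\sv)\bigr| \le \tfrac{2\epsilon}{p^4},
\end{equation*}
where the bound $|\log(1+z)|\le 2|z|$ (valid for $|z|\le 1/2$) consumes part of the hypothesis on $\epsilon$.

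Second, writing $g_i=a_i+jb_i$ with $a_i$ even and $b_i$ odd, and substituting $\sv\to-\sv$, addition and subtraction produce approximate relations for $u=g_1+g_2$ and $v=g_1-g_2$. Taking real and imaginary parts yields an approximate parallelogram law for $a_1+a_2$, an approximate Jensen equation for $b_1+b_2$ (which reduces via oddness to approximate Cauchy), and an approximate Cauchy equation for $b_1-b_2$; setting $\sv=\tv$ in the $v$-equation gives the pointwise bound $|a_1(2\tv)-a_2(2\tv)|\le O(\epsilon/p^4)$. Lemma~\ref{lemma:hyperstability} applied to $b_1\pm b_2$ supplies linear approximants, hence $b_i(\tv)\approx\widehat{\mv}_i^T\tv$ within $O(d\epsilon/p^4)$. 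The biadditive stability of Appendix~\ref{appendix:quadratic}, applied to the parallelogram equation for $a_1+a_2$, yields a quadratic-form approximant $-\tv^T\widehat{\Qm}\tv$ within $O(d^2\epsilon/p^4)$; combined with $a_1\approx a_2$, this forces each $a_i$ to be approximated by $-\tfrac{1}{2}\tv^T\widehat{\Qm}\tv$ with the same $\widehat{\Qm}$.

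Assembling the pieces, $|g_i(\tv)-\log\Phi_i(\tv)|\le O(d^2(d+1)\epsilon/p^4)$ on $\|\tv\|\le T/2$, and the hypothesis $\epsilon\le p^4/[360d^2(d+1)]$ keeps this deviation in the regime $|e^w-1|\le 2|w|$. Writing $f_i-\Phi_i = \Phi_i\bigl(e^{g_i-\log\Phi_i}-1\bigr)$ converts the additive bound on $g_i$ into the claimed multiplicative bound $|f_i(\tv)-\Phi_i(\tv)|\le C(\epsilon)|\Phi_i(\tv)|$ with $C(\epsilon)=720 d^2(d+1)\epsilon/p^4$. The principal obstacle is the multivariate parallelogram stability: Lemma~\ref{lemma:hyperstability} handles the linear parts cleanly, but the quadratic part needs biadditive Cauchy stability in $\mathbb{R}^d\times\mathbb{R}^d$, which accounts for the $d^2$ factor in $C(\epsilon)$. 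Equally subtle is the extraction of a \emph{common} $\widehat{\Qm}$: one must argue not only that $a_1+a_2$ is near a quadratic form, but that $a_1-a_2$ is small enough to split the form symmetrically between the two variables, using precisely the $(g_1-g_2)$ difference equation while preserving the optimal linear-in-$\epsilon$ scaling throughout.
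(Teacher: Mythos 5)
Your route is genuinely different from the paper's. The paper works directly with $\tilde g_1(\tv,\xv)=g_1(\xv+\tv)-g_1(\xv)-g_1(\tv)$: after deriving the approximate Cauchy equation \eqref{eq:bern:018} for $g_{1,\tv}'$ it applies Lemma~\ref{lemma:hyperstability} to get a linear approximant $\cv_\tv^T\xv$ in \eqref{eq:bern:026}, builds the symmetric bilinear form explicitly by evaluating the $\cv_{\tv}$ at scaled basis vectors in \eqref{eq:bern:028c}--\eqref{eq:bern:028f}, and extracts the common $\widehat\Qm$ from the fact that the \emph{same} $\cv_\tv$ serves both $i=1,2$ via \eqref{eq:bern:010-01a}; the real/imaginary split is performed only afterwards, in \eqref{eq:bern:029a}--\eqref{eq:bern:045}. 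You instead perform the sum/difference and even/odd decomposition up front, reducing to a parallelogram law for $a_1+a_2$, (Jensen$\to$)Cauchy for $b_1\pm b_2$, and a pointwise bound for $a_1-a_2$, recovering the common $\widehat\Qm$ from the latter. The organization is attractive, and the mechanism for the common covariance is a legitimate alternative to the paper's.

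There are, however, two concrete gaps. First, the quadratic step: you invoke Appendix~\ref{appendix:quadratic} on the approximate parallelogram equation for $a_1+a_2$, but Lemma~\ref{lemma:multivariate-biadditve} is a stability result for $\theta$-\emph{biadditive} functions, not for solutions of the approximate parallelogram law. Passing from the parallelogram defect of $a_1+a_2$ to approximate biadditivity of a companion form such as $B(\tv,\sv)=\frac{1}{4}\left[a(\tv+\sv)-a(\tv-\sv)\right]$ requires a separate argument with its own constants and domain shrinkage, and the paper does not supply a local stability result for the quadratic functional equation itself. (The paper's footnote alternative applies Lemma~\ref{lemma:multivariate-biadditve} to $g_{1,\tv}'$, which \emph{is} biadditive in $(\tv,\xv)$, so no conversion is needed there.) Second, nothing in your sketch forces the extracted $\widehat\Qm$ to be positive semi-definite at finite $\epsilon$; the paper handles this explicitly by perturbing $\widetilde\Qm_R$ to $\widehat\Qm_R\succeq\mathbf 0$ in \eqref{eq:bern:046-cov}--\eqref{eq:bern:049}, and without this step $\Phi_i$ need not be a valid c.f. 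Finally, the Jensen-to-Cauchy reduction for $b_1\pm b_2$ halves the effective box, so the bookkeeping needed to land on exactly $\|\tv\|\le T/2$ with $C(\epsilon)=720d^2(d+1)\epsilon/p^4$ is not established by the $O(\cdot)$ placeholders; you would need to track constants through all four sub-problems to verify the claimed bound.
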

\begin{proof}
    See Sec.~\ref{subsec:stability1-proof}.
\end{proof}

\begin{theorem}\label{thm:unboundedstability}
    Suppose $\Xm_1$ and $\Xm_2$ are independent random vectors, and $\Xm_1+\Xm_2$ and $\Xm_1-\Xm_2$ are $\epsilon$-dependent.
    Then for all $\epsilon$ below some positive threshold, for all $\tv \in {\mathbb R}^d$, and for $i=1,2$ we have
    \begin{align}
        \left|f_i(\tv)-\Phi_i(\tv)\right| \le \tilde C \epsilon \label{eq:bern:052-03}
    \end{align}
    for the Gaussian c.f.s \eqref{eq:cf-Gauss-i}, and
    for a constant $\tilde C$ independent of $\epsilon$ and $\tv$. In particular, if $\epsilon=0$, then $\Xm_1$ and $\Xm_2$ are Gaussian with the same covariance matrix.
\end{theorem}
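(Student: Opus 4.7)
The plan is to apply Theorem~\ref{thm:boundedstability} locally and then lift the resulting estimate to all of $\mathbb R^d$ via the doubling identity of Lemma~\ref{lemma:klebanov1}, following the scalar strategy of Klebanov~\cite{Klebanov1986}.

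\textbf{Local reduction.} By continuity of $f_i$ at the origin, with $f_i({\bf 0})=1$, I first fix $T>0$ and $p\in(0,1)$ so that $|f_i(\tv)|\ge p$ on $\|\tv\|\le T$ for $i=1,2$. An $\epsilon$-dependent pair is $(\epsilon,T)$-dependent for every $T$, so Theorem~\ref{thm:boundedstability} applies once $\epsilon\le p^4/[360 d^2(d+1)]$ and produces means $\widehat\mv_i$, a common covariance $\widehat\Qm$, and Gaussian c.f.s $\Phi_i$ of the form \eqref{eq:cf-Gauss-i} satisfying $|f_i(\tv)-\Phi_i(\tv)|\le C(\epsilon)|\Phi_i(\tv)|\le C(\epsilon)$ on $\|\tv\|\le T/2$. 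This already gives \eqref{eq:bern:052-03} there.

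\textbf{Doubling recursion.} I then combine Lemma~\ref{lemma:klebanov1} with the Gaussian identity $\Phi_i(2\tv)=\Phi_i(\tv)^2|\Phi_i(\tv)|^2$ from \eqref{eq:cf-Gauss2}. Writing $a=f_i(\tv)$, $b=\Phi_i(\tv)$ and using the factorisation
\[
  a^3\bar a - b^3\bar b \;=\; a^3\,\overline{(a-b)} + \bar b\,(a-b)(a^2+ab+b^2)
\]
with $|a|,|b|\le1$, I obtain the pointwise recursion
\[
  |f_i(2\tv)-\Phi_i(2\tv)|\;\le\;4\,|f_i(\tv)-\Phi_i(\tv)|+5\epsilon.
\]
Iterating $k$ times extends \eqref{eq:bern:052-03} from $\|\tv\|\le T/2$ to every ball $\|\tv\|\le 2^{k-1}T$, and hence to every bounded subset of $\mathbb R^d$.

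\textbf{Uniform tail control.} The main obstacle is converting this growing-prefactor estimate into a bound uniform in $\tv$. The key observation is that $|f_i|$ and $|\Phi_i|$ decay in tandem. Taking moduli in Lemma~\ref{lemma:klebanov1} yields $|f_i(2\tv)|\le|f_i(\tv)|^4+5\epsilon$, so once $|f_i|$ drops below some $q<1$ a few further doublings drive it to the stable fixed point of $x\mapsto x^4+5\epsilon$, which is itself $O(\epsilon)$; along directions in which $\widehat\Qm$ is positive-definite, $|\Phi_i(\tv)|=\exp(-\tfrac12\tv^T\widehat\Qm\tv)$ also decays rapidly. I would choose a radius $R^\star$ so that a bounded number of doublings from the initial ball covers $\|\tv\|\le R^\star$ with an $O(\epsilon)$ bound, and so that for $\|\tv\|>R^\star$ both $|f_i(\tv)|$ and $|\Phi_i(\tv)|$ are $O(\epsilon)$; the triangle inequality $|f_i-\Phi_i|\le|f_i|+|\Phi_i|$ then finishes the argument. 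Null directions of $\widehat\Qm$, where $|\Phi_i|\equiv1$, are handled by the same modulus recursion, which together with the local estimate traps $|f_i|$ at $1-O(\epsilon)$ along those rays. Combining the regimes yields \eqref{eq:bern:052-03}, and specialising to $\epsilon=0$ gives $f_i=\Phi_i$, so $\Xm_1$ and $\Xm_2$ are Gaussian with the common covariance $\widehat\Qm$, recovering Bernstein's theorem.
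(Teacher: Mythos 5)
Your outline correctly identifies the ingredients (local estimate from Theorem~\ref{thm:boundedstability}, doubling via Lemma~\ref{lemma:klebanov1} and \eqref{eq:cf-Gauss2}), and the tail idea that $|f_i|$ and $|\Phi_i|$ both become small for large $\|\tv\|$ is the right heuristic. But the way you set up the doubling recursion discards exactly the piece that makes the argument close, and this leaves a genuine gap.

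Your recursion bounds the prefactor by the crude constant $4$:
\begin{align*}
  |f_i(2\tv)-\Phi_i(2\tv)| \le 4\,|f_i(\tv)-\Phi_i(\tv)|+5\epsilon .
\end{align*}
Iterating $k$ times from the initial ball produces an error of order $4^{k}\epsilon$. The triangle-inequality region where both $|f_i|$ and $|\Phi_i|$ are $O(\epsilon)$ only begins once $\tv^T\widehat\Qm\tv \gtrsim \log(1/\epsilon)$, i.e., at radius $R^\star\sim\sqrt{\log(1/\epsilon)}$, so the number of doublings you must survive is $k\sim\log\log(1/\epsilon)$ and grows as $\epsilon\to0$. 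The resulting bound is $\epsilon\cdot\mathrm{polylog}(1/\epsilon)$, not $\tilde C\epsilon$ with $\tilde C$ independent of $\epsilon$; the same issue appears in your modulus recursion, where ``a few'' doublings to drive $|f_i|$ to $O(\epsilon)$ is in fact $\sim\log\log(1/\epsilon)$ doublings. The paper instead retains the sharp prefactor $(|f_i(\tv)|+|\Phi_i(\tv)|)^3\le(2|\Phi_i(\tv)|+|r_{\epsilon,i}(\tv)|)^3$ in \eqref{eq:005-09}, which decays doubly exponentially along the rays $2^{k-1}\tv_0$ by \eqref{eq:cf-Gauss2}. After a finite number $k$ of doublings, fixed by the distribution and independent of $\epsilon$ (this is where Lemma~\ref{lemma:c.f.upperbound} is used to guarantee $|\Phi_i(\tv_0)|<1$), the cubic prefactor drops below $1/2$, the recursion \eqref{eq:005-10-01-02} becomes contractive, and the bound $C_\ell(\epsilon)$ stabilizes at $\max_{0\le i\le k-1}C_i(1)\cdot\epsilon$ for all $\ell$. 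Replacing that prefactor by $4$ removes the contraction and breaks the $O(\epsilon)$ rate.

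A second, smaller issue: you assert that along null directions of $\widehat\Qm$ the modulus recursion ``traps $|f_i|$ at $1-O(\epsilon)$.'' That is not justified and is generally false: along a ray where $|\Phi_i|\equiv 1$ but $\Xm_i$ is non-degenerate, the recursion $|f_i(2\tv)|\le|f_i(\tv)|^4+5\epsilon$ drives $|f_i(2^k\tv)|$ toward $0$, forcing $|f_i-\Phi_i|\to1$. The resolution in the paper is different: after first reducing the degenerate case to the common hyperplane (showing $\set S_1=\set S_2$), Lemma~\ref{lemma:c.f.upperbound} gives $|f_i(\tv)|\le 1-c\|\tv\|^2$ in the non-degenerate subspace, and this forces $\tv^T\widehat\Qm\tv>0$ there for all sufficiently small $\epsilon$ (otherwise \eqref{eq:bern:052} would be violated); so null directions of $\widehat\Qm$ only arise where the distribution itself is degenerate and $|f_i|\equiv1$. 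You would need to add this reduction to make the tail argument legitimate.
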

\begin{proof}
    See Sec.~\ref{subsec:stability2-proof}.
\end{proof}

\begin{theorem}\label{thm:entropies}
Consider the random vectors $\Ym_1=\Xm_1+\Zm_1$ and $\Ym_2=\Xm_2+\Zm_2$ where $\Xm_1,\Xm_2,\Zm_1,\Zm_2$ are mutually independent, $\Ym_1,\Ym_2$ have finite second moments, and the noise vectors $\Zm_i \sim \Nc(\zerov,\Qm_{\Zm_i})$, $i=1,2$, are non-degenerate.
Suppose $\Ym_1+\Ym_2$ and $\Ym_1-\Ym_2$ are $\epsilon$-dependent.
Then for all $\epsilon$ below some positive threshold and for $i=1,2$ we have
\begin{align}
    \left|h(\Ym_i)-h(\Ym_{g,i})\right| \le B(\epsilon) \label{eq:bern:052-03-entropy}
\end{align}
where the $\Ym_{g,i}$ are Gaussian with the same covariance matrix, and thus $h(\Ym_{g,1})=h(\Ym_{g,2})$, and
\begin{align}
    \E{\Ym_{g,i} \Ym_{g,i}^T}
    \preceq \E{\Ym_i \Ym_i^T} + B(\epsilon)\,\Id_d
    \label{eq:bern:052-03-entropy-Q}
\end{align}
where $B(\epsilon)\rightarrow 0$ as $\epsilon\rightarrow 0$ and $B(\epsilon)=0$ if $\epsilon=0$.
\end{theorem}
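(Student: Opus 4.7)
The plan is to chain together the three stability lemmas and Theorem~\ref{thm:unboundedstability}. First, observe that the mutual independence of $\Xm_1, \Xm_2, \Zm_1, \Zm_2$ implies that $\Ym_1$ and $\Ym_2$ are independent. Thus Theorem~\ref{thm:unboundedstability} applies to the pair $(\Ym_1, \Ym_2)$ and produces Gaussian c.f.s $\Phi_i(\tv) = e^{\tv^T(j\widehat\mv_i - \frac{1}{2}\widehat\Qm\,\tv)}$, $i=1,2$, with a \emph{common} covariance matrix $\widehat\Qm$, satisfying
\begin{align}
\bigl| f_{\Ym_i}(\tv) - \Phi_i(\tv) \bigr| \le \tilde C \epsilon, \quad \tv \in \mathbb R^d,\; i=1,2
\end{align}
for all $\epsilon$ below some threshold $\epsilon_0$. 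Define $\Ym_{g,i}$ to be the Gaussian vector with c.f.\ $\Phi_i$, so that $\Ym_{g,1}$ and $\Ym_{g,2}$ share the covariance $\widehat\Qm$ and hence have equal differential entropies.

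Second, since $\Ym_i = \Xm_i + \Zm_i$ with $\Zm_i$ a non-degenerate Gaussian independent of $\Xm_i$, Lemma~\ref{lemma:pointwise-cf-to-pdf} applies to each $\Ym_i$. Shrinking the threshold so that $\tilde C \epsilon < 1 - e^{-\lambda_{\Zm_i,\min}/2}$ for $i=1,2$, we obtain the pointwise bound
\begin{align}
\bigl| p_{\Ym_i}(\yv) - \phi_i(\yv) \bigr| \le B_1(\epsilon), \quad \yv \in \mathbb R^d,
\end{align}
where $\phi_i$ is the Gaussian p.d.f.\ with c.f.\ $\Phi_i$ and $B_1(\epsilon)\to 0$ as $\epsilon \to 0$.

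Third, apply Lemma~\ref{lemma:pointwise-pdf-to-L1} to each pair $(p_{\Ym_i}, \phi_i)$. The hypothesis of finite second moments holds for $\Ym_i$ by assumption, and the Gaussian $\Ym_{g,i}$ has finite fourth (indeed all) moments. The lemma supplies the $\ell_1$ bound $\|p_{\Ym_i} - \phi_i\| \le B_2(\epsilon)$ and directly yields the covariance inequality \eqref{eq:bern:052-03-entropy-Q} with an error term $B_3(\epsilon)$ that vanishes as $\epsilon \to 0$. This already establishes the second claim of the theorem.

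Finally, to obtain the entropy bound \eqref{eq:bern:052-03-entropy}, invoke Lemma~\ref{lemma:stability-entropy} with $p=p_{\Ym_i}$ and $q=\phi_i$. The essential supremum of $p_{\Ym_i}$ is at most the uniform bound on the Gaussian density $\phi_{\Zm_i}$, owing to the mixture representation \eqref{eq:ac-density}; the Gaussian $\phi_i$ is bounded by a constant depending on $\det \widehat\Qm$ (which itself is controlled via $\widehat\Qm \preceq \E{\Ym_i\Ym_i^T}+B_3(\epsilon)\Id_d$); and the $\alpha$-moments for, say, $\alpha=2$ are finite. Lemma~\ref{lemma:stability-entropy} then gives
\begin{align}
\bigl| h(\Ym_i) - h(\Ym_{g,i}) \bigr| \le B_2(\epsilon)\bigl(c_1 - c_2 \log B_2(\epsilon)\bigr),
\end{align}
and the right-hand side tends to zero with $\epsilon$. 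Taking $B(\epsilon)$ to be the maximum of the resulting error terms over $i=1,2$ and over the two claims completes the proof, with $B(0)=0$ following because each intermediate bound vanishes at $\epsilon=0$.

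The main obstacle is bookkeeping: one must tighten the positive threshold on $\epsilon$ at each stage so that (i)~Theorem~\ref{thm:unboundedstability} is in force, (ii)~the amplified error $\tilde C \epsilon$ still lies below the tolerance $1-e^{-\lambda_{\Zm_i,\min}/2}$ required by Lemma~\ref{lemma:pointwise-cf-to-pdf}, and (iii)~$B_2(\epsilon)$ lies below the tolerance $m$ required by Lemma~\ref{lemma:stability-entropy}. Additionally, one must verify that $(\alpha,\nu,m)$ can be chosen uniformly in $\epsilon$, which follows because the bound on $\widehat\Qm$ and the uniform bound on $p_{\Ym_i}$ from \eqref{eq:ac-density} are both $\epsilon$-independent. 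No new analytic idea is required beyond combining the prepared ingredients.
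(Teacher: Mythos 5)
Your overall strategy is the same as the paper's: establish independence of $\Ym_1,\Ym_2$, feed Theorem~\ref{thm:unboundedstability} to get a c.f.\ bound with a common covariance, then run the bound through Lemma~\ref{lemma:pointwise-cf-to-pdf} (c.f.\ to pointwise p.d.f.), Lemma~\ref{lemma:pointwise-pdf-to-L1} ($\ell_1$ and covariance), and Lemma~\ref{lemma:stability-entropy} (entropy), tightening the threshold at each stage. This matches the paper's proof step for step.

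However, one of your verification steps is backwards. To place $\phi_i$ in the class $(\alpha,\nu,m)\text{-}\mathcal{AC}^d$ of Lemma~\ref{lemma:stability-entropy} you need an \emph{upper} bound on $\mathrm{ess\,sup}\,\phi_i = \det(2\pi\widehat\Qm)^{-1/2}$, i.e.\ a \emph{lower} bound on $\det\widehat\Qm$. You claim this is "controlled via $\widehat\Qm \preceq \E{\Ym_i\Ym_i^T}+B_3(\epsilon)\Id_d$," but that inequality only bounds $\det\widehat\Qm$ from \emph{above}, which gives a lower bound on the density peak — useless for admission into the class. A near-singular $\widehat\Qm$ is consistent with your covariance inequality yet would make $\phi_i$ unbounded. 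The fix is one line and uses an ingredient you already have: the pointwise bound $|p_{\Ym_i}(\yv)-\phi_i(\yv)|\le B_1(\tilde C\epsilon)$ from Lemma~\ref{lemma:pointwise-cf-to-pdf}, together with $\mathrm{ess\,sup}\,p_{\Ym_i}\le \det(2\pi\Qm_{\Zm_i})^{-1/2}$ from the mixture form \eqref{eq:ac-density}, gives $\mathrm{ess\,sup}\,\phi_i\le \det(2\pi\Qm_{\Zm_i})^{-1/2}+B_1(\tilde C\epsilon)$, which is exactly how the paper justifies the choice of $m$. (Equivalently, one can observe that the proof of Lemma~\ref{lemma:pointwise-cf-to-pdf} in Appendix~\ref{appendix:c} already bounds the smallest eigenvalue of $\widehat\Qm$ from below by $\lambda_{\Zm,\epsilon}>0$, but citing the pointwise density bound is cleaner.) With that substitution your argument is correct and identical in structure to the paper's.
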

\begin{proof}
See Sec.~\ref{subsec:stability3-proof}.
\end{proof}

\subsection{Proof of Theorem~\ref{thm:boundedstability}}
\label{subsec:stability1-proof}

We modify the proof steps of~\cite{Klebanov1985,Klebanov1986} who attribute their approach to~\cite{linnik1960,sapogov1981stability,Klebanov1981}. Several steps require considerations particular to multivariate distributions, e.g., Lemma~\ref{lemma:c.f.upperbound} and properties of covariance matrices such as their null spaces, symmetry, and positive semi-definite ordering.

Consider again \eqref{eq:bern:004}-\eqref{eq:bern:004-r} and define the functions
\begin{align}
   g_3(\tv) & = -g_1(\tv)-g_2(\tv) \\
   g_4(\tv) & = -g_1(\tv)-g_2(-\tv).
\end{align}   
Taking logarithms in \eqref{eq:bern:004}, we have (see~\cite[Eq.~(8)]{Klebanov1985})
\begin{align}
    g_1(\tv_1+\tv_2)+g_2(\tv_1-\tv_2)+g_3(\tv_1)+g_4(\tv_2)=R_{\epsilon}(\tv_1,\tv_2)
    \label{eq:bern:005}
\end{align}
for $\|\tv_i\| \le T$, $i=1, 2$, where
\begin{align}
   R_{\epsilon}(\tv_1,\tv_2)=\ln{\left(1+\frac{r_{\epsilon}(\tv_1, \tv_2)}{f_1(\tv_1)f_1(\tv_2) f_2(\tv_1)f_2(-\tv_2)}\right)}.
    \label{eq:bern:006}
\end{align}
We may bound 
\begin{align}
    \left|R_{\epsilon}(\tv_1,\tv_2)\right| \le \frac{3}{2p^4} \epsilon \label{eq:bern:007}
\end{align}
which follows from $|z|/2\le |\ln(1+z)| \le 3|z|/2$ for complex $z$ with $|z| \le 1/2$ \cite[p.~165]{conway1978}.

Next, define
\begin{align}
    g^{\prime}_{i,\tv}(\xv)\coloneqq
    g_i(\xv+\tv)-g_i(\xv)-g_i(\tv),
    \quad i=1,2,3\label{eq:bern:010-01}
\end{align}
and observe that \eqref{eq:bern:005} is the same as
\begin{align}
    g^{\prime}_{1,\tv_1}(\tv_2)+g^{\prime}_{2,\tv_1}(-\tv_2)\!=\!R_{\epsilon}(\tv_1,\tv_2).
    \label{eq:bern:010-01a}
\end{align}
Also, substituting $\tv_1\leftarrow \tv_1+\tv$ in \eqref{eq:bern:005}, and subtracting \eqref{eq:bern:005} from the resulting expression, we have (see~\cite[Eq.~(10)]{Klebanov1985})
\begin{align}
    g_{1,\tv}^{\prime}(\tv_1+\tv_2)+g_{2,\tv}^{\prime}(\tv_1-\tv_2)+g_{3,\tv}^{\prime}(\tv_1)=R^{(1)}_{\epsilon}(\tv,\tv_1,\tv_2)
    \label{eq:bern:011}
\end{align}
for $\|\tv_1\| \le T$, $\|\tv_2\| \le T$, $\|\tv_1+\tv\| \le T$, where
\begin{align}
    |R^{(1)}_{\epsilon}(\tv,\tv_1,\tv_2)| \leq \frac{3}{p^4} \epsilon.
\end{align}
By replacing $\tv_1 \leftarrow \tv, \tv_2 \leftarrow \tv_2-\tv_1$ in \eqref{eq:bern:010-01a} and $\tv_2 \leftarrow \tv_1$ in \eqref{eq:bern:011} we have the respective
\begin{align}
    & g_{2,\tv}^{\prime}(\tv_1-\tv_2)=-g_{1,\tv}^{\prime}(\tv_2-\tv_1)+R_{\epsilon}(\tv,\tv_2-\tv_1) \label{eq:bern:013}\\
    & g_{3,\tv}^{\prime}(\tv_1)=-g_{1,\tv}^{\prime}(2\tv_1)+R^{(1)}_{\epsilon}(\tv,\tv_1,\tv_1)
    \label{eq:bern:014}
\end{align}
where the following inequalities should be satisfied:
\begin{align}
    & \|\tv\|\le T/2, \quad\|\tv_1\| \le T, \quad\|\tv_2\| \le T, \nonumber \\
    & \|\tv+\tv_1\| \le T, \quad\|\tv_2-\tv_1\| \le T.
    \label{eq:bern:016}
\end{align}
Now substitute \eqref{eq:bern:013} and \eqref{eq:bern:014} in \eqref{eq:bern:011} to obtain
\begin{align}
    &g_{1,\tv}^{\prime}(\tv_1+\tv_2)-g_{1,\tv}^{\prime}(\tv_2-\tv_1)+R_{\epsilon}(\tv,\tv_2-\tv_1) \nonumber \\
    & \quad -g_{1,\tv}^{\prime}(2\tv_1)
    +R^{(1)}_{\epsilon}(\tv,\tv_1,\tv_1)=R^{(1)}_{\epsilon}(\tv,\tv_1,\tv_2) .
    \label{eq:bern:017}
\end{align}
Then substituting $\xv:=\tv_2-\tv_1$, and $\yv:=2\tv_1$ gives (see~\cite[Eq.~(14)]{Klebanov1985})
\begin{align}
    &g_{1,\tv}^{\prime}(\xv+\yv)-g_{1,\tv}^{\prime}(\xv)-g_{1,\tv}^{\prime}(\yv) \nonumber \\
    & =R^{(1)}_{\epsilon}(\tv,\yv/2,\xv+\yv/2)-R_{\epsilon}(\tv,\xv)-R^{(1)}_{\epsilon}(\tv,\yv/2,\yv/2)\nonumber \\
    &\coloneqq R^{(2)}_{\epsilon}(\tv,\xv,\yv)
    \label{eq:bern:018}
\end{align}
where
\begin{align}
    & |R^{(2)}_{\epsilon}(\tv, \xv, \yv)| \leq \frac{15}{2p^4} \epsilon
    \label{eq:bern:019}
\end{align}
for
\begin{align}
    &\|\tv\|\leq T/2, \quad\|\xv\| \le T, \quad\|\yv\| \le 2T,\label{eq:bern:020-01}\\
    &\|\tv+\yv /2\| \leq T,  \quad\|\xv + \yv /2\| \le T.\label{eq:bern:020-02}
\end{align}

Now apply Lemma \ref{lemma:hyperstability} to \eqref{eq:bern:018} to obtain (see~\cite[Eq.~(22)]{Klebanov1985})
\begin{align}
    \left| g_{1,\tv}^{\prime}(\xv)-\cv_{\tv}^{\mathrm{T}}\xv \right| \le \frac{15(4d-1)}{2p^4} \epsilon
    \label{eq:bern:026}
\end{align}
for $\|\tv\| \le T/2, \|\xv\| \le T/2$ and where $\cv_{\tv} \in \mathbb  C^{d}$ depends on $\tv$. Moreover, the relation \eqref{eq:bern:010-01a} implies
\begin{align}
    \left| g_{2,\tv}^{\prime}(\xv)-\cv_{\tv}^{\mathrm{T}}\xv \right| \le \frac{15(4d-1)+3}{2p^4} \epsilon \le \frac{30d}{p^4} \epsilon .
    \label{eq:bern:026a}
\end{align}
Inserting \eqref{eq:bern:010-01} into \eqref{eq:bern:026} and \eqref{eq:bern:026a} gives
\begin{align}
    & g_{i}(\xv+\tv)-g_{i}(\xv)-g_{i}(\tv)-\cv_{\tv}^{\mathrm{T}}\xv = R^{(3)}_{\epsilon,i,1}(\tv, \xv)
    \label{eq:bern:027} \\
    & g_{i}(\xv+\tv)-g_{i}(\xv)-g_{i}(\tv)-\cv_{\xv}^{\mathrm{T}}\tv = R^{(3)}_{\epsilon,i,2}(\tv, \xv)
    \label{eq:bern:028}
\end{align}
for $i=1,2$ and $\|\tv\| \le T/2$, $\|\xv\| \le T/2$, where \eqref{eq:bern:028} follows because $g_{1,\tv}^{\prime}(\xv)$ is symmetric with respect to $\xv$ and $\tv$, and where for $i=1,2$ we have
\begin{align}
    \left| R^{(3)}_{\epsilon,i,1}(\tv, \xv) \right| \le \frac{30d}{p^4} \epsilon, 
    \quad \left| R^{(3)}_{\epsilon,i,2}(\tv, \xv) \right| \le \frac{30d}{p^4} \epsilon.
    \label{eq:bern:028-Rbounds}
\end{align}
We emphasize that the vectors $\cv_{\tv}$ and $\cv_{\xv}$ in \eqref{eq:bern:027}-\eqref{eq:bern:028} do not depend on $i$, which is important to establish a common covariance matrix for $\Xm_1$ and $\Xm_2$ in what follows.

We continue to work with $g_{1}(\tv)$ since the same steps follow for $g_{2}(\tv)$. Equations \eqref{eq:bern:027}-\eqref{eq:bern:028} suggest that the symmetric function
\begin{align}
    \tilde g_{1}(\tv,\xv): = g_{1,\tv}'(\xv) = g_{1}(\xv+\tv)-g_{1}(\xv)-g_{1}(\tv)
    \label{eq:bern:028a}
\end{align}
has an ``almost'' symmetric bilinear form, namely
\begin{align}
    \tilde g_{1}(\tv,\xv) = \sum_{k=1}^d \sum_{\ell=1}^d t_k \, x_\ell \, \underbrace{\tilde g_1(\ev_k,\ev_\ell)}_{\displaystyle := -\widetilde\Qm_{k,\ell}} = - \tv^T \widetilde\Qm \xv 
    \label{eq:bern:028b}
\end{align}
where $\widetilde\Qm^T=\widetilde\Qm$.
More precisely, note that $\|T\ev_k/2\|=T/2$ and we have (discarding the dependence of the errors on $\tv,\xv$ where convenient)
\begin{align}
    & \tilde g_1(\tv,\xv) \overset{(a)}{=} R_{\epsilon,1,2}^{(3)} + \cv_{\xv}^{\mathrm{T}} \left(\sum_{k=1}^d \frac{2 t_k}{T}\, \frac{T}{2} \ev_k\right) \nonumber \\
    & \overset{(b)}{=} R_{\epsilon,1,2}^{(3)} + \sum_{k=1}^d \frac{2 t_k}{T} \left[ \tilde R_{\epsilon,1,k}^{(3)} + \cv_{T\ev_k/2}^{\mathrm{T}} \left(\sum_{\ell=1}^d x_\ell \ev_\ell \right)  \right]
    \label{eq:bern:028c}
\end{align}
where step $(a)$ follows by \eqref{eq:bern:028} and step $(b)$ follows by using \eqref{eq:bern:027}-\eqref{eq:bern:028} to write
\begin{align}
    \cv_{\xv}^{\mathrm{T}} \left(\frac{T\ev_k}{2}\right) & = \cv_{T\ev_k/2}^{\mathrm{T}}\xv + \tilde R^{(3)}_{\epsilon,1,k}(\xv)
    \label{eq:bern:028d}
\end{align}
for $\tilde R^{(3)}_{\epsilon,1,k}(\xv):=R^{(3)}_{\epsilon,1,1}(T\ev_k/2,\xv)-R^{(3)}_{\epsilon,1,2}(T\ev_k/2,\xv)$. Observe that
\begin{align}
    \left|\tilde R^{(3)}_{\epsilon,1,k}(\xv)\right| \le \frac{60d}{p^4}\,\epsilon.
    \label{eq:bern:028e}
\end{align}
Repeating the same steps for $\tilde g_1(\xv,\tv)$ and averaging, we get the symmetric bilinear form\footnote{One may alternatively prove the symmetry and bilinearity by using the stability of quadratic functional equations, see Appendix \ref{appendix:quadratic}. In particular, one can apply Lemma~\ref{lemma:multivariate-biadditve} to \eqref{eq:bern:018} and obtain an expression like \eqref{eq:bern:028f} directly.} (see~\cite[Eq.~(24)]{Klebanov1985})
\begin{align}
    \tilde g_1(\tv,\xv) & = R^{(4)}_{\epsilon} + \sum_{k=1}^d \sum_{\ell=1}^d t_k \, x_\ell \, \underbrace{\frac{1}{T}\left(\cv_{T\ev_k/2}^{\mathrm{T}} \ev_\ell + \cv_{T\ev_\ell/2}^{\mathrm{T}} \ev_k \right)}_{\displaystyle := -\widetilde\Qm_{k,\ell}} \nonumber \\
    & = R^{(4)}_{\epsilon} - \tv^{\mathrm{T}}\widetilde\Qm\,\xv
    \label{eq:bern:028f}
\end{align}
where $\|\tv\|\le T/2$ gives $\left|\sum_{k=1}^d \frac{2 t_k}{T} \tilde R_{\epsilon,1,k}^{(3)}\right| \le \frac{60d}{p^4} \epsilon$ and thus
\begin{align}
    \left|R^{(4)}_{\epsilon}(\tv,\xv)\right| \le \frac{90d}{p^4}\,\epsilon.
    \label{eq:bern:028g}
\end{align}

Let $\widetilde\Qm_R,\widetilde\Qm_I$ be the real and imaginary parts of $\widetilde\Qm$. Note that $\tilde g_1(-\tv,-\xv)^*=\tilde g_1(\tv,\xv)$ so taking complex-conjugates in \eqref{eq:bern:028f} gives
\begin{align}
    -\tv^{\mathrm{T}}\widetilde\Qm\,\xv + R^{(4)}_{\epsilon}(\tv,\xv) = -(\tv^{\mathrm{T}}\widetilde\Qm \,\xv)^* + R^{(4)}_{\epsilon}(-\tv,-\xv)^*.
    \label{eq:bern:029a}
\end{align}
We thus have
\begin{align}
    -j \,\tv^T \widetilde\Qm_I \xv = \frac{1}{2} \left( R^{(4)}_{\epsilon}(-\tv, -\xv)^* - R^{(4)}_{\epsilon}(\tv, \xv) \right)
    \label{eq:bern:029b}
\end{align}
and therefore \eqref{eq:bern:028f} is
\begin{align}
    & g_{1}(\xv+\tv)-g_{1}(\xv)-g_{1}(\tv) \nonumber \\
    &= -\tv^{\mathrm{T}}\widetilde\Qm_R\xv + \frac{1}{2} \left( R^{(4)}_{\epsilon}(-\tv, -\xv)^* + R^{(4)}_{\epsilon}(\tv, \xv)\right) \label{eq:bern:029c}
\end{align}
where $\widetilde\Qm_R$ is symmetric. Now write the solution of \eqref{eq:bern:029c} in the form
\begin{align}
    g_1(\tv)=\tv^T \widetilde \mv_1 -\frac{1}{2}\tv^{\mathrm{T}}\widetilde\Qm_R\tv+g(\tv)
    \label{eq:bern:037}
\end{align}
where  $\widetilde \mv_1 \in \mathbb  C^d$ and $g(\tv)$ is some function.\footnote{The subscript of $\widetilde \mv_1$ emphasizes that this vector is specific to $g_1(\tv)$ while for $g_2(\tv)$ one may choose a vector $\widetilde \mv_2$ other than $\widetilde \mv_1$. On the other hand, the matrix $\widetilde \Qm_R$ is common to $g_1(\tv)$ and $g_2(\tv)$ since $\cv_{\tv}$ and $\cv_{\xv}$ are independent of $i$ in \eqref{eq:bern:027}-\eqref{eq:bern:028}.} Inserting \eqref{eq:bern:037} into \eqref{eq:bern:029c} gives
\begin{align}
    |g(\xv+\tv)-g(\xv)-g(\tv)| \le \frac{90d}{p^4}\,\epsilon.
    \label{eq:bern:039}
\end{align}
Again applying Lemma \ref{lemma:hyperstability} we obtain
\begin{align}
    g_1(\tv) & = \tv^T \widehat\mv_1 -\frac{1}{2}\tv^{\mathrm{T}}\widetilde\Qm_R\tv+R^{(5)}_{\epsilon}(\tv)
    \label{eq:bern:040}
\end{align}
where $\widehat \mv_1 \in \mathbb  C^d$ and 
\begin{align}
    \left|R^{(5)}_{\epsilon}(\tv)\right| \le \frac{90 d (4d-1)}{p^4}\,\epsilon.
    \label{eq:bern:041}
\end{align}
Let $\widehat\mv_{1,R},\widehat\mv_{1,I}$ be the real and imaginary parts of $\widehat \mv_1$. Since $g_1(-\tv)^*=g_1(\tv)$ we have
\begin{align}
    \tv^{\mathrm{T}}\widehat\mv_{1,R} = \frac{1}{2} \left( R^{(5)}_{\epsilon}(-\tv)^* - R^{(5)}_{\epsilon}(\tv) \right)
    \label{eq:bern:043}.
\end{align}
Combining \eqref{eq:bern:040} and \eqref{eq:bern:043}, we obtain (see~\cite[Eq.~(32)]{Klebanov1985})
\begin{align}
    g_1(\tv)=j\tv^{\mathrm{T}}\widehat\mv_{1,I}-\frac{1}{2}\tv^{\mathrm{T}}\widetilde\Qm_R\tv+R^{(6)}_{\epsilon}(\tv)
    \label{eq:bern:045}
\end{align}
where $\|\tv\| \le T/2$ and
\begin{align}
    \left|R^{(6)}_{\epsilon}(\tv)\right| \le \frac{360 d^2}{p^4}\,\epsilon .
    \label{eq:bern:046}
\end{align}

Next, although $\widetilde \Qm_R$ is symmetric, it might not be a covariance matrix, i.e., it might not be positive semi-definite. To address this issue, note that $|f_1(\tv)|\le 1$ implies $\Re\{g_1(\tv)\}\le 0$ and therefore \eqref{eq:bern:045}-\eqref{eq:bern:046} give
\begin{align}
    \tv^{\mathrm{T}}\widetilde\Qm_R\tv \ge 2\Re\{R^{(6)}_{\epsilon}(\tv)\} \ge - \frac{720 d^2}{p^4}\,\epsilon.
    \label{eq:bern:046-cov}
\end{align}
We use $d \|\tv\|_2^2/\|\tv\|^2\ge 1$ to bound
\begin{align}
    \min_{\|\tv\|=T/2} \tv^{\mathrm{T}} \underbrace{\left( \widetilde\Qm_R + \frac{d}{(T/2)^2} \frac{720 d^2}{p^4}\,\epsilon \, \Id_d \right)}_{\textstyle := \widehat \Qm_R} \tv \ge 0
    \label{eq:bern:047}
\end{align}
and thus have $\widehat \Qm_R\succeq{\bf 0}$. We now rewrite \eqref{eq:bern:045} as
\begin{align}
    g_1(\tv)=j\tv^{\mathrm{T}}\widehat\mv_{1,I}-\frac{1}{2}\tv^{\mathrm{T}}\widehat\Qm_R\tv+R^{(7)}_{\epsilon}(\tv)
    \label{eq:bern:048}
\end{align}
where we use $\|\tv\|_2 \le \|\tv\| \leq T/2$ to bound
\begin{align}
    \left|R^{(7)}_{\epsilon}(\tv)\right| \le \frac{360 d^2(d+1)}{p^4}\,\epsilon .
    \label{eq:bern:049}
\end{align}
We thus have
\begin{align}
    \frac{360 d^2(d+1)}{p^4}\,\epsilon &\ge \left|g_1(\tv) - \tv^{\mathrm{T}}\left( j\widehat\mv_{1,I}-\frac{1}{2}\widehat\Qm_R\tv\right) \right|
    \nonumber\\
    & \overset{(a)}{=} \left|\ln f_1(\tv) - \ln \Phi_1(\tv) \right|
    \nonumber\\
    & \overset{(b)}{\ge} \ln\left(1+\left|\frac{f_1(\tv)}{\Phi_1(\tv)}-1\right|\right)
    \label{eq:bern:051}
\end{align}
where $(a)$ follows by choosing $\Phi_1(\tv)$ to be the Gaussian c.f. \eqref{eq:cf-Gauss-i} with $i=1$, $\widehat m_1=\widehat m_{1,I}$ and $\widehat\Qm=\widehat\Qm_R$; $(b)$ follows since $|\ln z| \geq \ln(1+|z-1|)$ for any complex value $z$.\footnote{The series expansion of $\exp(z')$ gives $|\exp(z')-1|\le \exp|z'|-1$. Now choose $z'=\ln z$, rearrange, and take logarithms.} We obtain a similar result for $g_2(\tv)$ and thus have
\begin{align}
    \left|\frac{f_i(\tv)-\Phi_i(\tv)}{\Phi_i(\tv)}\right|
    \le \exp{\left(\frac{360 d^2(d+1)}{p^4}\,\epsilon \right)}-1
    \label{eq:bern:052-01}
\end{align}
for $i=1,2$, $\|\tv\| \leq T/2$,  and the Gaussian c.f.s \eqref{eq:cf-Gauss-i}.

To obtain \eqref{eq:bern:052}, observe that for $0\le x\le 1$ we have $1+x \le e^x\le 1+2x$. Thus, if
\begin{align}
    \epsilon\le \frac{p^4}{360 d^2(d+1)}
    \label{eq:epsi-bounds}
\end{align}
then we obtain $C(\epsilon)$ as in \eqref{eq:Cep}.

\subsection{Proof of Theorem~\ref{thm:unboundedstability}}
\label{subsec:stability2-proof}

Consider first $\epsilon=0$. If we follow the steps of Sec.~\ref{subsec:stability1-proof}, in \eqref{eq:bern:007} we may set $R_{\epsilon}(\tv_1,\tv_2)=0$ and can therefore follow all the remaining steps as if $p>0$ even if $|f_1(\tv)|=0$ for some $\tv$. In other words, we can discard the constraint \eqref{eq:bern:002} and set $T=\infty$ in Theorem~\ref{thm:boundedstability} to recover Theorem~\ref{thm:Bernstein}.

Consider next $\epsilon>0$ for which we study two cases. First, if $|f_i(\tv)|\ge p>0$ for all $\tv$ then we can choose $T=\infty$ in Theorem~\ref{thm:boundedstability} and obtain \eqref{eq:bern:052-03} because $|\Phi_i(\tv)|\le 1$ in \eqref{eq:bern:052}. The more difficult case is if we cannot find a $p$ for which $|f_i(\tv)|\ge p>0$ for all $\tv$. The following proof is based on a recursive argument similar to the one in \cite{Klebanov1986}. One novelty is applying Lemma~\ref{lemma:c.f.upperbound} to deal with random vectors.

To begin, suppose that the distributions of $\Xm_i$, $i=1,2$, are concentrated on two hyperplanes $\set S_i + \mv_i \subseteq \mathbb R^d$, where $\set S_i$ is a linear subspace of $\mathbb R^d$. Let $\tv$ be in the subspace $\set S_1^\perp$ orthogonal to $\set S_1$. We then have $f_1(\tv)=e^{j \tv^T \mv_1}$ and \eqref{eq:bern:052} gives $e^{-\tv^T \widehat \Qm \tv / 2} \ge 1-C(\epsilon)$. Suppose next that $\tv\notin\set S_2^\perp$ so that $|f_2(\tv)|<1$ and \eqref{eq:bern:052} gives $e^{-\tv^T \widehat \Qm \tv / 2} \le |f_2(\tv)|+ C(\epsilon)$. However, for sufficiently small $\epsilon$, this is a contradiction. We may thus assume that $\set S_1^\perp\subseteq\set S_2^\perp$ and hence $\set S_2 \subseteq \set S_1$. Repeating the argument for $\tv\in\set S_2^\perp$ we find that $\set S_1 \subseteq \set S_2$ and hence $\set S_1 = \set S_2$.

Next, if $\set S_1$ has dimension $d'$ with $d'<d$, then we may study the $d'$-dimensional distributions on this hyperplane\footnote{For example, if $d'=0$ then the distribution is a single point mass which is a Gaussian distribution with zero variance. As a more interesting example, if $d\ge2$ and the distribution is two distinct point masses, then the distribution is concentrated on a line, and we have $d'=1$.}. That is, we may as well choose $\Phi_i(\tv)$ so that $|\Phi_i(\tv)|=|f_i(\tv)|=1$ for $\tv\in\set S_1^\perp$ by selecting $\widehat \Qm$ such that $\tv^T \widehat \Qm \tv=0$ if $\tv\in\set S_1^\perp$.

So suppose the distributions of $\Xm_i$, $i=1,2$, are concentrated on the same hyperplane of dimension at least one. Since $f_i({\bf 0})=1$ and $f_i$ is continuous, we may choose $T>0$ sufficiently small such that the bounds of \eqref{eq:ushakov-theorem} and \eqref{eq:bern:002} are valid.
For example, we will be interested in $\|\tv\|=T/4$ for which there are positive $c,p$ such that
\begin{align}
    p < |f_i(\tv)| \le 1-\frac{c\, T^2}{16}.
\end{align}
Moreover, for sufficiently small $\epsilon$ we have $C(\epsilon) \le c\, T^2/32$ and \eqref{eq:bern:052} gives $|\Phi_i(\tv)| \le 1-c\, T^2/32$.

Define $r_{\epsilon,i}(\tv):=f_i(\tv)-\Phi_i(\tv)$ and consider the following steps based on \cite[Eq.~(13)]{Klebanov1986}:
\begin{align}
    & \left|r_{\epsilon,i}(2\tv)\right|
    \overset{(a)}{\le} \Big| f_i^2 |f_i|^2-\Phi_i^2|\Phi_i|^2 \Big| + 5\epsilon \nonumber\\
    & = \Big| \left(f_i-\Phi_i\right) \cdot \left( f_i |f_i|^2 + |f_i|^2 \Phi_i + f_i^* \Phi_i^2 \right) \nonumber \\
    & \qquad + \left( f_i-\Phi_i \right)^* \cdot \Phi_i^3 \Big|
    + 5\epsilon \nonumber \\
    & \le \left(|f_i|^3 + |f_i|^2 |\Phi_i| + |f_i| |\Phi_i|^2 + |\Phi_i|^3 \right)\cdot |r_{\epsilon,i}(\tv)| \,+\, 5\epsilon \nonumber\\
    & \le \left(|f_i|+|\Phi_i|\right)^3 |r_{\epsilon,i}(\tv)| \,+\, 5\epsilon \nonumber \\
    & \le \left(2|\Phi_i|+|r_{\epsilon,i}(\tv)|\right)^3 |r_{\epsilon,i}(\tv)| \,+\, 5\epsilon
    \label{eq:005-09}
\end{align}
where step $(a)$ follows from \eqref{eq:cf-Gauss2} and Lemma \ref{lemma:klebanov1} with $T=\infty$, and where we have written $f_i=f_i(\tv)$ and $\Phi_i=\Phi(\tv)$ for notational convenience.

Under the conditions of Theorem \ref{thm:boundedstability}, and using $|\Phi_i(\tv)|\le1$, we proved that
\begin{align}
    \left|r_{\epsilon,i}(\tv)\right| \le C(\epsilon),\quad \|\tv\| \in [0,T/2]
    \label{eq:005-10}
\end{align}
for $i=1,2$. Recall also that $|\Phi_i(c' \tv)|$ is non-increasing in $c'\ge0$ for any $\tv$. Now consider $\|\tv\|=T/4$ and define
\begin{align}
    \tv_0 := \argmax_{\tv:\: \tv\in\set S,\: \|\tv\|=T/4} \; |\Phi_i(\tv)|
\end{align}
and note that $|\Phi_i(\tv_0)|\le 1-c\, T^2/32<1$. Further define $C_0(\epsilon):=C(\epsilon)$ so that using \eqref{eq:005-09} we have
\begin{align}
    \left|r_{\epsilon,i}(\tv)\right| \le C_1(\epsilon), \quad  \|\tv\| \in [T/2,T]
    \label{eq:005-10-01}
\end{align}
where
\begin{align}
    C_1(\epsilon):=\left(2|\Phi_i(\tv_0)|+C_0(\epsilon)\right)^3 C_0(\epsilon)+5\epsilon.
\end{align}
By induction, we have
\begin{align}
    \left|r_{\epsilon,i}(\tv)\right| \le C_k(\epsilon), \;\;  \|\tv\| \in [2^{k-2}T,2^{k-1} T]
    \label{eq:005-10-01-01}
\end{align}
for $k\ge1$ where
\begin{align}
    C_k(\epsilon) := \left(2|\Phi_i(2^{k-1}\tv_0)|+C_{k-1}(\epsilon)\right)^3 C_{k-1}(\epsilon)  +5\epsilon.
    \label{eq:005-10-01-02}
\end{align}
Note that since $C_0(\epsilon)= C(1)\,\epsilon$, the error term $C_k(\epsilon)$ is a polynomial in $\epsilon$ with constant coefficient zero and all other coefficients positive.

Next, since $|\Phi_i(2^{k-1}\tv_0)|$ is non-increasing with $k$, if $C_k(\epsilon)\le C_{k-1}(\epsilon)$ then by induction $C_\ell(\epsilon)\le C_{k-1}(\epsilon)$ for all $\ell\ge k$.\footnote{For example, for $d=1$ we have $C_0(\epsilon)=1440\epsilon/p^4$. Thus, if the cubic term satisfies $(2|\Phi_i(\tv_0)|+C_0(\epsilon))^3\le(1435/1440)p^4$ then $C_1(\epsilon)\le C_0(\epsilon)$ and also $C_\ell(\epsilon)\le C_0(\epsilon)$ for all $\ell\ge0$.} More generally, if $C_k(\epsilon)\le C_i(\epsilon)$ for any $k>i$ then by induction $C_{k+\ell}(\epsilon)\le C_{i+\ell}(\epsilon)$ for $\ell\ge 0$. We use these bounds to complete the proof.

Recall that $|\Phi_i(\tv_0)|<1$ and $|\Phi_i(2^{k-1}\tv_0)|=|\Phi_i(\tv_0)|^{4^{k-1}}$ (see~\eqref{eq:cf-Gauss2}). Thus, for any $x>0$ there is a smallest positive $k$ such that  $2|\Phi_i(2^{k-1}\tv_0)|\le x$, and this $k$ is independent of $\epsilon$.\footnote{This argument does not work if $|f_i(\tv_0)|=1$ because then $\epsilon\rightarrow0$ requires $|\Phi_i(\tv_0)|\rightarrow1$.\label{footnote:f1}} For $x<2^{-1/3}\approx 0.7937$, we can therefore determine $k$ and also a sufficiently small $\epsilon$ such that
\begin{align}
    (2|\Phi_i(2^{k-1}\tv_0)|+C_{k-1}(\epsilon))^3\le\frac{1}{2}.
    \label{eq:005-10-01-02a}
\end{align}
Now, if $C_k(\epsilon)\le C_{k-1}(\epsilon)$ then we have the desired result \eqref{eq:bern:052-03} with
\begin{align}
    \tilde C = \max_{0\le i\le k-1} C_i(1).
    \label{eq:C-bound}
\end{align}
But if $C_k(\epsilon)> C_{k-1}(\epsilon)$ then \eqref{eq:005-10-01-02} and \eqref{eq:005-10-01-02a} give the bound $C_{k-1}(\epsilon) < 10 \epsilon \le C_0(\epsilon)$. We thus again have \eqref{eq:C-bound}.

\subsection{Proof of Theorem~\ref{thm:entropies}}
\label{subsec:stability3-proof}
The fourth moments of $\Ym_{g,i}$ are bounded if the second moments of $\Ym_{g,i}$ are bounded \cite[p.~148]{papoulis2002probability}. Thus, Theorem~\ref{thm:unboundedstability}, Lemma~\ref{lemma:pointwise-cf-to-pdf}, and Lemma~\ref{lemma:pointwise-pdf-to-L1} with $\Ym_q=\Ym_{g,i}$ give \eqref{eq:bern:052-03-entropy-Q} with $B(\epsilon)=B_3(\tilde C \epsilon)$.

To prove \eqref{eq:bern:052-03-entropy}, we apply Lemma~\ref{lemma:stability-entropy} with (see~\eqref{eq:ac-density})
\begin{align}
    & \underset{\yv\in\mathbb R^d}{\text{ess sup}} \; p_{\Ym_i}(\yv) \le \phi_{\Zm_i}({\bf 0}) = \det\left(2\pi \Qm_{\Zm_i}\right)^{-1/2} \label{eq:appd-bound7a} \\
    & \underset{\yv\in\mathbb R^d}{\text{ess sup}} \; \phi_i(\yv) \le \det\left(2\pi \Qm_{\Zm_i}\right)^{-1/2} + B_1(\tilde C \epsilon)
    \label{eq:appd-bound7b}
\end{align}
and $\alpha=2$ for which \eqref{eq:bern:052-03-entropy-Q} gives
\begin{align}
    \E{\|\Ym_{g,i}\|_2^2} & \le \E{\|\Ym_i\|_2^2} + B_3(\tilde C \epsilon) \, d .
    \label{eq:appd-bound7d}
\end{align}
The random vectors with p.d.f.s $p_{\Ym_i}$ and $\phi_i$ are thus in the class $(2,\nu,m)-\mathcal{AC}^d$ of Lemma~\ref{lemma:stability-entropy} where for sufficiently small $\epsilon$ we may choose
\begin{align*}
    m = \det\left(2\pi \Qm_{\Zm_i}\right)^{-1/2} + 1, \quad
    \nu = \max_{i=1,2} \, \E{\|\Ym_i\|_2^2} + 1.
\end{align*}
Lemmas~\ref{lemma:pointwise-pdf-to-L1} and~\ref{lemma:stability-entropy} now give
\begin{align}
    & |h(p_{\Ym_i}) - h(\phi_i)| \nonumber \\
    & \le \| p_{\Ym_i} - \phi_i \| \cdot \big( c_1 - c_2 \log \| p_{\Ym_i} - \phi_i\| \big) \nonumber \\
    & \le B_2(\epsilon) \big( c_1 - c_2 \log B_2(\epsilon) \big) := B_4(\epsilon)\label{eq:B4-Definition}
\end{align}
where 
\begin{align}
    c_1 = \frac{d}{2}\left|\log \frac{4 \nu \pi e}{d}\right| + \log \frac{me^2}{2} +1, \quad
    c_2 = \frac{d}{2} + 2
\end{align}
and where we choose $B_2(\epsilon)$ large enough to be valid for both $i=1,2$. We thus have $B_4(\epsilon)\rightarrow0$ as $\epsilon\rightarrow0$ and $B_4(\epsilon)=0$ if $\epsilon=0$. Finally, we may choose $B(\epsilon)=\max(B_3(\tilde C \epsilon),B_4(\epsilon))$.

\subsection{Discussion}
\label{subsec:discussion}

\subsubsection{Extending Existing Results}
Theorems~\ref{thm:boundedstability} and~\ref{thm:unboundedstability} include Bernstein's Theorem as a special case, and they extend its stability from scalars to vectors. Note that~\cite{Klebanov1986} treats $d=1$ but does not prove a common variance for $X_1$ and $X_2$. Moreover, the scaling of Theorems~\ref{thm:boundedstability} and~\ref{thm:unboundedstability}  is proportional to $\epsilon$ which is the best possible in general; see~\cite{Klebanov1986} and~\eqref{eq:uniform-metric4}.

\subsubsection{Common Covariance Matrix}
We could have chosen different covariance matrices $\widehat \Qm_i$ for $i=1,2$ in the proof and statement of Theorem~\ref{thm:boundedstability}. The bound \eqref{eq:bern:010-01a} then ensures that $\widehat \Qm_1$ is close to $\widehat \Qm_2$. In contrast, choosing $\widehat \Qm_1=\widehat \Qm_2$ requires reducing $\epsilon$ to maintain the same approximation precision. One advantage of the former approach is that one can treat each covariance matrix separately, e.g., one can select the $\widehat \Qm_i$ so that $\tv^T \widehat \Qm_i \tv=0$ if $\tv\in\set S_i^\perp$ without requiring $\set S_1=\set S_2$.

\subsubsection{Potential Pitfalls}
The proof of Theorem~\ref{thm:unboundedstability} had two potential problematic cases permitted by perturbing Cauchy's functional equation:
\begin{itemize}
    \item $|f_i(\tv)|=1$ but $\Phi_i$ in Theorem~\ref{thm:boundedstability} has $\tv^T \widehat \Qm \tv>0$;
    \item $|f_i(\tv)|<1$ but $\Phi_i$ in Theorem~\ref{thm:boundedstability} has $\tv^T \widehat \Qm \tv=0$.
\end{itemize}
These two cases could have invalidated Theorem~\ref{thm:unboundedstability} because, in either case, $f_i$ is not stable with respect to the chosen $\Phi_i$. Fortunately, however, Lemma~\ref{lemma:c.f.upperbound} lets one resolve both cases. The first case must correspond to a degenerate distribution for which one can choose $\widehat \Qm$ to have zero eigenvalues in the subspace orthogonal to the non-degenerate hyperplane. The second case can be avoided because Lemma~\ref{lemma:c.f.upperbound} lets one choose a sufficiently small $\epsilon$ for which $\tv^T \widehat \Qm \tv>0$.

\subsubsection{$\delta$-Dependent $\Xm_1$ and $\Xm_2$}
Theorems~\ref{thm:boundedstability}-\ref{thm:entropies} generalize to $\Xm_1$ and $\Xm_2$ that are $\delta$-dependent for $\delta>0$ by replacing $\epsilon$ with $\epsilon+4\delta$. We state this formally as an extension of Theorem~\ref{thm:unboundedstability}.

\begin{theorem}\label{thm:unboundedstability-general}
Suppose $\Xm_1$ and $\Xm_2$ are $\delta$-dependent random vectors, and $\Xm_1+\Xm_2$ and $\Xm_1-\Xm_2$ are $\epsilon$-dependent.
Then for all $\epsilon+4\delta$ below some positive threshold, for all $\tv \in {\mathbb R}^d$, and for $i=1,2$ we have
\begin{align}
    \left|f_i(\tv)-\Phi_i(\tv)\right| \le \tilde C (\epsilon+4\delta) \label{eq:bern:052-03-g}
\end{align}
for the Gaussian c.f.s \eqref{eq:cf-Gauss-i}, and
for a constant $\tilde C$ independent of $\epsilon+4\delta$ and $\tv$. In particular, if $\delta=0$ then we recover Theorem~\ref{thm:unboundedstability}, and if $\epsilon=\delta=0$ then $\Xm_1$ and $\Xm_2$ are Gaussian with the same covariance matrix.
\end{theorem}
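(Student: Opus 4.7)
The plan is to reduce Theorem~\ref{thm:unboundedstability-general} to Theorem~\ref{thm:unboundedstability} by showing that the relaxation from independence to $\delta$-dependence of $\Xm_1,\Xm_2$ only degrades the effective error in the fundamental identity \eqref{eq:bern:004} from $\epsilon$ to $\epsilon+4\delta$. Once this is established, every subsequent step in the proofs of Theorems~\ref{thm:boundedstability} and~\ref{thm:unboundedstability} goes through verbatim with the substitution $\epsilon\leftarrow\epsilon+4\delta$.

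Concretely, I would start from the two identities \eqref{eq:004-a} and \eqref{eq:004-b}, which hold for arbitrary (possibly dependent) $\Xm_1,\Xm_2$. Define the defect $s(\tv_1,\tv_2):=f_{\Xm_1,\Xm_2}(\tv_1,\tv_2)-f_1(\tv_1)f_2(\tv_2)$; by hypothesis $|s|\le\delta$ everywhere. Substituting
\begin{align*}
f_{\Xm_1,\Xm_2}(\tv_1+\tv_2,\tv_1-\tv_2) &= f_1(\tv_1+\tv_2)f_2(\tv_1-\tv_2)+s(\tv_1+\tv_2,\tv_1-\tv_2),\\
f_{\Xm_1,\Xm_2}(\tv_1,\tv_1) &= f_1(\tv_1)f_2(\tv_1)+s(\tv_1,\tv_1),\\
f_{\Xm_1,\Xm_2}(\tv_2,-\tv_2) &= f_1(\tv_2)f_2(-\tv_2)+s(\tv_2,-\tv_2),
\end{align*}
into the $\epsilon$-dependence bound for $(\Xm_1+\Xm_2,\Xm_1-\Xm_2)$ and expanding the product, the cross terms are bounded in modulus by $\delta$ each (using $|f_i|\le 1$), and the quadratic term by $\delta^2$. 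Hence
\begin{align*}
f_1(\tv_1+\tv_2)f_2(\tv_1-\tv_2) = f_1(\tv_1)f_1(\tv_2)f_2(\tv_1)f_2(-\tv_2)+\tilde r(\tv_1,\tv_2)
\end{align*}
with $|\tilde r(\tv_1,\tv_2)|\le \epsilon+3\delta+\delta^2$. Restricting to $\delta\le 1$ (which is implied by any positive threshold on $\epsilon+4\delta$), one has $\delta^2\le\delta$, so $|\tilde r|\le \epsilon+4\delta$.

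This is precisely \eqref{eq:bern:004}–\eqref{eq:bern:004-r} with $\epsilon$ replaced by $\epsilon+4\delta$. Since the independence hypothesis on $\Xm_1,\Xm_2$ was invoked only to obtain \eqref{eq:bern:004}, the entire derivation of Theorems~\ref{thm:boundedstability} and~\ref{thm:unboundedstability} then applies without change, yielding \eqref{eq:bern:052-03-g} with the same constant $\tilde C$ as in Theorem~\ref{thm:unboundedstability} and with the positivity threshold now required for $\epsilon+4\delta$ rather than for $\epsilon$ alone. The degenerate and near-degenerate cases handled via Lemma~\ref{lemma:c.f.upperbound} in Sec.~\ref{subsec:stability2-proof} are unaffected, because those arguments depend only on properties of the individual c.f.s $f_i$ and on smallness of the composite error. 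Specializing $\delta=0$ recovers Theorem~\ref{thm:unboundedstability}, and $\epsilon=\delta=0$ yields Bernstein's theorem exactly as before.

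The only mildly delicate point is the bookkeeping of the expanded product $(A+s_1)(B+s_2)$; one must confirm that all four of $A,B,s_1,s_2$ admit the bounds $|A|,|B|\le 1$ and $|s_1|,|s_2|\le\delta$ uniformly in $(\tv_1,\tv_2)$, which is immediate. There is no genuine obstacle: the additional $4\delta$ term is absorbed into the linear error budget that drives the rest of the argument, and the threshold condition ``$\epsilon+4\delta$ below some positive constant'' is simply the previous threshold shifted to the new effective error.
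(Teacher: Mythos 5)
Your proposal is correct and follows essentially the same route as the paper: both reduce to Theorem~\ref{thm:unboundedstability} by showing that relaxing independence to $\delta$-dependence of $\Xm_1,\Xm_2$ degrades the error in the fundamental identity \eqref{eq:bern:004} to at most $\epsilon+4\delta$, after which the proofs of Theorems~\ref{thm:boundedstability}--\ref{thm:unboundedstability} go through verbatim with the substitution $\epsilon \leftarrow \epsilon + 4\delta$. (A minor remark: the decomposition $BC-B'C' = (B-B')C + B'(C-C')$ avoids the quadratic term $\delta^2$ altogether and yields $\epsilon+3\delta$ directly, so the detour through $\delta\le 1$ is not strictly needed, but your bound is equally valid.)
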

\begin{proof}
Consider the proof of Theorem \ref{thm:boundedstability} and assume that $\Xm_1$ and $\Xm_2$ are $(\delta, 2T)$-dependent. For the three c.f.s on the right hand sides of \eqref{eq:004-a} and \eqref{eq:004-b} we have
\begin{align}
    \left| f_{\Xm_1,\Xm_2}(\tv_1+\tv_2,\tv_1-\tv_2) \qquad \qquad \right. \nonumber \\
    \quad \left. - f_{\Xm_1}(\tv_1+\tv_2)f_{\Xm_2}(\tv_1-\tv_2) \right| & \le \delta \label{eq:004-a-01}\\
    \left| f_{\Xm_1,\Xm_2}(\tv_1,\tv_1) - f_{\Xm_1}(\tv_1)f_{\Xm_2}(\tv_1)\right| & \le \delta \label{eq:004-b-01}\\
    \left| f_{\Xm_1,\Xm_2}(\tv_2,-\tv_2) - f_{\Xm_1}(\tv_2)f_{\Xm_2}(-\tv_2) \right| & \le \delta \label{eq:004-b-02}
\end{align}
for all $\|\tv_1\| \le T$, $\|\tv_2\| \le T$. Similar to \eqref{eq:bern:004} and \eqref{eq:bern:004-r}, by combining \eqref{eq:004-a}, \eqref{eq:004-b} with \eqref{eq:004-a-01}-\eqref{eq:004-b-02} we have
\begin{align}
    &f_1(\tv_1+\tv_2) f_2(\tv_1-\tv_2)\nonumber\\
    &= f_1(\tv_1)f_1(\tv_2) f_2(\tv_1)f_2(-\tv_2)+r_{\epsilon, \delta}(\tv_1, \tv_2)
        \label{eq:bern:004-general}
\end{align}
where
\begin{align}
    |r_{\epsilon, \delta}(\tv_1, \tv_2)| \le \epsilon+4\delta \text{ and }
    \|\tv_i\| \le T, \; i=1, 2.
    \label{eq:bern:004-r-general}
\end{align}
The remaining steps follow by replacing $\epsilon \leftarrow \epsilon+4\delta$ in the proofs of Theorems \ref{thm:boundedstability} and \ref{thm:unboundedstability}.
\end{proof}

\section{Soft Doubling for AGN Channels}
\label{sec:soft-doubling}

This section shows how to combine the stability of Bernstein's theorem with the doubling argument in~\cite{Geng-Nair-IT14} to obtain a soft doubling argument that does not require the existence of maximizers. We consider point-to-point channels, product channels, and broadcast channels with AGN that have applications to cellular wireless networks~\cite{Kim-Kramer-C22}.

\subsection{Point-to-Point Channels}
\label{sec:point2point}
An AGN channel has output
\begin{align}
    \Ym=\Xm+\Zm
    \label{eq:AGN-model}
\end{align}
where $\Xm, \Ym, \Zm \in \mathbb{R}^d$ and $\Zm \sim \Nc({\bf 0},\Qm_{\Zm})$ is independent of $\Xm$. Consider the optimization problem: 
\begin{align}
  V(\Qm) \coloneqq \sup_{\Xm:\,\E{\Xm\Xm^T}\preceq\Qm} I(\Xm;\Ym)
  \label{eq:001}.
\end{align}
We use Theorem~\ref{thm:entropies} and a soft doubling argument to prove the following known result.

\begin{proposition}
\label{prop:agn}
For the AGN channel \eqref{eq:AGN-model} we have
\begin{align}
    I(\Xm;\Ym) 
    & \le \frac{1}{2} \log\frac{\det \left( \Qm_{\Xm} + \Qm_{\Zm} \right)}{\det \Qm_{\Zm}}
    \label{eq:022-01}
\end{align}
with equality if $\Xm$ is Gaussian.
\end{proposition}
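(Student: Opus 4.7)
The plan is to upper-bound $V(\Qm_{\Xm}) := \sup I(\Xm;\Ym)$ (subject to the second-moment constraint) by a soft doubling argument; the matching lower bound is the direct Gaussian computation for $\Xm\sim\Nc(\zerov,\Qm_\Xm)$. Since $I(\Xm;\Ym)=I(\Xm-\mv_\Xm;\Ym-\mv_\Xm)$ and translating by $\mv_\Xm$ only tightens the second-moment constraint, I would reduce to the case $\E{\Xm}=\zerov$; I would also assume $\Qm_{\Zm}$ is non-degenerate, since otherwise the bound is vacuous.

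Set $I^\star := V(\Qm_\Xm)$ and pick a sequence of zero-mean inputs $\Xm^{(n)}$ with $\E{\Xm^{(n)}(\Xm^{(n)})^T}\preceq \Qm_\Xm$ and $I(\Xm^{(n)};\Ym^{(n)})\to I^\star$. Take two independent copies $(\Xm^{(n)}_1,\Ym^{(n)}_1)$ and $(\Xm^{(n)}_2,\Ym^{(n)}_2)$ of $(\Xm^{(n)},\Ym^{(n)})$, and form the rotated vectors $\Ym^{(n)}_\pm := (\Ym^{(n)}_1 \pm \Ym^{(n)}_2)/\sqrt{2}$, and similarly $\Xm^{(n)}_\pm$ and $\Zm_\pm$. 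Because $\Zm$ is Gaussian, $\Zm_+,\Zm_-$ are independent with covariance $\Qm_\Zm$, and $\Ym^{(n)}_\pm = \Xm^{(n)}_\pm + \Zm_\pm$. Since the orthogonal change of variables $(\Ym_1,\Ym_2)\mapsto(\Ym_+,\Ym_-)$ is measure-preserving and the two copies are independent, $h(\Ym^{(n)}_+,\Ym^{(n)}_-) = h(\Ym^{(n)}_1,\Ym^{(n)}_2) = 2h(\Ym^{(n)})$. Combining with $h(\Ym^{(n)}_\pm) = I(\Xm^{(n)}_\pm;\Ym^{(n)}_\pm) + h(\Zm)$ yields the key identity
\begin{align*}
    I(\Xm^{(n)}_+;\Ym^{(n)}_+) + I(\Xm^{(n)}_-;\Ym^{(n)}_-) = 2 I(\Xm^{(n)};\Ym^{(n)}) + I(\Ym^{(n)}_+;\Ym^{(n)}_-).
\end{align*}
In the zero-mean case, $\E{\Xm^{(n)}_\pm(\Xm^{(n)}_\pm)^T} = \E{\Xm^{(n)}(\Xm^{(n)})^T}\preceq \Qm_\Xm$, so each term on the left is bounded by $I^\star$; rearranging forces $I(\Ym^{(n)}_+;\Ym^{(n)}_-)\to 0$.

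I would then invoke Theorem~\ref{thm:entropies} applied to the iid pair $\Ym^{(n)}_1, \Ym^{(n)}_2$: the mutual-independence conditions hold by construction, and $\Ym^{(n)}_1 \pm \Ym^{(n)}_2 = \sqrt{2}\,\Ym^{(n)}_\pm$ are $\epsilon_n$-dependent with $\epsilon_n \le \sqrt{2\,I(\Ym^{(n)}_+;\Ym^{(n)}_-)}\to 0$ by Lemma~\ref{lemma:dt-to-mutualinfo}. The theorem supplies a Gaussian surrogate $\Ym^{(n)}_g$ satisfying $|h(\Ym^{(n)}) - h(\Ym^{(n)}_g)| \le B(\epsilon_n)$ and $\E{\Ym^{(n)}_g (\Ym^{(n)}_g)^T} \preceq \E{\Ym^{(n)}(\Ym^{(n)})^T} + B(\epsilon_n)\Id_d \preceq \Qm_\Xm + \Qm_\Zm + B(\epsilon_n)\Id_d$. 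Monotonicity and continuity of $\det$ on the positive-semi-definite cone then give $h(\Ym^{(n)}_g) \le \frac{1}{2}\log(2\pi e)^d\det(\Qm_\Xm+\Qm_\Zm) + o(1)$, and passing to the limit
\begin{align*}
    I^\star = \lim_n \bigl[h(\Ym^{(n)}) - h(\Zm)\bigr] \le \frac{1}{2}\log\frac{\det(\Qm_\Xm + \Qm_\Zm)}{\det \Qm_\Zm},
\end{align*}
which matches the Gaussian lower bound and proves \eqref{eq:022-01}.

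The step I expect to be the main obstacle is the clean bookkeeping around Theorem~\ref{thm:entropies}'s covariance conclusion: one must carry the perturbation $B(\epsilon_n)\Id_d$ through the $\log\det$ and confirm that the limsup of $h(\Ym^{(n)})-h(\Zm)$ is indeed dominated by the sharp Gaussian value, rather than getting stuck with a weaker trace-type bound. The remaining ingredients --- the rotational-invariance identity for differential entropy, the reduction to zero mean, and the characteristic-function-to-mutual-information bound from Lemma~\ref{lemma:dt-to-mutualinfo} --- are routine.
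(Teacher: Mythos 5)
Your proposal is correct and follows essentially the same soft-doubling route as the paper: reduce to zero mean, take two i.i.d.\ copies and rotate, derive the identity $I(\Xm_{+};\Ym_{+})+I(\Xm_{-};\Ym_{-})=2I(\Xm;\Ym)+I(\Ym_{+};\Ym_{-})$, conclude $I(\Ym_{+};\Ym_{-})\to 0$ because each left-hand term is bounded by the supremum, then pass through Lemma~\ref{lemma:dt-to-mutualinfo} and Theorem~\ref{thm:entropies} to extract a Gaussian surrogate with the right covariance bound. The only cosmetic differences are that you work with a maximizing sequence $\Xm^{(n)}$ rather than a single $\Xm$ at distance $\epsilon$ from the supremum, and you derive the key identity via $h(\Ym_+,\Ym_-)=h(\Ym_1,\Ym_2)$ instead of the paper's equivalent step through $I(\Xm_+\Xm_-;\Ym_+\Ym_-)$; both repackage the same argument.
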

\begin{proof}
Equality holds in \eqref{eq:022-01} if $\Xm$ is Gaussian, so it remains to prove the inequality. Note that we may assume $\E{\Xm}={\bf 0}$ because $I(\Xm;\Ym)$ does not depend on translation of $\Xm$.

Now consider $\Ym_1=\Xm_1+\Zm_1$ and $\Ym_2=\Xm_2+\Zm_2$, where $\Xm_1, \Xm_2 \sim P_{\Xm}$ and $\Zm_1, \Zm_2 \sim P_{\Zm}$ are mutually independent. Further, define the vectors
\begin{align}
    & \Xm_{+}:=\frac{1}{\sqrt{2}}(\Xm_1+\Xm_2), &&
    \Xm_{-}:=\frac{1}{\sqrt{2}}(\Xm_1-\Xm_2).\label{eq:010}
\end{align}
Note that $\E{\Xm_{+}\Xm_{+}^T}\preceq\Qm$ and $\E{\Xm_{-}\Xm_{-}^T}\preceq\Qm$.
Also, define 
\begin{align}
    & \Ym_{+} :=\frac{1}{\sqrt{2}}(\Ym_1+\Ym_2), && \Ym_{-}:=\frac{1}{\sqrt{2}}(\Ym_1-\Ym_2) \label{eq:010-01} \\
    &\Zm_{+} :=\frac{1}{\sqrt{2}}(\Zm_1+\Zm_2), && \Zm_{-}:=\frac{1}{\sqrt{2}}(\Zm_1-\Zm_2)\label{eq:010-02}
\end{align}
so that $\Ym_{+}=\Xm_{+}+\Zm_{+}$ and $\Ym_{-}=\Xm_{-}+\Zm_{-}$ where the noise vectors $\Zm_{+}, \Zm_{-} \sim P_{\Zm}$ are independent.

Now suppose $\epsilon>0$ and
\begin{align}
    I(\Xm;\Ym) = V(\Qm_{\Xm})-\epsilon.
    \label{eq:007} 
\end{align}
We then have
\begin{align}
2V(\Qm_{\Xm}) & = I(\Xm_1;\Ym_1)+I(\Xm_2;\Ym_2)+2\epsilon \nonumber \\
&= I(\Xm_1\Xm_2;\Ym_1\Ym_2)+2\epsilon \nonumber \\
& =I(\Xm_{+}\Xm_{-};\Ym_{+}\Ym_{-})+2\epsilon \nonumber \\
& \overset{(a)}{=} \underbrace{I(\Xm_{+};\Ym_{+})}_{\displaystyle \le V(\Qm_{\Xm})}+\underbrace{I(\Xm_{-};\Ym_{-})}_{\displaystyle \le V(\Qm_{\Xm})}-I(\Ym_{+};\Ym_{-})+2\epsilon \nonumber \\
& \le 2V\left( \Qm_{\Xm} \right) - I(\Ym_{+};\Ym_{-}) + 2\epsilon
\label{eq:015}
\end{align}
where step $(a)$ follows by
\begin{align*}
    p(\yv_{+},\yv_{-}|\xv_{+},\xv_{-}) = p(\yv_{+}|\xv_{+})\,p(\yv_{-}|\xv_{-})
\end{align*}
for all $\xv_{+},\xv_{-},\yv_{+},\yv_{-}$. Lemma \ref{lemma:dt-to-mutualinfo} and \eqref{eq:015} give
\begin{align}
    & \left| f_{\Ym_{+}, \Ym_{-}}(\tv_1, \tv_2) - f_{\Ym_{+}}(\tv_1) f_{\Ym_{-}}(\tv_2) \right| \nonumber \\
    & \quad \le \sqrt{2 I(\Ym_{+};\Ym_{-})} \le 2\sqrt{\epsilon}
    \label{eq:018}
\end{align}
so $\Ym_{+}$ and $\Ym_{-}$ are $(2\sqrt{\epsilon})$-dependent.

For sufficiently small $\epsilon$, Theorem \ref{thm:entropies} gives
\begin{align}
    I(\Xm;\Ym) & = h(\Ym_1)-h(\Zm) \nonumber \\
    & \le h(\Ym_{g,1}) - \frac{1}{2} \log \det (2 \pi \, \Qm_{\Zm}) + B\left( 2\sqrt{\epsilon} \right) \nonumber \\
    & = \frac{1}{2} \log\frac{\det \Qm_{\Ym_{g,1}}}{\det \Qm_{\Zm}} + B\left( 2\sqrt{\epsilon} \right)
    \label{eq:018a}
\end{align}
where for some mean vector $\mv_{g,1}$ we have
\begin{align}
    \Qm_{\Ym_{g,1}}
    & = \E{\Ym_{g,1} \Ym_{g,1}^T} - \mv_{g,1} \mv_{g,1}^T \nonumber \\
    & \preceq \E{\Ym_1 \Ym_1^T} + B\left( 2\sqrt{\epsilon} \right) \Id_d \nonumber \\
    & \overset{(a)}{=} \Qm_{\Xm} + \Qm_{\Zm} + B\left( 2\sqrt{\epsilon} \right) \Id_d
    \label{eq:018b}
\end{align}
and step $(a)$ follows by $\E{\Xm}={\bf 0}$. Moreover, the $\epsilon$ in~\eqref{eq:007} can be chosen as close to zero as desired because $V(\Qm_{\Xm})$ is a supremum. Finally, observe that if $I(\Xm;\Ym)\le J+\epsilon$ for all $\epsilon>0$ then $I(\Xm;\Ym) \le J$.
\end{proof}

Note that the proof of the inequality~\eqref{eq:022-01} does not require the existence of a maximizer. Also, a maximizer may not be unique. Proving existence and uniqueness is interesting but not crucial for the communications problem.

\subsection{Product Channels}
\label{sec:AGN-Product}
The proof of Proposition~\ref{prop:agn} uses an AGN product channel with two outputs $\Ym_1=\Xm_1+\Zm_1$ and $\Ym_2=\Xm_2+\Zm_2$
where $\Zm_1$ and $\Zm_2$ have the same covariance matrix. More generally, suppose the covariance matrices are different, i.e., consider the AGN product channel
\begin{align}
    \Ym_1 & = \Xm_1 + \Zm_1 \label{eq:product-channel-1} \\
    \Ym_2 & = \Xm_2 + \Zm_2 \label{eq:product-channel-2}
\end{align}
where $\Zm_1\sim \Nc({\bf 0},\Qm_{\Zm_1})$ and $\Zm_2\sim \Nc({\bf 0},\Qm_{\Zm_2})$ are non-degenerate and independent, and $(\Xm_1,\Xm_2)$ is independent of $(\Zm_1,\Zm_2)$. Since the noise is non-degenerate, this channel is equivalent to the AGN product channel considered in~\cite[Sec.~I.A]{Geng-Nair-IT14}, namely\footnote{We replace the $\Gm_{11},\Gm_{22}$ in~\cite[Sec.~I.A]{Geng-Nair-IT14} with $\Gm_1,\Gm_2$.}
\begin{align}
    \Ym_{11} & = \Gm_1 \Xm_1 + \Zm_{11} \label{eq:product-channel-1a} \\
    \Ym_{22} & = \Gm_2 \Xm_2 + \Zm_{22} \label{eq:product-channel-2a}
\end{align}
where $\Gm_{1}=\Qm_{\Zm_1}^{-1/2}$, $\Gm_{2}=\Qm_{\Zm_2}^{-1/2}$, the noise vectors $ \Zm_{11}, \Zm_{22} \sim \Nc({\bf 0},\Id_d)$ are independent, and $(\Xm_1,\Xm_2)$ is independent of $(\Zm_{11},\Zm_{22})$.

We need a statement similar to
Proposition~\cite[Prop.~2]{Geng-Nair-IT14} which states that $\Ym_{11},\Ym_{22}$ are independent if and only if $\Xm_1,\Xm_2$ are independent. This is important because $\Xm_1,\Xm_2$ are required to be independent to apply Bernstein's theorem. So the question is whether a similar result holds for $\epsilon$-dependence.

Unfortunately, this does not seem to work. Observe that $f_{\Gm\Xm}(\tv)=f_{\Xm}(\Gm^T\tv)$ and therefore
\begin{align}
    & \left| f_{\Ym_{11},\Ym_{22}}(\tv_1,\tv_2) - f_{\Ym_{11}}(\tv_1)f_{\Ym_{22}}(\tv_2) \right| \nonumber \\
    & = \left| f_{\Xm_1,\Xm_2}\left( \Gm_1^T\tv_1,\Gm_2^T\tv_2 \right) - f_{\Xm_1}\left( \Gm_1^T\tv_1 \right) 
    f_{\Xm_2}\left( \Gm_2^T\tv_2 \right) \right| \nonumber \\
    & \quad \cdot |\Phi_{\Zm_{11}}(\tv_{11})| \cdot |\Phi_{\Zm_{22}}(\tv_2)|
    \label{eq:XY-epsilon-dependence}
\end{align}
where $\Phi_{\Zm_{ii}}(\tv)=e^{-\frac{1}{2} \|\tv\|^2}\le 1$ for $i=1,2$. Thus, if $\Xm_1,\Xm_2$ are $\epsilon$-dependent then $\Ym_{11},\Ym_{22}$ are $\epsilon$-dependent. However, the converse statement, namely that if $\Ym_{11},\Ym_{22}$ are $\epsilon$-dependent then $\Xm_1,\Xm_2$ are $\epsilon$-dependent, is not valid in general.\footnote{The robust $\epsilon$-dependence in Appendix~\ref{appendix:robust} does apply in both directions: Lemma~\ref{lemma:robustly-epsilon-dependent} states that $\Ym_{11},\Ym_{22}$ are robustly  $\epsilon$-dependent if and only if $\Xm_1,\Xm_2$ are robustly $\epsilon$-dependent.}

We therefore take a different approach. Consider the noise $\Zm_1'\sim \Nc({\bf 0},\Qm_{\Zm_1})$ and $\Zm_2'\sim \Nc({\bf 0},\Qm_{\Zm_2})$ and suppose $\Xm_1,\Xm_2,\Zm_1,\Zm_2,\Zm_1',\Zm_2'$ are mutually independent. Define the physically degraded channels
\begin{align}
    \widetilde \Ym_1 &
    = \Ym_1 + \Zm_2'
    = \Xm_1 + \widetilde \Zm_1 \label{eq:product-channel-1b} \\
    \widetilde \Ym_2 &
    = \Ym_2 + \Zm_1'
    = \Xm_2 + \widetilde \Zm_2 \label{eq:product-channel-2b}
\end{align}
where $\widetilde \Zm_1 := \Zm_1 + \Zm_2'$ and $\widetilde \Zm_2 := \Zm_2 + \Zm_1'$ are independent and have the same covariance matrix $\Qm_{12}=\Qm_{\Zm_1}+\Qm_{\Zm_2}$. Observe that $\widetilde \Ym_1,\widetilde \Ym_2$ are independent.

Next, consider $\Xm_{+},\Xm_{-}$ as in~\eqref{eq:010} and define
\begin{align}
    & \Ym_{1,+} = \Xm_{+} + \Zm_{1}, \quad
    && \Ym_{2,-} = \Xm_{-} + \Zm_{2} \\
    & \Ym_{11,+} = \Gm_1 \Xm_{+} + \Zm_{11},
    &&
    \Ym_{22,-} = \Gm_2 \Xm_{-} + \Zm_{22} \\
    & \widetilde \Ym_{+} = \frac{1}{\sqrt{2}} (\widetilde \Ym_1 + \widetilde \Ym_2),
    && \widetilde \Ym_{-} = \frac{1}{\sqrt{2}} (\widetilde \Ym_1 - \widetilde \Ym_2) .
    \label{eq:tilde-Y-plus}
\end{align}
We have
\begin{align}
    I(\Ym_{11,+} ; \Ym_{22,-} )
    \overset{(a)}{=} I(\Ym_{1,+} ; \Ym_{2,-})
    \overset{(b)}{\ge} I(\widetilde \Ym_{+}; \widetilde \Ym_{-} )
    \label{eq:I-bc-bound}
\end{align}
where step $(a)$ follows because $\Ym_{11,+}=\Gm_1 \Ym_{1,+}$ and $\Ym_{22,-}=\Gm_2 \Ym_{2,-}$ for invertible $\Gm_1,\Gm_2$, and step $(b)$ follows because $\widetilde \Ym_{+},\widetilde \Ym_{-}$ are degraded versions of $\Ym_{1,+},\Ym_{2,-}$. 
Note that one cannot write the distributions of $\Ym_{11,+}$ and $\Ym_{22,-}$ as the distributions of the respective $\frac{1}{\sqrt{2}}(\Ym_{11}+\Ym_{22})$ and $\frac{1}{\sqrt{2}}(\Ym_{11}-\Ym_{22})$ since $\Gm_1$ and $\Gm_2$ are different in general. This is why we introduced $\widetilde \Ym_1,\widetilde \Ym_2$ for which we can write the forms \eqref{eq:tilde-Y-plus}. 

We now use the $\epsilon$-dependence of $\Ym_{11,+},\Ym_{22,-}$ to show that $\Xm_1,\Xm_2$ are approximately Gaussian without considering $\Xm_{+},\Xm_{-}$ directly. We refine the proof of Theorem~\ref{thm:boundedstability} to obtain modified versions of Theorems~\ref{thm:unboundedstability} and~\ref{thm:entropies} for the general AGN product channel of interest. 

\begin{theorem}\label{thm:unboundedstability-2}
Consider the AGN product channel~\eqref{eq:product-channel-1a}-\eqref{eq:product-channel-2a} and suppose $I(\Ym_{11,+} ; \Ym_{22,-})\le\epsilon$.
Then for all $\epsilon$ below some positive threshold, for all $\tv \in {\mathbb R}^d$, and for $i=1,2$ we have
\begin{align}
    \left|f_{\Xm_i}(\tv)-\Phi_i(\tv)\right| \le \tilde C \sqrt{2\epsilon} \label{eq:bern:052-03-2}
\end{align}
for the Gaussian c.f.s \eqref{eq:cf-Gauss-i}, and for a constant $\tilde C$ independent of $\epsilon$ and $\tv$.
\end{theorem}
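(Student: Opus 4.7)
The proof should mirror Theorems~\ref{thm:boundedstability} and~\ref{thm:unboundedstability}, accounting for the Gaussian noise $\Zm_{11},\Zm_{22}$ that perturbs the near-independence of the sum/difference vectors $\Xm_+,\Xm_-$ defined in~\eqref{eq:010}. As in the doubling construction of Proposition~\ref{prop:agn}, I treat $\Xm_1,\Xm_2$ as independent. First, by Lemma~\ref{lemma:dt-to-mutualinfo} and Pinsker's inequality, the hypothesis gives
\[
\bigl|f_{\Ym_{11,+},\Ym_{22,-}}(\tv_1,\tv_2)-f_{\Ym_{11,+}}(\tv_1)f_{\Ym_{22,-}}(\tv_2)\bigr|\le\sqrt{2\epsilon}
\]
uniformly in $\tv_1,\tv_2$. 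Factoring out the Gaussian noise c.f.s as in~\eqref{eq:XY-epsilon-dependence} and setting $\sv_i=\Gm_i^T\tv_i$ yields
\[
\bigl|f_{\Xm_+,\Xm_-}(\sv_1,\sv_2)-f_{\Xm_+}(\sv_1)f_{\Xm_-}(\sv_2)\bigr|\le\sqrt{2\epsilon}\,e^{(\sv_1^T\Qm_{\Zm_1}\sv_1+\sv_2^T\Qm_{\Zm_2}\sv_2)/2},
\]
so $\Xm_+,\Xm_-$ are $(\epsilon'(T),T)$-dependent with $\epsilon'(T)=\sqrt{2\epsilon}\,e^{\lambda T^2}$ for a constant $\lambda>0$ depending only on the noise covariances.

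Second, using the independence of $\Xm_1,\Xm_2$, I would rerun the functional-equation manipulations~\eqref{eq:bern:005}--\eqref{eq:bern:049} of Sec.~\ref{subsec:stability1-proof} with the uniform $\epsilon$ replaced by the $T$-dependent $\epsilon'(T)$. This produces Gaussian c.f.s $\Phi_1,\Phi_2$ sharing a common covariance $\widehat\Qm$ and the local approximation $|f_{\Xm_i}(\tv)-\Phi_i(\tv)|\le C\bigl(\epsilon'(T)\bigr)\,|\Phi_i(\tv)|$ for $\|\tv\|\le T/2$ and $i=1,2$.

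Third, to cover all $\tv\in\mathbb R^d$ I would invoke the doubling recursion of Sec.~\ref{subsec:stability2-proof}. Applying Lemma~\ref{lemma:klebanov1} in the present setting gives
\[
|r_{\epsilon,i}(2\tv)|\le\bigl(2|\Phi_i(\tv)|+|r_{\epsilon,i}(\tv)|\bigr)^3|r_{\epsilon,i}(\tv)|+5\sqrt{2\epsilon}\,e^{\tv^T\Qm_{12}\tv},
\]
where $\Qm_{12}=\Qm_{\Zm_1}+\Qm_{\Zm_2}$. Iterating along $\tv_k=2^k\tv_0$ for a suitably chosen $\tv_0$, the cubic factor $|\Phi_i(\tv_k)|^3$ decays doubly exponentially as $e^{-3\cdot 4^k\tv_0^T\widehat\Qm\tv_0/2}$; Lemma~\ref{lemma:c.f.upperbound} guarantees $\widehat\Qm$ is positive definite on the non-degenerate subspace of $\Xm_i$; and the trivial bound $|r_{\epsilon,i}|\le 2$ caps the residual once the additive term would otherwise exceed it. Together these ingredients stabilize the recursion and deliver the uniform bound $\tilde C\sqrt{2\epsilon}$ claimed in~\eqref{eq:bern:052-03-2}.

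The main obstacle, absent from Theorem~\ref{thm:unboundedstability}, is that the additive term $5\sqrt{2\epsilon}\,e^{\tv^T\Qm_{12}\tv}$ in the doubling recursion is no longer a constant $5\epsilon$ but blows up exponentially in $\|\tv\|$. Making the recursion converge requires carefully balancing the decay rate of $|\Phi_i|^3$ (controlled by $\widehat\Qm$) against the growth rate of this noise-induced error, and handling separately the degenerate case where $\Xm_i$ concentrates on a lower-dimensional subspace, as in Sec.~\ref{subsec:stability2-proof}.
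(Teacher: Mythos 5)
Your proposal takes a genuinely different route from the paper's, and the route has a gap that the paper itself calls out.

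The paper explicitly discusses (right after~\eqref{eq:XY-epsilon-dependence}) why your first step fails: dividing out the noise c.f.s gives ``if $\Xm_1,\Xm_2$ are $\epsilon$-dependent then $\Ym_{11},\Ym_{22}$ are $\epsilon$-dependent,'' but the converse ``is not valid in general.'' Your computation is algebraically correct — you do get $(\epsilon'(T),T)$-dependence of $\Xm_+,\Xm_-$ with $\epsilon'(T)=\sqrt{2\epsilon}\,e^{\lambda T^2}$ — but this is exactly the degraded information the paper says is not enough. The issue is then fatal in the recursion: at step $k$ the additive term becomes $5\sqrt{2\epsilon}\,e^{\lambda 4^{k-1}T_0^2}$, and this term alone diverges to infinity as $k\to\infty$, independent of how fast the cubic factor $(2|\Phi_i(2^{k-1}\tv_0)|)^3$ decays at the next step. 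There is no ``balancing'' that rescues $C_k$: the additive contribution enters fresh at every step and is already super-exponentially large. Capping by the trivial bound $|r_{\epsilon,i}|\le 2$ also does not help, since the theorem asserts a bound $\tilde C\sqrt{2\epsilon}\ll 2$. So your recursion stabilizes to the useless bound $2$, not to $\tilde C\sqrt{2\epsilon}$.

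The paper's proof uses a device you did not find: it introduces the physically degraded outputs $\widetilde\Ym_i=\Xm_i+\widetilde\Zm_i$ in~\eqref{eq:product-channel-1b}--\eqref{eq:product-channel-2b}, where the added noise has a \emph{common} covariance $\Qm_{12}=\Qm_{\Zm_1}+\Qm_{\Zm_2}$ for both $i=1,2$. Because the noise covariance is the same, one may form the sum/difference $\widetilde\Ym_\pm$ and deduce from~\eqref{eq:I-bc-bound} that $I(\widetilde\Ym_+;\widetilde\Ym_-)\le\epsilon$, i.e.\ $\widetilde\Ym_+,\widetilde\Ym_-$ are $\sqrt{2\epsilon}$-dependent \emph{uniformly in $\tv$} — no $T$-dependence, no exponential factor. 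The entire stability machinery (Theorems~\ref{thm:boundedstability} and~\ref{thm:unboundedstability}) then applies to $\widetilde\Ym_1,\widetilde\Ym_2$ with the constant error $\sqrt{2\epsilon}$. Finally, the transfer back to $\Xm_i$ happens at the level of the second c.f., using the exact additive relation $g_{\widetilde\Ym_i}(\tv)=g_{\Xm_i}(\tv)-\frac12\tv^T\Qm_{12}\tv$: this shifts the quadratic form from $\widetilde\Qm_R$ to $\widetilde\Qm_R-\Qm_{12}$ (which the paper then re-perturbs to a positive semi-definite $\widehat\Qm_R$), while leaving the additive error term $R^{(6)}_{\sqrt{2\epsilon}}$ completely unchanged. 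This is why there is no blowup. Your proposal never moves to the log-c.f.\ domain for the transfer, and it never constructs $\widetilde\Ym_i$; without those two ingredients the exponential factor cannot be avoided.
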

\begin{proof}
The bound~\eqref{eq:I-bc-bound} gives $I(\widetilde \Ym_{+} ; \widetilde \Ym_{-}) \le \epsilon$ and Lemma~\ref{lemma:dt-to-mutualinfo} implies that $\widetilde \Ym_{+}$ and $\widetilde \Ym_{-}$ are $\sqrt{2\epsilon}$-dependent. We now apply the steps of the proof of Theorem~\ref{thm:boundedstability} with $\widetilde \Ym_1,\widetilde \Ym_2$ replacing $\Xm_1,\Xm_2$. Moreover, observe that $g_{\widetilde \Ym_i}(\tv)=g_{\Xm_i}(\tv)-\frac{1}{2}\tv^{\mathrm{T}}\Qm_{12}\tv$ so that \eqref{eq:bern:045} becomes
\begin{align}
    g_{\Xm_1}(\tv)& = j\tv^{\mathrm{T}}\widehat\mv_{1,I}-\frac{1}{2}\tv^{\mathrm{T}}\left(\widetilde\Qm_R - \Qm_{12} \right)\tv + R^{(6)}_{\sqrt{2\epsilon}}(\tv).
    \label{eq:bern:045-modified}
\end{align}
Following the same steps starting with \eqref{eq:bern:046-cov}, we may write \eqref{eq:bern:048} as
\begin{align}
    g_{\Xm_1}(\tv)=j\tv^{\mathrm{T}}\widehat\mv_{1,I}-\frac{1}{2}\tv^{\mathrm{T}}\widehat\Qm_R\tv+R^{(7)}_{\sqrt{2\epsilon}}(\tv)
    \label{eq:bern:048-modified}
\end{align}
where $\widehat\Qm_R\succeq{\bf 0}$ is a covariance matrix that is perturbed version of $\widetilde\Qm_R - \Qm_{12}$ rather than $\widetilde\Qm_R$. Continuing as for the proofs of Theorems~\ref{thm:boundedstability} and~\ref{thm:unboundedstability}, we obtain~\eqref{eq:bern:052-03-2}.
\end{proof}

Observe that the $\Xm_1,\Xm_2$ in Theorem~\ref{thm:unboundedstability-2} do not necessarily have densities. However, the following result shows that once the $\Xm_1,\Xm_2$ are characterized as approximately Gaussian with the same covariance matrix $\widehat \Qm$, then the $\Ym_{11},\Ym_{22}$ are approximately Gaussian with covariance matrices based on $\widehat \Qm$. Moreover, both $\Ym_{11}$ and $\Ym_{22}$ have densities.

\begin{corollary} \label{corr:unboundedstability-2}
Under the conditions of Theorem~\ref{thm:unboundedstability-2}, we have
\begin{align}
    \left|f_{\Ym_{ii}}(\tv)-\widetilde \Phi_i(\tv)\right| \le \tilde C \sqrt{2\epsilon} \label{eq:bern:052-03-3}
\end{align}
for all $\tv \in {\mathbb R}^d$,
for Gaussian c.f.s $\widetilde \Phi_i(\tv)$ with covariance matrices $\Gm_i \widehat \Qm \Gm_i^T + \Id_d$, $i=1,2$, and for the $\tilde C$ in \eqref{eq:bern:052-03-2}.
\end{corollary}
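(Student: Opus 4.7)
The plan is to reduce the statement directly to Theorem~\ref{thm:unboundedstability-2} by exploiting how characteristic functions behave under invertible linear transformations and the addition of independent Gaussian noise. First, I would note the identity $f_{\Gm_i \Xm_i}(\tv) = f_{\Xm_i}(\Gm_i^T \tv)$ together with independence of $\Xm_i$ and $\Zm_{ii}$ to write
\begin{align}
    f_{\Ym_{ii}}(\tv) = f_{\Xm_i}(\Gm_i^T \tv) \, e^{-\frac{1}{2}\|\tv\|_2^2}, \quad i=1,2.
\end{align}
The same algebra applied to the Gaussian c.f.\ $\Phi_i(\tv) = e^{j \tv^T \widehat \mv_i - \frac{1}{2}\tv^T \widehat \Qm \tv}$ yields
\begin{align}
    \widetilde \Phi_i(\tv) := \Phi_i(\Gm_i^T \tv) \, e^{-\frac{1}{2}\|\tv\|_2^2}
    = e^{j \tv^T \Gm_i \widehat \mv_i - \frac{1}{2}\tv^T (\Gm_i \widehat \Qm \Gm_i^T + \Id_d)\tv},
\end{align}
which indeed has the covariance matrix $\Gm_i \widehat \Qm \Gm_i^T + \Id_d$ claimed in the corollary.

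Second, I would apply Theorem~\ref{thm:unboundedstability-2} at the point $\Gm_i^T \tv$ to conclude
\begin{align}
    \left| f_{\Xm_i}(\Gm_i^T \tv) - \Phi_i(\Gm_i^T \tv) \right| \le \tilde C \sqrt{2\epsilon}
\end{align}
uniformly in $\tv \in \mathbb R^d$ (the bound in Theorem~\ref{thm:unboundedstability-2} holds for all arguments, so the $\Gm_i^T$ substitution is harmless). Factoring out the common Gaussian weight and using $e^{-\frac{1}{2}\|\tv\|_2^2} \le 1$ gives
\begin{align}
    \left|f_{\Ym_{ii}}(\tv)-\widetilde \Phi_i(\tv)\right|
    & = \left| f_{\Xm_i}(\Gm_i^T \tv) - \Phi_i(\Gm_i^T \tv) \right| \cdot e^{-\frac{1}{2}\|\tv\|_2^2} \nonumber \\
    & \le \tilde C \sqrt{2\epsilon},
\end{align}
which is \eqref{eq:bern:052-03-3}.

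For the side remark that $\Ym_{ii}$ has a density, I would observe that $f_{\Ym_{ii}}$ inherits the Gaussian factor $e^{-\frac{1}{2}\|\tv\|_2^2}$ from $\Zm_{ii}$, hence $f_{\Ym_{ii}} \in L^1(\mathbb R^d)$, and Fourier inversion produces a bounded continuous density. There is no real obstacle here: the invertibility of $\Gm_i$ (which holds since $\Qm_{\Zm_i}$ is non-degenerate) ensures the substitution $\tv \mapsto \Gm_i^T \tv$ is a bijection on $\mathbb R^d$, so the uniform bound of Theorem~\ref{thm:unboundedstability-2} transfers without loss. The only thing to be careful about is that the constant $\tilde C$ is the one already supplied by Theorem~\ref{thm:unboundedstability-2}, and the threshold on $\epsilon$ is likewise inherited unchanged.
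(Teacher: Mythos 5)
Your proof is correct and follows essentially the same route as the paper's: factor $f_{\Ym_{ii}}(\tv)=f_{\Xm_i}(\Gm_i^T\tv)\Phi_{\Zm_{ii}}(\tv)$, define $\widetilde\Phi_i(\tv)=\Phi_i(\Gm_i^T\tv)\Phi_{\Zm_{ii}}(\tv)$, use the uniform bound from Theorem~\ref{thm:unboundedstability-2} under the bijective substitution $\tv\mapsto\Gm_i^T\tv$, and bound the common Gaussian factor by $1$. The closing remark about integrability of $f_{\Ym_{ii}}$ and existence of a density is a harmless extra that the paper does not spell out in this corollary.
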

\begin{proof}
Using \eqref{eq:bern:052-03-2}, we have
\begin{align}
    & \left|f_{\Ym_{ii}}(\tv)-\Phi_i\left(\Gm_i^T\tv\right) \Phi_{\Zm_{ii}}(\tv) \right| \nonumber \\
    & = \underbrace{\left|f_{\Xm_i}\left(\Gm_i^T\tv\right)-\Phi_i\left(\Gm_i^T\tv\right)\right|}_{\displaystyle \le \tilde C \sqrt{2\epsilon}} \cdot \underbrace{|\Phi_{\Zm_i}(\tv) |}_{\displaystyle \le 1}
\end{align}
where for some mean vectors $\widehat \mv_i$ we have (see~\eqref{eq:cf-Gauss-i})
\begin{align}
    \Phi_i\left(\Gm_i^T\tv\right) \Phi_{\Zm_{ii}}(\tv) = e^{\tv^T \Gm_i (j \widehat \mv_i - \frac{1}{2}\widehat \Qm \, \Gm_i^T \tv)} \cdot e^{- \frac{1}{2} \| \tv \|^2}.
    \label{eq:Gauss-Yi}
\end{align}
The covariance matrix of \eqref{eq:Gauss-Yi} is $\Gm_i \widehat \Qm \Gm_i^T + \Id_d$.
\end{proof}

\begin{theorem}\label{thm:entropies-2}
Consider the AGN product channel~\eqref{eq:product-channel-1a}-\eqref{eq:product-channel-2a} and suppose $\Xm_1,\Xm_2$ have finite second moments and $I(\Ym_{11,+} ; \Ym_{22,-})\le\epsilon$.
Then for all $\epsilon$ below some positive threshold and for $i=1,2$ and $\epsilon'=\sqrt{2\epsilon}$ we have
\begin{align}
    \left|h(\Ym_{ii})-h(\Ym_{g,i})\right| \le B\left( \epsilon' \right) \label{eq:bern:052-03-entropy-2}
\end{align}
where $\Ym_{g,i} \sim \Nc\Big(\mv_i,\Gm_i \widehat \Qm \Gm_i^T + \Id_d\Big)$ and
\begin{align}
    \E{\Ym_{g,i} \Ym_{g,i}^T}
    \preceq \E{\Ym_{ii} \Ym_{ii}^T} + B\left( \epsilon' \right) \Id_d
    \label{eq:bern:052-03-entropy-Q-Y}
\end{align}
where $B(\epsilon')\rightarrow 0$ as $\epsilon'\rightarrow 0$ and $B(\epsilon')=0$ if $\epsilon'=0$.
\end{theorem}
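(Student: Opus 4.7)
The plan is to follow the blueprint of the proof of Theorem~\ref{thm:entropies} almost verbatim, with Corollary~\ref{corr:unboundedstability-2} playing the role of Theorem~\ref{thm:unboundedstability}. The crucial structural observation is that the product-channel outputs $\Ym_{ii} = \Gm_i \Xm_i + \Zm_{ii}$ are already ``pre-smoothed'' by independent Gaussian noise $\Zm_{ii} \sim \Nc({\bf 0},\Id_d)$, so no auxiliary noise need be added to activate Lemma~\ref{lemma:pointwise-cf-to-pdf}. Corollary~\ref{corr:unboundedstability-2} supplies the uniform pointwise c.f. estimate $|f_{\Ym_{ii}}(\tv) - \widetilde\Phi_i(\tv)| \le \tilde C \sqrt{2\epsilon}$ for all $\tv \in \mathbb R^d$, where $\widetilde\Phi_i$ is the c.f. of $\Ym_{g,i}$ with covariance $\Gm_i\widehat\Qm\Gm_i^T + \Id_d$. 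Since the smallest eigenvalue of $\Id_d$ is $1$, the hypothesis $\tilde C\sqrt{2\epsilon} < 1 - e^{-1/2}$ of Lemma~\ref{lemma:pointwise-cf-to-pdf} holds for all sufficiently small $\epsilon$, and the lemma converts the c.f. bound into a pointwise p.d.f.\ bound $|p_{\Ym_{ii}}(\yv) - \phi_{g,i}(\yv)| \le B_1(\tilde C\sqrt{2\epsilon})$.

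Next, to establish the covariance inequality \eqref{eq:bern:052-03-entropy-Q-Y}, I would apply Lemma~\ref{lemma:pointwise-pdf-to-L1} with $\Ym_p = \Ym_{ii}$ and $\Ym_q = \Ym_{g,i}$. The finite-fourth-moment hypothesis on $\Ym_{g,i}$ is automatic because $\Ym_{g,i}$ is Gaussian with the fixed covariance $\Gm_i\widehat\Qm\Gm_i^T + \Id_d$, so all of its moments are finite. The same lemma simultaneously delivers the $\ell_1$ bound $\|p_{\Ym_{ii}} - \phi_{g,i}\| \le B_2(\tilde C\sqrt{2\epsilon})$, which is the input needed for the entropy step.

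Finally, to obtain \eqref{eq:bern:052-03-entropy-2}, I would apply Lemma~\ref{lemma:stability-entropy} with $\alpha=2$. The essential suprema of both densities are bounded by $(2\pi)^{-d/2} + B_1(\tilde C\sqrt{2\epsilon})$ (the analogue of \eqref{eq:appd-bound7a}--\eqref{eq:appd-bound7b} with $\Qm_{\Zm_i}$ replaced by $\Id_d$), and the second moments are bounded by $\E{\|\Ym_{ii}\|_2^2} + B_3(\tilde C\sqrt{2\epsilon})\, d$, so both $p_{\Ym_{ii}}$ and $\phi_{g,i}$ lie in a common class $(2,\nu,m)-\mathcal{AC}^d$ for sufficiently small $\epsilon$. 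Setting $B(\epsilon')$ equal to the maximum over $i=1,2$ of the three error terms $B_2(\tilde C \epsilon')$, $B_3(\tilde C\epsilon')$, and the entropy bound $B_2(\tilde C\epsilon')(c_1 - c_2 \log B_2(\tilde C\epsilon'))$ produces a function with $B(\epsilon')\to 0$ as $\epsilon'\to 0$ and $B(0)=0$, as required.

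The proof is essentially a routine adaptation; the only subtlety I anticipate is bookkeeping, namely verifying that the parameters $m$ and $\nu$ in Lemma~\ref{lemma:stability-entropy} can be chosen uniformly in $i$ below a common threshold. Because $\Zm_{11}$ and $\Zm_{22}$ share the identity covariance and $\Xm_1,\Xm_2$ both have finite second moments by hypothesis, this uniformity is automatic and presents no real obstacle.
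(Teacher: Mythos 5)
Your proposal is correct and follows the same route the paper takes: the paper's proof is essentially a one-liner instructing the reader to repeat the steps of Theorem~\ref{thm:entropies} using \eqref{eq:bern:052-03-3} (i.e., Corollary~\ref{corr:unboundedstability-2}) in place of Theorem~\ref{thm:unboundedstability}, and your write-up is exactly that adaptation spelled out. Your observation that $\Ym_{ii}$ is already smoothed by $\Zm_{ii}\sim\Nc(\zerov,\Id_d)$, so $\lambda_{\Zm,\mathrm{min}}=1$ and Lemma~\ref{lemma:pointwise-cf-to-pdf} applies directly with $m$ bounded by $(2\pi)^{-d/2}+B_1$, matches the intended argument, and your handling of the fourth-moment hypothesis for the Gaussian $\Ym_{g,i}$ and of the uniform choice of $(m,\nu)$ across $i=1,2$ is likewise what the paper's compressed proof presumes.
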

\begin{proof}
Use the same steps as for the proof of Theorem~\ref{thm:entropies} in Sec.~\ref{subsec:stability3-proof} but for \eqref{eq:bern:052-03-3}. For example, Lemmas~\ref{lemma:pointwise-cf-to-pdf} and~\ref{lemma:pointwise-pdf-to-L1} give \eqref{eq:bern:052-03-entropy-Q-Y} with $B(\epsilon')=B_3(\tilde C \epsilon')$.

\end{proof}

\subsection{Broadcast Channels}
\label{sec:broadcast}

The two-receiver AGN broadcast channel has
\begin{align}
    \Ym_1 & = \Gm_1 \Xm + \Zm_1 \label{eq:bc-1} \\
    \Ym_2 & = \Gm_2 \Xm + \Zm_2 \label{eq:bc-2}
\end{align}
where $\Gm_1,\Gm_2$ are invertible, and $\Zm_1, \Zm_2 \sim \Nc({\bf 0},\Id_d)$ are independent. Note that the input $\Xm$ is common to both sub-channels and $\Zm_1, \Zm_2$ have the same covariance matrix. Define the expressions (see~\cite{Geng-Nair-IT14})
\begin{align}
    s_\lambda(\Xm) & := I(\Xm;\Ym_1) - \lambda I(\Xm;\Ym_2) \\
    s_\lambda(\Xm|\Vm) & := I(\Xm;\Ym_1|\Vm) - \lambda I(\Xm;\Ym_2|\Vm) \\
    S_\lambda(\Xm) & := \sup_{p(\vv|\xv):\,\Vm - \Xm - \Ym_1\Ym_2} s_\lambda(\Xm|\Vm) \\
    V_\lambda(\Qm) & := \sup_{ \Xm:\,\E{\Xm\Xm^T}\preceq\Qm} S_\lambda(\Xm)
    \label{eq:VlQ}
\end{align}
where $S_\lambda(\Xm)$ is the upper concave envelope of $s_\lambda(\Xm)$ as a function of $p(\xv)$. We will need the following results concerning $s_\lambda(\Xm)$ and $s_\lambda(\Xm|\Vm)$.

\begin{lemma} \label{lemma:s_lambda-bound}
If $\lambda\ge1$ then
\begin{align}
    s_{\lambda}(\Xm) \le \frac{1}{2} \log \frac{\det\left(\Qm_1' + \Qm_2' \right)}{\det \Qm_1'} \label{eq:s1}
\end{align}
where $\Qm_k'=(\Gm_k^T \Gm_k)^{-1}$ for $k=1,2$.
\end{lemma}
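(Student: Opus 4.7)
The plan is to reduce to $\lambda = 1$ and then establish the $\lambda=1$ case by a chain-rule manipulation in a whitened equivalent channel. Since $I(\Xm;\Ym_2) \geq 0$, for every $\lambda \geq 1$ we have $-\lambda I(\Xm;\Ym_2) \leq -I(\Xm;\Ym_2)$, so $s_\lambda(\Xm) \leq s_1(\Xm)$; it thus suffices to prove the bound with $\lambda = 1$. To handle the two different $\Gm_k$ matrices I would introduce whitened equivalent outputs $\Wm_k := \Gm_k^{-1}\Ym_k = \Xm + \Nm_k$ with $\Nm_k := \Gm_k^{-1}\Zm_k \sim \Nc({\bf 0},\Qm_k')$. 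Since $\Gm_k$ is invertible, $I(\Xm;\Ym_k) = I(\Xm;\Wm_k)$, and since $\Zm_1,\Zm_2$ are independent so are $\Nm_1,\Nm_2$. Writing out $s_1(\Xm)$ then splits into a pair of differences,
\begin{align}
s_1(\Xm) = \bigl[h(\Wm_1) - h(\Wm_2)\bigr] + \bigl[h(\Nm_2) - h(\Nm_1)\bigr],
\end{align}
whose second bracket is the explicit $\tfrac{1}{2}\log(\det\Qm_2'/\det\Qm_1')$.

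The main step is to bound $h(\Wm_1) - h(\Wm_2)$. Applying the chain rule two ways to $h(\Wm_1,\Wm_2)$ yields
\begin{align}
h(\Wm_1) - h(\Wm_2) = h(\Wm_1\mid\Wm_2) - h(\Wm_2\mid\Wm_1).
\end{align}
For the first term I would use translation invariance and that conditioning reduces entropy,
\begin{align}
h(\Wm_1\mid\Wm_2) = h(\Wm_1-\Wm_2\mid\Wm_2) \le h(\Wm_1-\Wm_2),
\end{align}
and observe that $\Wm_1 - \Wm_2 = \Nm_1 - \Nm_2 \sim \Nc({\bf 0},\Qm_1'+\Qm_2')$, giving the closed form $\tfrac{1}{2}\log\det\bigl(2\pi e(\Qm_1'+\Qm_2')\bigr)$. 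For the second term I would condition further on $\Xm$ and exploit independence,
\begin{align}
h(\Wm_2\mid\Wm_1) \ge h(\Wm_2\mid\Wm_1,\Xm) = h(\Nm_2\mid\Wm_1,\Xm) = h(\Nm_2),
\end{align}
where the final equality uses that $\Nm_2$ is independent of $(\Xm,\Nm_1)$. Subtracting yields $h(\Wm_1) - h(\Wm_2) \le \tfrac{1}{2}\log\bigl(\det(\Qm_1'+\Qm_2')/\det\Qm_2'\bigr)$.

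Combining the two brackets, the factors of $\det\Qm_2'$ telescope and leave exactly the target $\tfrac{1}{2}\log(\det(\Qm_1'+\Qm_2')/\det\Qm_1')$. No appeal to stability or to the existence of an optimal input is needed here; the argument is a self-contained Gaussian extremality step. The only point that requires a line of care is the independence chain justifying $h(\Nm_2\mid\Wm_1,\Xm) = h(\Nm_2)$, which follows directly from independence of $\Zm_1,\Zm_2$ and the definition of $\Wm_1 = \Xm + \Nm_1$; everything else is routine chain-rule bookkeeping and a Gaussian maximum-entropy bound.
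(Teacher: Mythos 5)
Your proof is correct, and it arrives at the same bound by a genuinely different decomposition. Both you and the paper first drop from $\lambda\ge 1$ to $\lambda=1$ using $I(\Xm;\Ym_2)\ge 0$ and then whiten to the equivalent model $\Wm_k=\Xm+\Nm_k$ with $\Nm_k\sim\Nc(\zerov,\Qm_k')$. From there the paths diverge. The paper's route makes $\Wm_1$ noisier, using $h(\Wm_1)\le h(\Xm+\Nm_1+\Nm_2)$, which turns the entropy difference $h(\Xm+\Nm_1+\Nm_2)-h(\Xm+\Nm_2)$ into $I(\Xm+\Nm_1+\Nm_2;\Nm_1)$, and then applies data processing along the Markov chain $\Nm_1 - (\Nm_1+\Nm_2) - (\Xm+\Nm_1+\Nm_2)$ to eliminate $\Xm$. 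Your route instead keeps $(\Wm_1,\Wm_2)$ together, applies the chain-rule antisymmetry $h(\Wm_1)-h(\Wm_2)=h(\Wm_1\mid\Wm_2)-h(\Wm_2\mid\Wm_1)$, and then bounds the two conditional entropies separately: the upper bound $h(\Wm_1\mid\Wm_2)\le h(\Wm_1-\Wm_2)=h(\Nm_1-\Nm_2)$ uses translation invariance and ``conditioning reduces entropy,'' and the lower bound $h(\Wm_2\mid\Wm_1)\ge h(\Wm_2\mid\Wm_1,\Xm)=h(\Nm_2)$ uses the independence of $\Nm_2$ from $(\Xm,\Nm_1)$. Both arguments are elementary and avoid any optimal-input existence or stability machinery; neither is more general than the other. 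Your version perhaps makes the underlying mechanism a bit more transparent --- the channel input $\Xm$ drops out via the two conditioning steps rather than via a data-processing invocation --- but the paper's version is slightly shorter because it never forms the joint entropy $h(\Wm_1,\Wm_2)$. Both proofs implicitly assume the relevant differential entropies are finite, which holds under the finite-second-moment conditions in force throughout the section.
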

\begin{proof}
Let $\Zm_k'=\Gm_k^{-1}\Zm_k$ so that $\Qm_k'=(\Gm_k^T \Gm_k)^{-1}$ is the covariance matrix of $\Zm_k'$ for $k=1,2$. We have
\begin{align}
    s_\lambda(\Xm) 
    & \le I(\Xm ; \Xm + \Zm_1') - I(\Xm ; \Xm + \Zm_2') \nonumber \\
    & = \left[h(\Xm + \Zm_1') - h(\Xm + \Zm_2')\right] - \frac{1}{2} \log \frac{\det\Qm_1'}{\det \Qm_2'} \nonumber \\
    & \le \underbrace{\left[h(\Xm + \Zm_1' + \Zm_2') - h(\Xm + \Zm_2')\right]}_{\displaystyle = I(\Xm + \Zm_1' + \Zm_2' ; \Zm_1')} - \frac{1}{2} \log \frac{\det\Qm_1'}{\det \Qm_2'} \nonumber \\
    & \le I(\Zm_1' + \Zm_2' ; \Zm_1') - \frac{1}{2} \log \frac{\det\Qm_1'}{\det \Qm_2'}
    \label{eq:s1a}
\end{align}
and evaluating \eqref{eq:s1a} gives \eqref{eq:s1}.
\end{proof}

\begin{lemma} \label{lemma:Caratheodory-bound}
For every pair $(\Vm,\Xm)$ there is a pair $(\Vm',\Xm')$ with $s_\lambda(\Xm'|\Vm')=s_\lambda(\Xm|\Vm)$ and $\E{\Xm' (\Xm')^T}=\E{\Xm \Xm^T}$, and where $\Vm'$ has alphabet of cardinality $d(d+1)/2+1$.
\end{lemma}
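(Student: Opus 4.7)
The plan is to invoke the Fenchel-Eggleston-Carath\'eodory support lemma. The idea is to view both $s_\lambda(\Xm|\Vm)$ and the $d(d+1)/2$ distinct entries of $\E{\Xm\Xm^T}$ as $P_{\Vm}$-averages of continuous real-valued functionals of the conditional law $P_{\Xm|\Vm=\vv}$, and then to replace $P_{\Vm}$ by a discrete measure with few atoms while preserving all of these averages.

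First, I would define, for each $\vv$, a feature vector $\phi(\vv)\in\mathbb{R}^{d(d+1)/2+1}$ whose first $d(d+1)/2$ coordinates are the entries $\E{X_iX_j\mid\Vm=\vv}$ for $1\le i\le j\le d$, and whose final coordinate is $\sigma(\vv) := I(\Xm;\Ym_1\mid\Vm=\vv)-\lambda\,I(\Xm;\Ym_2\mid\Vm=\vv)$. Both $\E{\Xm\Xm^T}$ and $s_\lambda(\Xm|\Vm)$ are then $P_{\Vm}$-averages of the corresponding coordinates of $\phi$, so the target vector $\bigl(\E{\Xm\Xm^T},\,s_\lambda(\Xm|\Vm)\bigr)$ lies in the convex hull of $\phi(\Vc)$.

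Next, I would verify continuity and connectedness. The moment coordinates are linear functionals of $P_{\Xm|\Vm=\vv}$ and are continuous under a uniform second-moment bound. For $\sigma(\vv)$, the non-degenerate Gaussian noise $\Zm_k\sim\Nc(\zerov,\Id_d)$ in \eqref{eq:bc-1}-\eqref{eq:bc-2} smooths any input to a bounded output p.d.f., placing $\Ym_k\mid\Vm=\vv$ in a class $(\alpha,\nu,m)\text{-}\mathcal{AC}^d$ on which Lemma~\ref{lemma:stability-entropy} yields continuity of $h(\Ym_k|\Vm=\vv)$, and hence of $\sigma(\vv)=h(\Ym_1|\Vm=\vv)-\lambda h(\Ym_2|\Vm=\vv)-h(\Zm_1)+\lambda h(\Zm_2)$. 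Since the set of admissible input distributions (those with second moment at most $\trace\Qm$) is convex, hence path-connected by linear interpolation, the image $\phi(\Vc)$ is contained in a connected subset of $\mathbb{R}^{d(d+1)/2+1}$.

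Finally, I would apply the Fenchel-Eggleston strengthening of Carath\'eodory's theorem: any point in the convex hull of a connected subset of $\mathbb{R}^n$ is a convex combination of at most $n$ of its elements. With $n=d(d+1)/2+1$, this produces a probability measure $P_{\Vm'}$ supported on at most $d(d+1)/2+1$ points whose $\phi$-average matches that of $P_{\Vm}$. Setting $P_{\Xm'|\Vm'=\vv'}$ equal to the $P_{\Xm|\Vm=\vv}$ attached to each retained support point yields the desired $(\Vm',\Xm')$. The main technical obstacle is the continuity of $\sigma(\vv)$: mutual information need not depend continuously on the input distribution in general, but the broadcast-channel's Gaussian smoothing together with the covariance constraint keep us on a class where Lemma~\ref{lemma:stability-entropy} applies; this is what makes the sharper Fenchel-Eggleston bound $d(d+1)/2+1$ attainable rather than the plain Carath\'eodory bound $d(d+1)/2+2$.
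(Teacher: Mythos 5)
Your proposal is correct and takes essentially the same approach as the paper: the paper also defines the $d(d+1)/2$ second-moment functionals together with $s_\lambda(P_{\Xm})$ as $D=d(d+1)/2+1$ continuous real-valued functions on $\Pc(\mathbb R^d)$ and applies the support lemma (Fenchel--Eggleston--Carath\'eodory, \cite[Lemma 15.4]{csiszar2011information}) to obtain a $\Vm'$ with alphabet size $D$ preserving all $D$ averages. Your write-up spells out the continuity/connectedness prerequisites (via Lemma~\ref{lemma:stability-entropy} and convexity of the input set) that the paper leaves implicit by citing \cite[pp.~2099--2100]{Geng-Nair-IT14}, but the argument is the same.
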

\begin{proof}
The result follows by the support lemma in \cite[Lemma 15.4]{csiszar2011information} that we rephrase with our notation (see also~\cite[pp.~2099-2100]{Geng-Nair-IT14}). Let $\Pc(\mathbb R^d)$ be the family of distributions on $\mathbb R^d$ and consider the following $D=d(d+1)/2+1$ real-valued continuous functions on $\Pc(\mathbb R^d)$:
\begin{align}
    f_{k\ell}(P_{\Xm}) & = \E{X_k X_\ell}, \;\;
    1\le k\le d,\; 1\le \ell\le k \label{eq:Caratheodory-fkl} \\
    s_\lambda(P_{\Xm}) & = s_\lambda(\Xm)
\end{align}
where $P_{\Xm}\in\Pc(\mathbb R^d)$ and where we have abused notation by writing $s_\lambda(P_{\Xm})$ with argument $P_{\Xm}$ rather than $\Xm$. Then for any probability distribution $P_{\Vm}$ on the Borel $\sigma$-algebra of $\Pc(\mathbb R^d)$ there are $D$ distributions $P_{\Xm(i)}$, $i=1,\dots,D$, in $\Pc(\mathbb R^d)$ and a random variable $\Vm'$ with alphabet $\{1,\dots,D\}$ such that
\begin{align}
    \E{X_k X_\ell} & = \int_{\Pc(\mathbb R^d)} f_{k\ell}(P) \, P_{\Vm}(dP) \nonumber \\
    & = \sum_{i=1}^D P_{\Vm'}(i)\, f_{k\ell}(P_{\Xm(i)})
    \label{eq:Caratheodory1}
\end{align}
for $1\le k\le d$, $1\le \ell\le k$ and
\begin{align}
    s_\lambda(\Xm|\Vm) & = \int_{\Pc(\mathbb R^d)} s_\lambda(P) \, P_{\Vm}(dP) \nonumber \\
    & = \sum_{i=1}^D P_{\Vm'}(i)\, s_\lambda(P_{\Xm(i)}).
    \label{eq:Caratheodory2}
\end{align}
The right-hand side of \eqref{eq:Caratheodory2} is $s_\lambda(\Xm'|\Vm')$ where the distribution of $\Xm'$ conditioned on the event $\{\Vm'=i\}$ is $P_{\Xm(i)}$.
\end{proof}

We now re-prove a key result from~\cite{Liu-Viswanath-IT07}, which states that a Gaussian $\Xm$ is optimal for the problem \eqref{eq:VlQ} and one does not require $\Vm$. This theorem was also re-proved in~\cite[Thm.~1]{Geng-Nair-IT14} through a series of propositions. Our proof follows similar steps, but we do not require the existence of a maximizer, and we replace the independence result~\cite[Prop.~2]{Geng-Nair-IT14} with Theorems~\ref{thm:unboundedstability-2} and~\ref{thm:entropies-2}.

\begin{theorem}[See {\cite[Thm.~8]{Liu-Viswanath-IT07}}] \label{theorem:LV-IT07-Thm8}
 If $\lambda > 1$ then we have
 $V_\lambda(\Qm)=s_\lambda(\Xm_g)$ for some $\Xm_g \sim \Nc({\bf 0},\widehat \Qm)$ with $\widehat \Qm \preceq \Qm$.
\end{theorem}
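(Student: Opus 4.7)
The plan is to follow the Geng-Nair~\cite{Geng-Nair-IT14} doubling pipeline for the broadcast channel but apply it to a near-maximizer rather than an actual maximizer, and to replace their ``independence implies Gaussian'' step (Prop.~2 plus Bernstein) by our soft stability Theorems~\ref{thm:unboundedstability-2} and~\ref{thm:entropies-2}. Fix $\epsilon>0$ and use Lemma~\ref{lemma:Caratheodory-bound} to pick $(\Vm,\Xm)$ with $\E{\Xm\Xm^T}\preceq\Qm$, $|\Vm|\le d(d+1)/2+1$, and $s_\lambda(\Xm|\Vm)\ge V_\lambda(\Qm)-\epsilon$. Create two independent copies $(\Vm_i,\Xm_i,\Ym_{i,1},\Ym_{i,2})$, $i=1,2$, and form $\Vm'=(\Vm_1,\Vm_2)$, $\Xm_\pm=(\Xm_1\pm\Xm_2)/\sqrt{2}$, and $\Ym_{\pm,k}=\Gm_k\Xm_\pm+\Zm_{\pm,k}$, where $\Zm_{\pm,k}\sim\Nc(\zerov,\Id_d)$ are mutually independent.

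Starting from the invertible-transform identity $2s_\lambda(\Xm|\Vm)=I(\Xm_+\Xm_-;\Ym_{+,1}\Ym_{-,1}\mid\Vm')-\lambda\,I(\Xm_+\Xm_-;\Ym_{+,2}\Ym_{-,2}\mid\Vm')$, expand each joint mutual information by the chain rule and regroup following~\cite[Props.~3--4]{Geng-Nair-IT14} to obtain an identity of the schematic form
\[2s_\lambda(\Xm|\Vm) + J_\lambda \le 2\,V_\lambda(\Qm),\]
where, for $\lambda\ge 1$, the non-negative slack $J_\lambda$ is a conditional mutual information between a sum-channel output and a difference-channel output (with the upper bound using that $\E{\Xm_\pm\Xm_\pm^T}\preceq\Qm$ and that $(\Vm',\Xm_+)$ is a valid auxiliary for $\Xm_-$). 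Near-optimality of $(\Vm,\Xm)$ therefore gives $J_\lambda\le 2\epsilon$; the conditional form of Lemma~\ref{lemma:dt-to-mutualinfo} converts this into $O(\epsilon^{1/4})$-dependence of the corresponding outputs on a $\Vm'$-set of probability at least $1-\sqrt\epsilon$.

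Interpreting $(\Xm_1\mid\Vm_1{=}v_1,\,\Xm_2\mid\Vm_2{=}v_2)$ as the inputs of the product channel~\eqref{eq:product-channel-1a}--\eqref{eq:product-channel-2a}, Theorem~\ref{thm:unboundedstability-2} and Corollary~\ref{corr:unboundedstability-2} yield a \emph{common} Gaussian approximant for the conditional laws, and Theorem~\ref{thm:entropies-2} bounds the $\ell_1$, differential-entropy, and second-moment deviations by $B(\epsilon^{1/4})\to 0$. Because the common covariance must simultaneously match every admissible pair $(v_1,v_2)$ in the support of $\Vm'$, it collapses to a single $\widehat\Qm_\epsilon$ independent of $v'$ --- the common-covariance feature of our stability theorems, unavailable in~\cite{Klebanov1986,Gabovic1976}; see Sec.~\ref{subsec:discussion}.

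Since $s_\lambda$ evaluated on a Gaussian depends only on its covariance, $\Xm_g^{(\epsilon)}\sim\Nc(\zerov,\widehat\Qm_\epsilon)$ achieves $s_\lambda(\Xm_g^{(\epsilon)})\ge V_\lambda(\Qm)-B'(\epsilon)$ for some $B'(\epsilon)\to 0$ and, by~\eqref{eq:bern:052-03-entropy-Q-Y}, $\widehat\Qm_\epsilon\preceq\Qm+B(\epsilon^{1/4})\Id_d$. Compactness of $\{\Qm'\succeq{\bf 0}:\Qm'\preceq\Qm+\Id_d\}$ extracts a sub-sequential limit $\widehat\Qm_{\epsilon_n}\to\widehat\Qm\preceq\Qm$, and continuity of $s_\lambda$ on Gaussian covariances gives $V_\lambda(\Qm)=s_\lambda(\Xm_g)$ for $\Xm_g\sim\Nc(\zerov,\widehat\Qm)$. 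The hardest step will be the second paragraph: identifying the Geng-Nair chain-rule decomposition so that the slack $J_\lambda$ is simultaneously non-negative for $\lambda\ge 1$ and matches the hypothesis of Theorem~\ref{thm:unboundedstability-2}. A secondary technicality is the fiber-to-global promotion of the common covariance in the third paragraph, which leans critically on the novel common-covariance conclusion of our stability results.
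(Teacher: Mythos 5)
Your roadmap tracks the paper's proof faithfully at the top level — near-maximizer via Lemma~\ref{lemma:Caratheodory-bound}, doubling, identity~\eqref{eq:slm-2} with slack $(\lambda-1)I(\Ym_{1+};\Ym_{2-}\mid\Vm_{12})$, Theorems~\ref{thm:unboundedstability-2}/\ref{thm:entropies-2} on fibers of $\Vm_{12}$, and a limiting argument in $\epsilon$. But the middle of the argument has two genuine gaps. First, you leap from ``$I(\Ym_{1+};\Ym_{2-}\mid\Vm_{12})$ is small in expectation'' to ``the $\epsilon$-dependence hypothesis of Theorem~\ref{thm:unboundedstability-2} holds on a high-probability $\Vm'$-set'' without noticing that this set \emph{must be a product set} $\set{S}\times\set{S}$. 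The fibers you feed to Theorem~\ref{thm:unboundedstability-2} are pairs $(\Xm_1\mid\Vm_1{=}v_1,\ \Xm_2\mid\Vm_2{=}v_2)$, and the common-covariance conclusion you later need holds only over a rectangle of pairs $(v_1,v_2)$. A generic high-probability subset of the joint $\Vm_{12}$-alphabet is not a rectangle, so you cannot vary $v_1$ and $v_2$ independently. The paper handles this by first applying the Markov inequality to the conditional mutual information, then using the finite alphabet from Lemma~\ref{lemma:Caratheodory-bound} to construct $\set{S}=\{v:P(v)>\sqrt{2/(\gamma(\lambda-1))}\}$ and showing (a) $\set{S}\times\set{S}$ has probability close to $1$ and (b) every pair in $\set{S}\times\set{S}$ has pointwise $I(\vv_{12})<\gamma\epsilon$. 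Without this construction the application of Theorem~\ref{thm:unboundedstability-2} across fibers is unsupported.

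Second, even granting a rectangle of good fibers, your proposal never accounts for the contribution of the complement $(\set{S}\times\set{S})^c$ to the total $s_\lambda(\Xm_1\mid\Vm_1)$, which is not negligible a priori: for $\lambda>1$ each conditional term can be as large as the single-letter bound of Lemma~\ref{lemma:s_lambda-bound}, and one must multiply that uniform upper bound by $\Pr{(\set{S}\times\set{S})^c}$ and then drive the whole thing to zero with $\epsilon$. This step is essential to close the chain of inequalities~\eqref{eq:s-lambda-bound-2}; omitting it leaves the bound open. You correctly flag the fiber-to-global promotion of the common covariance as a technicality — the paper's footnote resolves it by a two-stage sweep over $v_1$ and then $v_2$ within $\set{S}$, costing a fixed factor in $\epsilon$ — but it is a weaker gap than the two above, since its resolution is a routine (if somewhat delicate) application of Theorem~\ref{thm:boundedstability}. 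Finally, your phrasing ``for $\lambda\ge1$'' should be tightened to the hypothesis $\lambda>1$: at $\lambda=1$ the slack $(\lambda-1)I(\Ym_{1+};\Ym_{2-}\mid\Vm_{12})$ vanishes identically and gives no control on the conditional mutual information, so the entire stability mechanism fails.
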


\begin{proof}
We may again assume $\E{\Xm}={\bf 0}$. Consider a product AGN broadcast channel with sub-channels $i=1,2$ for which the channel outputs are
\begin{align}
    \Ym_{1i} & = \Gm_1 \Xm_i + \Zm_{1i} \label{eq:product-bc-1} \\
    \Ym_{2i} & = \Gm_2 \Xm_i + \Zm_{2i} \label{eq:product-bc-2}
\end{align}
where $\Xm_1,\Xm_2 \sim \Nc({\bf 0},\Qm_{\Xm})$ and $\Zm_{11}, \Zm_{12}, \Zm_{21}, \Zm_{22} \sim \Nc({\bf 0},\Id_d)$ are mutually independent.
Define (see~\eqref{eq:015} and~\cite[p.~2091]{Geng-Nair-IT14})
\begin{align}
    s_\lambda(\Xm|\Vm) & = V_\lambda\left( \Qm \right) - \epsilon \label{eq:sl-close} \\
    (\Vm_i,\Xm_i) & \sim P_{\Vm,\Xm}, \quad i=1,2 \\
    \Vm_{12} & = (\Vm_1,\Vm_2)
\end{align}
where $(\Vm_1,\Xm_1)$ and $(\Vm_2,\Xm_2)$ are independent. Also consider the $\Xm_{+},\Xm_{-}$ in \eqref{eq:010} and define
\begin{align}
    & \Ym_{k+} :=\frac{1}{\sqrt{2}}(\Ym_{k1}+\Ym_{k2}), && \Ym_{k-}:=\frac{1}{\sqrt{2}}(\Ym_{k1}-\Ym_{k2}) \label{eq:010-02-bc} \\
    &\Zm_{k+} :=\frac{1}{\sqrt{2}}(\Zm_{k1}+\Zm_{k2}), && \Zm_{k-}:=\frac{1}{\sqrt{2}}(\Zm_{k1}-\Zm_{k2})\label{eq:010-03-bc}
\end{align}
for $k=1,2$ so that
\begin{align}
    & \Ym_{k+} = \Gm_k \Xm_{+}+\Zm_{k+},
    && \Ym_{k-} = \Gm_k \Xm_{-}+\Zm_{k-}
\end{align}
and the noise vectors $\Zm_{1+}, \Zm_{1-}, \Zm_{2+}, \Zm_{2-} \sim \Nc({\bf 0},\Id_d)$ are mutually independent. As a final definition, consider the expression
\begin{align}
    & s_\lambda(\Xm_\ell,\Xm_m|\Vm) \nonumber \\
    & := I(\Xm_\ell, \Xm_m;\Ym_{1\ell}, \Ym_{1m} | \Vm) - \lambda I(\Xm_\ell, \Xm_m;\Ym_{2\ell}, \Ym_{2m} | \Vm)
    \label{eq:slm}
\end{align}
for $(\ell,m)=(1,2)$ and $(\ell,m)=(+,-)$. We study cases where given $\Vm$ we have the Markov chain
\begin{align}
    (\Ym_{1\ell},\Ym_{2\ell}) - \Xm_\ell - \Xm_m - (\Ym_{1m},\Ym_{2m}).
\end{align}
We can thus expand \eqref{eq:slm} as (see~\cite[p.~2090]{Geng-Nair-IT14})
\begin{align}
    & I(\Xm_\ell ;\Ym_{1\ell} | \Vm, \Ym_{2m}) + I(\Xm_m;\Ym_{1m} | \Vm, \Ym_{1\ell}) \nonumber \\
    & - \lambda I(\Xm_\ell;\Ym_{2\ell} | \Vm, \Ym_{2m}) - \lambda I(\Xm_m; \Ym_{2m} | \Vm, \Ym_{1\ell}) \nonumber \\
    & \quad - (\lambda-1) I(\Ym_{1\ell} ; \Ym_{2m} | \Vm).
    \label{eq:slm-2}
\end{align}
With the above definitions, we have
\begin{align}
& 2V_\lambda(\Qm) = s_\lambda(\Xm_1|\Vm_1)+s_\lambda(\Xm_2|\Vm_2)+2\epsilon \nonumber \\
& = s_\lambda(\Xm_1,\Xm_2 | \Vm_{12})+2\epsilon \nonumber \\
& = s_\lambda(\Xm_{+},\Xm_{-} | \Vm_{12})+2\epsilon \nonumber \\
& \overset{(a)}{=} s_\lambda(\Xm_{+}|\Vm_{12},\Ym_{2-})+s_\lambda(\Xm_{-}|\Vm_{12},\Ym_{1+}) \nonumber \\
& \qquad -(\lambda-1) I(\Ym_{1+};\Ym_{2-} | \Vm_{12})+2\epsilon \nonumber \\
& \le \underbrace{S_\lambda(\Xm_{+})}_{\displaystyle \le V_\lambda(\Qm)}+\underbrace{S_\lambda(\Xm_{-})}_{\displaystyle \le V_\lambda(\Qm)}-(\lambda-1)I(\Ym_{1+};\Ym_{2-} | \Vm_{12})+2\epsilon \nonumber \\
& \overset{(b)}{\le} 2V_\lambda(\Qm) - (\lambda-1) I(\Ym_{1+};\Ym_{2-} | \Vm_{12})+2\epsilon
\label{eq:bc:001}
\end{align}
where step $(a)$ follows from \eqref{eq:slm-2} and step $(b)$ follows by $\E{\Xm_{+}\Xm_{+}^T}\preceq\Qm$ and $\E{\Xm_{-}\Xm_{-}^T}\preceq\Qm$. We thus have $I(\Ym_{1+};\Ym_{2-} | \Vm_{12})\le 2\epsilon$.

Lemma~\ref{lemma:Caratheodory-bound} states that $\Vm$ can have a finite alphabet with $d(d+1)/2+1$ letters. We are now faced with the problem that $I(\Ym_{1+};\Ym_{2-} | \Vm_{12})$ is the expectation of $I(\vv_{12}):=I(\Ym_{1+};\Ym_{2-} | \Vm_{12}=\vv_{12})$ with respect to $P(\vv_{12})$, and some $I(\vv_{12})$ could be large. We wish to bound the probability of large $I(\vv_{12})$, so define the event $\set{E}:=\{I(\Vm_{12}) \ge \gamma\cdot\epsilon \}$. The Markov inequality gives
\begin{align}
    \Pr{\set{E}}
    \le \frac{\E{I(\Vm_{12})}}{\gamma \cdot \epsilon}
    \le \frac{2}{\gamma(\lambda-1)}
    \label{eq:PE-bound}
\end{align}
where the second step follows by \eqref{eq:bc:001}. We can choose $\gamma,\epsilon$ so that $\gamma$ is large, $\epsilon$ is very small and $\gamma\cdot\epsilon$ is small, e.g., we choose $\gamma=1/\sqrt{\epsilon}$.

Next, if $I(\vv_{12})$ is small then $\Xm_1$ and $\Xm_2$ are approximately Gaussian conditioned on $\Vm_{12}=\vv_{12}$. However, we wish to find a significant subset of such $\vv_{12}$ for which the covariance matrices of $\Xm_1,\Xm_2$ are the same, and for this we require the subset to have the form $\set{S}\times\set{S}$. Consider the following set of high-probability letters:
\begin{align}
    \set{S} = \left\{\vv:P(\vv) > \sqrt{\frac{2}{\gamma(\lambda-1)}} \right\}
\end{align}
where $\gamma$ is sufficiently large so that $\set{S}$ has at least one letter. The pairs $\vv_{12}=(\vv_1,\vv_2)$ in $\set{S}\times\set{S}$ thus have high probability:
\begin{align}
    \Pr{\Vm_{12} \in \set{S}\times\set{S}}
    & = \Pr{\Vm\in\set{S}}^2 \nonumber \\
    & \ge 1 - 2 \sum_{\vv\in\set{S}^c} P(\vv) \nonumber \\
    & \overset{(a)}{\ge} 1 - d(d+1) \sqrt{\frac{2}{\gamma(\lambda-1)}}
    \label{eq:SxS}
\end{align}
where step $(a)$ follows because there are at most $d(d+1)/2$ letters in $\set{S}^c$. Moreover, for all $\vv_{12}\in\set{S}\times\set{S}$ we have
\begin{align}
    P(\vv_{12}) > \frac{2}{\gamma(\lambda-1)}
    \overset{(a)}{\ge} \Pr{\set E}
     = \sum_{\vv_{12}':I(\vv_{12}') \ge \gamma\cdot\epsilon} P(\vv_{12}')\label{eq:SxS-01}
\end{align}
where $(a)$ follows by \eqref{eq:PE-bound}. We thus have $I(\vv_{12})<\gamma\cdot\epsilon$ so that $I(\vv_{12})$ is small, as desired.

So suppose $\vv_{12}\in\set{S}\times\set{S}$. By Theorem~\ref{thm:unboundedstability-2} the c.f.s of $\Xm_1$ (conditioned on $\Vm_1=\vv_1$) and $\Xm_2$ 
(conditioned on $\Vm_2=\vv_2$) are close to two Gaussian c.f.s with the same covariance matrix $\widehat \Qm$. Moreover, since $\vv_1$ and $\vv_2$ are arbitrary in $\set{S}$, we can choose the same $\widehat \Qm$ for all $\vv_{12}\in\set{S}\times\set{S}$, but in general we must decrease $\sqrt{\epsilon}$ for the same approximation precision.\footnote{Formally, this follows in two steps. First, fix $\epsilon$ and apply Theorem~\ref{thm:boundedstability} with the same $\widehat \Qm$ for a fixed $\vv_1\in\set{S}$ and for any $\vv_2\in\set{S}$. In a second step, apply Theorem~\ref{thm:boundedstability} with the same $\widehat \Qm$ for each fixed $\vv_2\in\set{S}$ and any $\vv_1\in\set{S}$. This second step requires reducing $\epsilon$ for the same approximation precision, for the same reason that the approximation \eqref{eq:bern:026a} is weaker than~\eqref{eq:bern:026}.} We abuse notation by continuing to use the same $\sqrt{\epsilon}$.
The bound \eqref{eq:bern:052-03-entropy-2} of Theorem~\ref{thm:entropies-2} gives
\begin{align}
    h(\Ym_{11}|\Vm_{12}=\vv_{12}) & \le 
    \frac{1}{2} \log \det \left( \Gm_1 \widehat \Qm \Gm_1^T + \Id_d \right) \nonumber \\
    & \quad + \frac{d}{2} \log(2\pi e) + B(\epsilon') 
    \label{eq:hY11-bound} \\
    h(\Ym_{21}|\Vm_{12}=\vv_{12}) & \ge 
    \frac{1}{2} \log \det \left( \Gm_2 \widehat \Qm \Gm_2^T + \Id_d \right) \nonumber \\
    & \quad + \frac{d}{2} \log(2\pi e) - B(\epsilon')
    \label{eq:hY21-bound} 
\end{align}
where we now must choose $\epsilon'=2\epsilon^{1/4}$. 

By~\eqref{eq:bern:052-03-entropy-Q-Y} with $\E{\Xm}={\bf 0}$ we have
\begin{align}
    & \Gm_k \widehat \Qm \Gm_k^T + \Id_d + \widehat \mv_k \widehat \mv_k^T \nonumber \\
    & \preceq \E{\Ym_{k1} \Ym_{k1}^T | \Vm_{12}=\vv_{12}} + B(\epsilon') \Id_d \nonumber \\
    & = \Gm_k \E{\Xm_1 \Xm_1^T | \Vm_{12}=\vv_{12}} \Gm_k^T + \left(1+B(\epsilon')\right) \, \Id_d
    \label{eq:bc-Y-Q-bound}
\end{align}
for $k=1,2$. The bounds \eqref{eq:hY11-bound}-\eqref{eq:hY21-bound} give
\begin{align}
    & s_\lambda(\Xm_1|\Vm_{12}=\vv_{12})
     \nonumber \\
    & = h(\Ym_{11}|\Vm_{12}=\vv_{12}) - \lambda h(\Ym_{21}|\Vm_{12}=\vv_{12}) \nonumber \\
    & \quad - (d/2) (1-\lambda) \log(2 \pi e) \nonumber \\
    & \le s_\lambda(\Xm_g) + B(\epsilon')(1+\lambda)
    \label{eq:s-lambda-v12-bound}
\end{align}
where $\Xm_g \sim \Nc\Big(\widehat \mv,\widehat \Qm\Big)$ and thus
\begin{align}
    s_\lambda(\Xm_g)
    & := \frac{1}{2} \log \det \left( \Gm_1 \widehat \Qm \Gm_1^T + \Id_d \right) \nonumber \\
    & \quad -\frac{\lambda}{2} \log \det \left( \Gm_2 \widehat \Qm \Gm_2^T + \Id_d \right).
\end{align}
Moreover, we can re-write \eqref{eq:bc-Y-Q-bound} as
\begin{align}
    \widehat \Qm
    \preceq \E{\Xm_1 \Xm_1^T | \Vm_{12}=\vv_{12}} + B(\epsilon') \left(\Gm_k^{T} \Gm_k\right)^{-1}.
    \label{eq:bc-Y-Q-bound-2}
\end{align}
Taking an expectation w.r.t. $\Vm_{12}=\vv_{12}$ and applying the constraint $\E{\Xm_1 \Xm_1^T} \preceq \Qm$ we have 
\begin{align}
    \widehat \Qm
    & \preceq \E{\Xm_1 \Xm_1^T}
    + B(\epsilon') \left(\Gm_k^{T} \Gm_k\right)^{-1} \nonumber \\
    & \preceq \Qm
    + B(\epsilon') \left(\Gm_k^{T} \Gm_k\right)^{-1}.
    \label{eq:widehat-Qm-bound}
\end{align}

Combining the above results, we have
\begin{align}
    & s_\lambda(\Xm_1|\Vm_1) 
    \overset{(a)}{=} s_\lambda(\Xm_1|\Vm_{12}) \nonumber \\
    & \overset{(b)}{\le} \sum_{\vv_{12}\in\set{S}\times\set{S}} P(\vv_{12}) s_\lambda(\Xm_1|\Vm_{12}=\vv_{12}) \nonumber \\
    & \qquad + \sum_{\vv_{12}\in(\set{S}\times\set{S})^c} P(\vv_{12}) \cdot \frac{1}{2} \log \frac{\det\left(\Qm_1' + \Qm_2' \right)}{\det \Qm_1'} \nonumber \\
    & \overset{(c)}{\le} 
    s_\lambda(\Xm_g)
    + B(\epsilon')(1+\lambda) \nonumber \\
    & \qquad +  d(d+1) \sqrt{\frac{2}{\gamma(\lambda-1)}} \cdot \frac{1}{2} \log \frac{\det\left(\Qm_1' + \Qm_2' \right)}{\det \Qm_1'}
    \label{eq:s-lambda-bound-2}
\end{align}
where step $(a)$ follows because $(\Xm_1,\Vm_1)$ and $\Vm_2$ are independent; step $(b)$ follows by Lemma~\ref{lemma:s_lambda-bound}; and step $(c)$ follows by~\eqref{eq:SxS} and~\eqref{eq:s-lambda-v12-bound}. Since \eqref{eq:widehat-Qm-bound}-\eqref{eq:s-lambda-bound-2} are valid for any $\epsilon>0$, we find that
\begin{align}
    V_\lambda(\Qm)
    \le \max_{\widehat \Qm \preceq \Qm} \; & s_\lambda(\Xm_g)
    \label{eq:V-lambda-bound}
\end{align}
which is the desired result.
\end{proof}

Theorem~\ref{theorem:LV-IT07-Thm8} can be used to show that Gaussian signaling is optimal for two-receiver broadcast channels with dedicated (also called private) messages for each receiver; see~\cite[Sec.~IV.A]{Liu-Viswanath-IT07} and~\cite[Sec.~III.A]{Geng-Nair-IT14}. Moreover, the method described above extends to two-receiver broadcast channels with a common message since the proof of Theorem 2 in~\cite{Geng-Nair-IT14} again relies on bounding a term of the form $I(\Ym_{1+};\Ym_{2-} | \Vm_{12})$.

We remark that~\cite[Sec.~II.B]{Geng-Nair-IT14} and Theorem~\ref{theorem:LV-IT07-Thm8} treat the case $\lambda>1$ while~\cite[Thm.~8]{Liu-Viswanath-IT07} includes $\lambda=1$. However, as pointed out in~\cite[Remark~9]{Geng-Nair-IT14}, the case $\lambda=1$ can be treated by showing that a capacity function is convex and bounded while the case $\lambda<1$ can be treated by reversing the roles of $\Ym_1$ and $\Ym_2$.

\section{Conclusions}
\label{sec:conclusions}

The stability of Bernstein's characterization of Gaussian distributions was extended to vectors. Refined stability results were derived for vectors with AGN. The theory was used to develop a soft doubling argument that establishes the optimality of Gaussian vectors for point-to-point and product channels with AGN, and for a classic extremal inequality.

We conclude with a few remarks.
\begin{itemize}
    \item It seems that soft doubling can replace hard doubling in general. However, whether soft doubling can provide new inequalities and capacity theorems that hard doubling cannot remains to be seen. For example, if a communications model has a strict cost constraint such as $\E{\|\Xm\|^2}<P$, then one can turn to stability (as in Theorems~\ref{thm:unboundedstability},~\ref{thm:entropies},~\ref{thm:unboundedstability-2},~\ref{thm:entropies-2}) rather than, e.g., relaxing the constraint to $\E{\|\Xm\|^2}\le P$, proving the existence of a maximizer (if possible) and applying Theorem~\ref{thm:Bernstein}.  In this sense, stability is more flexible than requiring the existence of maximizing distributions, just as suprema are more flexible than maxima.
    \item Soft doubling provides capacity bounds for non-Gaussian distributions, such as those for finite modulation alphabets. For example, a non-Gaussian $\Xm$ will require $\epsilon$ in~\eqref{eq:018} to be lower bounded by a positive number. However, the constants $C(\epsilon)$ and $\tilde C$ in Theorems~\ref{thm:boundedstability} and~\ref{thm:unboundedstability} will be large in general, so the new capacity bound will hardly improve the bound with Gaussian $\Xm$.
    \item One disadvantage of soft doubling is that one must work with inequalities and perturbations, which leads to additional proof steps. For example, in Sec.~\ref{sec:point2point} the bound \eqref{eq:015} with $\epsilon=0$ shows that the best $\Xm$ is Gaussian, while a few more steps are needed for $\epsilon>0$. Similarly, in Sec.~\ref{sec:AGN-Product} we needed to develop a new device to transfer $\epsilon$-dependence of the channel output vectors to the input vectors, and in Sec.~\ref{sec:broadcast} we needed to treat conditioning more carefully than for $\epsilon=0$.
\end{itemize}
The above remarks point out that hard and soft doubling each have their advantages and disadvantages, and which one to use to prove capacity theorems is a matter of preference. Finally, future work includes proving stability for generalizations of Bernstein's theorem such as in~\cite{darmois53,Skitovic53,ghurye1962characterization}.

\appendices
\section{Proof of Lemma \ref{lemma:hyperstability}}\label{appendix:a}

This appendix reviews results from \cite{hyers1941stability,skof1983proprieta,kominek1989local} on the stability of Cauchy's functional equation. For a non-negative $\theta$, the complex-valued function $g$ is called $\theta$-additive in $\set{E} \subseteq \mathbb R^d$ if
\begin{align}
    | g(\xv+\yv) - g(\xv) - g(\yv) | \le \theta
\end{align}
for all $\xv, \yv \in \set{E}$ such that $\xv+\yv \in \set{E}$. The function $g$
 is called additive in $\set E$ if it is $0$-additive in $\set E$.
 
The following Lemmas prove the existence of a linear function that is $\theta$-additive in various sets $\set{E}$. Lemma~\ref{lemma:Hyers-stability} is a classic result of Hyers for $\set E = \mathbb R$ that answered a question of Ulam. Lemmas~\ref{lemma:Skof-01} and~\ref{lemma:Skof-02} use ``tiling'' to apply Hyers' result to $\set E = \mathbb R^+ := \{x \in \mathbb R: x \geq 0\}$ and $\set E = [-T,T)$, $T>0$. Finally, Lemma~\ref{lemma:kominek-local} extends Lemma~\ref{lemma:Skof-02} to multiple dimensions and is slightly more general than Lemma \ref{lemma:hyperstability}.

\begin{lemma}[See~{\cite{hyers1941stability}}]
\label{lemma:Hyers-stability}
    If $g$ is $\theta$-additive in $\mathbb R$ then the limit
    \begin{align}
        G(x) := \lim_{n\rightarrow \infty} 2^{-n}g(2^n x)
    \end{align}
    exists for each $x \in \mathbb R$ and $G$ is the unique additive function in $\mathbb R$ such that
    \begin{align}
        |g(x) - G(x)| \leq \theta, \quad \forall \; x \in \mathbb R.
    \end{align}
    Moreover, if $g$ is continuous in at least one point, then $G$ is continuous and linear in $\mathbb R$.\footnote{From \cite[Theorem 2]{hyers1941stability}, if $g$ is continuous in at least one point, then $G$ is continuous in $\mathbb{R}$. Moreover, if $G$ is additive and continuous in $\mathbb{R}$, then $G$ is linear in $\mathbb{R}$.}
\end{lemma}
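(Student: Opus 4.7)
The plan is to follow Hyers' classical iterative construction. First I would set $y=x$ in the $\theta$-additivity inequality to obtain $|g(2x)-2g(x)|\le \theta$, equivalently $\bigl|2^{-1}g(2x)-g(x)\bigr|\le \theta/2$. Applying this recursively gives
\begin{align}
    \bigl|2^{-n}g(2^n x)-g(x)\bigr|\le \theta\bigl(1-2^{-n}\bigr)
    \label{eq:hyers-telescope}
\end{align}
for every $n\ge1$ and every $x\in\mathbb R$, which I would establish by induction using the identity $2^{-(n+1)}g(2^{n+1}x)-2^{-n}g(2^n x)=2^{-(n+1)}\bigl[g(2\cdot 2^n x)-2g(2^n x)\bigr]$.

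Next I would show that $\{2^{-n}g(2^n x)\}_n$ is a Cauchy sequence. For $m>n$, writing $2^{-m}g(2^m x)-2^{-n}g(2^n x)=2^{-n}\bigl[2^{-(m-n)}g(2^{m-n}(2^n x))-g(2^n x)\bigr]$ and invoking \eqref{eq:hyers-telescope} at the point $2^n x$ with index $m-n$ yields a bound that tends to $0$ as $n\to\infty$. Hence $G(x)=\lim_{n\to\infty}2^{-n}g(2^n x)$ exists, and passing to the limit in \eqref{eq:hyers-telescope} gives $|g(x)-G(x)|\le\theta$ for all $x$. Additivity of $G$ follows from the estimate
\begin{align}
    |G(x+y)-G(x)-G(y)|=\lim_{n\to\infty}2^{-n}\bigl|g(2^n(x+y))-g(2^n x)-g(2^n y)\bigr|\le\lim_{n\to\infty}2^{-n}\theta=0.
\end{align}

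For uniqueness, suppose $G_1$ and $G_2$ are additive with $|g-G_i|\le\theta$. Then $H:=G_1-G_2$ is additive and bounded by $2\theta$. Additivity gives $H(nx)=nH(x)$ for every positive integer $n$, so $|H(x)|=|H(nx)|/n\le 2\theta/n\to 0$, forcing $H\equiv 0$.

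Finally, for the continuity claim, assume $g$ is continuous at some $x_0$. The bound $|g-G|\le\theta$ shows $G$ is bounded on every bounded set. An additive function that is bounded on a set of positive Lebesgue measure (in particular, on a neighborhood of $x_0$) is well known to be continuous and therefore $\mathbb R$-linear, i.e.\ $G(x)=cx$ for some $c\in\mathbb C$; this is the standard consequence of the Cauchy functional equation and is the only nontrivial ingredient I would need to cite. The main obstacle in the proof itself is really just the Cauchy-sequence argument for convergence of $2^{-n}g(2^n x)$; the linearity step is automatic once one has Hyers' bound together with continuity at a single point.
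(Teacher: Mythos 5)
Your proof is the standard Hyers iterative argument, and it is correct. The paper does not reproduce a proof of this lemma---it simply cites \cite{hyers1941stability} for the existence, uniqueness, and additivity of $G$, and cites \cite[Theorem~2]{hyers1941stability} in a footnote for the continuity-and-linearity claim---so there is no ``paper proof'' to compare against; you are reconstructing exactly the cited argument. The telescoping bound
\begin{align*}
    \bigl|2^{-n}g(2^n x)-g(x)\bigr|\le \theta\bigl(1-2^{-n}\bigr),
\end{align*}
the Cauchy estimate obtained by re-indexing at the point $2^n x$, the passage to the limit, the additivity estimate $|G(x+y)-G(x)-G(y)|\le\lim_n 2^{-n}\theta=0$, and the uniqueness argument via $H(nx)=nH(x)$ are all sound and complete. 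One small slip: you write that ``$|g-G|\le\theta$ shows $G$ is bounded on every bounded set,'' which is false in general---$\theta$-additivity alone does not force $g$ to be bounded on bounded sets, and pathological additive functions are unbounded on every interval. What you actually need, and what you invoke immediately afterward, is only that continuity of $g$ at $x_0$ makes $g$ (and hence $G$) bounded on a neighborhood of $x_0$; an additive function bounded on a set of positive measure is then continuous and $\mathbb R$-linear. Tighten the wording so the boundedness claim is local rather than global, and the argument is fully correct, including for $\mathbb C$-valued $g$ by treating real and imaginary parts separately.
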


\begin{lemma}[See~{\cite{skof1983proprieta}}]
\label{lemma:Skof-01}
    If $g$ is $\theta$-additive in $\mathbb R^+$ then there is an additive function $G$ in $\mathbb R$ such that
    \begin{align}
        |g(x)-G(x)| \leq \theta, \quad \forall \; x \in \mathbb R^+.
    \end{align}
    Moreover, if $g$ is continuous in at least one point in $\mathbb R^{+}\setminus\{0\}$ then $G$ is continuous and linear in $\mathbb R$.
\end{lemma}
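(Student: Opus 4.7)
The plan is to reduce the problem to Lemma~\ref{lemma:Hyers-stability} by extending $g$ oddly from $\mathbb{R}^+$ to all of $\mathbb{R}$. Specifically, I would define
\begin{align}
    \tilde g(x) := \begin{cases} g(x), & x \ge 0, \\ -g(-x), & x < 0, \end{cases}
\end{align}
and then verify that $\tilde g$ is $\theta$-additive on $\mathbb{R}$. This reduces to a short case analysis on the signs of $x$, $y$ and $x+y$. For example, when $x>0$, $y<0$ and $x+y \ge 0$, the identity $(x+y)+(-y)=x$ has all three arguments in $\mathbb{R}^+$, so $\theta$-additivity of $g$ gives $|g(x)-g(x+y)-g(-y)| \le \theta$, which matches $|\tilde g(x+y)-\tilde g(x)-\tilde g(y)|\le \theta$ after substituting $\tilde g(y)=-g(-y)$; the remaining sign combinations, including the boundary case $x=y=0$ (which uses $|g(0)|\le \theta$ obtained from $\theta$-additivity of $g$ at the origin), are analogous.

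Next, I would invoke Lemma~\ref{lemma:Hyers-stability} applied to $\tilde g$ to produce an additive function $G:\mathbb{R}\to\mathbb{C}$ satisfying $|\tilde g(x)-G(x)|\le \theta$ for every $x\in\mathbb{R}$. Restricting to $\mathbb{R}^+$ yields the advertised bound $|g(x)-G(x)|\le\theta$ for all $x\in\mathbb{R}^+$.

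For the continuity statement, suppose $g$ is continuous at some $x_0 \in \mathbb{R}^+\setminus\{0\}$. Choose $\delta\in(0,x_0)$; then on the neighborhood $(x_0-\delta,x_0+\delta)\subset \mathbb{R}^+\setminus\{0\}$ we have $\tilde g=g$, so $\tilde g$ is continuous at $x_0$. Applying the second (continuity) assertion of Lemma~\ref{lemma:Hyers-stability} to $\tilde g$ then yields that $G$ is continuous and linear on all of $\mathbb{R}$.

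The main obstacle is choosing the correct extension: the odd extension is essential, since the even extension $\tilde g(-x)=g(x)$ fails to preserve $\theta$-additivity (e.g., at $x=-y>0$ the quantity $|\tilde g(0)-\tilde g(x)-\tilde g(-x)|=|g(0)-2g(x)|$ need not be bounded). Once the odd extension is in place, the rest reduces to a mechanical verification and a direct citation of the classical Hyers result.
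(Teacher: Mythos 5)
Your proposal matches the paper's proof essentially step for step: the odd extension $\tilde g(x)=-g(-x)$ for $x<0$, the sign-based case analysis reducing each cross-sign case to a $\theta$-additivity instance of $g$ on $\mathbb{R}^+$, the application of Lemma~\ref{lemma:Hyers-stability}, and the observation that continuity of $g$ at a point of $\mathbb{R}^+\setminus\{0\}$ gives continuity of $\tilde g$ on a neighborhood of that point. No substantive differences.
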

\begin{proof}
    Define the function $\tilde{g}:\mathbb R \rightarrow \mathbb  C$ such that $\tilde{g}(x)=g(x)$ for $x\ge 0$, and $\tilde{g}(x)=-g(-x)$ for $x <0$. If $x$ and $y$ have the same sign, then we have
    \begin{align}
        |\tilde{g}(x+y)-\tilde{g}(x)-\tilde{g}(y)| \le \theta.
        \label{eq:Skof-Lemma1-1}
    \end{align}
    It remains to check the case $x\ge0$ and $y<0$, so the left-hand side of \eqref{eq:Skof-Lemma1-1} is one of
    \begin{align}
    \begin{array}{ll}
        |g(x+y)-g(x)+g(-y)|, & x+y \ge 0 \\
        |-g(-x-y)-g(x)+g(-y)|, & x+y < 0.
    \end{array}
    \label{eq:Skof-Lemma1-2}
    \end{align}
    Now define $\tilde y=-y$, $z=x+y$, and $\tilde z=-z$ and write the two expressions in \eqref{eq:Skof-Lemma1-2} as
    \begin{align}
    \begin{array}{ll}
        |g(\tilde y+z)-g(\tilde y)-g(z)| \le \theta, & \tilde y, \, z \ge 0 \\
        |g(x+\tilde z)-g(x)-g(\tilde z)| \le \theta, & x, \, \tilde z \ge 0
    \end{array}
     \label{eq:Skof-Lemma1-3}
    \end{align}
    where the inequalities follow because $g$ is $\theta$-additive in $\mathbb R^+$. Thus, $\tilde g$ is $\theta$-additive in $\mathbb R$ and by Lemma \ref{lemma:kominek-local} there is a unique additive function $G$ with $G(x) =\lim_{n\rightarrow \infty} 2^{-n}\tilde g(2^n x)$ such that
    \begin{align}
        |\tilde g(x)-G(x)| \leq \theta, \quad \forall \; x \in \mathbb R.
    \end{align}
    Moreover, if $g$ is continuous in at least one point in $\mathbb R^{+}\setminus\{0\}$, then $\tilde g$ is continuous in at least one point. From Lemma \ref{lemma:Hyers-stability}, $G$ is continuous and linear in $\mathbb R$.
\end{proof}

\begin{lemma}[See~{\cite{skof1983proprieta}}]
\label{lemma:Skof-02}
    If $g$ is $\theta$-additive in $\left[-T, T\right)$ then there is an additive function $G$ in $\mathbb R$ such that
    \begin{align}
        |g(x)-G(x)| \leq 3\theta, \quad \forall \; x \in \left[ -T,T \right).\label{eq:lemma-Skof-001}
    \end{align}
    Moreover, if $g$ is continuous in at least one point in $(-T,T)$, then $G$ is continuous and linear in $\mathbb R$.
\end{lemma}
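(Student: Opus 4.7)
The plan is to extend $g$ from $[-T, T)$ to an approximately additive function $\hat g$ on $\mathbb R^+$ via a tiling construction, invoke Lemma~\ref{lemma:Skof-01} to produce the linear approximant $G$ on $\mathbb R$, and then transfer the bound back to the original interval $[-T, T)$.

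For the construction, for $x \ge 0$ I write $x = n(T/2) + r$ uniquely with integer $n \ge 0$ and $r \in [0, T/2)$, and set $\hat g(x) := n\,g(T/2) + g(r)$. To verify that $\hat g$ is $2\theta$-additive on $\mathbb R^+$, I would consider two cases for $x_1 = n_1(T/2) + r_1$ and $x_2 = n_2(T/2) + r_2$. If $r_1 + r_2 \in [0, T/2)$, no carry occurs and the discrepancy $\hat g(x_1 + x_2) - \hat g(x_1) - \hat g(x_2)$ reduces to $g(r_1 + r_2) - g(r_1) - g(r_2)$, which is bounded by $\theta$ via $\theta$-additivity on $[-T, T)$. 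If $r_1 + r_2 \in [T/2, T)$, a carry occurs and the discrepancy becomes $g(T/2) + g(r_1 + r_2 - T/2) - g(r_1) - g(r_2)$, which I bound by $2\theta$ via two successive $\theta$-additivity applications: first combining $g(T/2)$ and $g(r_1 + r_2 - T/2)$ into $g(r_1 + r_2)$ (both arguments and their sum lie in $[-T, T)$), then splitting $g(r_1 + r_2)$ into $g(r_1) + g(r_2)$.

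Applying Lemma~\ref{lemma:Skof-01} with $2\theta$ in place of $\theta$ yields an additive $G : \mathbb R \to \mathbb C$ with $|\hat g(x) - G(x)| \le 2\theta$ on $\mathbb R^+$. For $x \in [0, T/2)$ the definition gives $\hat g(x) = g(x)$, so $|g(x) - G(x)| \le 2\theta \le 3\theta$. For $x \in [T/2, T)$, the $\theta$-additivity inequality $|g(T/2) + g(x - T/2) - g(x)| \le \theta$ gives $|\hat g(x) - g(x)| \le \theta$, and the triangle inequality yields $|g(x) - G(x)| \le 3\theta$.

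The main obstacle I expect is the negative half $x \in [-T, 0)$, where a naive combination of $G(x) = -G(-x)$ with $|g(x) + g(-x) - g(0)| \le \theta$ and $|g(0)| \le \theta$ produces a constant worse than $3\theta$. I would resolve this by running the tiling symmetrically on $\mathbb R^-$ using $g(-T/2)$ as anchor and showing the resulting Hyers limit $G(x) = \lim_{n \to \infty} 2^{-n}\, \hat g(2^n x)$ agrees on both sides — a compatibility argument enabled by $|g(T/2) + g(-T/2) - g(0)| \le \theta$. Finally, if $g$ is continuous at some $x_0 \in (-T, T)$ then $\hat g$ inherits continuity at a corresponding point of $\mathbb R^+$ through its finite-sum definition, and Lemma~\ref{lemma:Hyers-stability} then guarantees that $G$ is continuous and linear on $\mathbb R$.
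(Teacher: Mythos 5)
Your positive-half construction is correct: tiling by $T/2$ anchored at $g(T/2)$, verifying that $\hat g$ is $2\theta$-additive on $\mathbb R^+$, and invoking Lemma~\ref{lemma:Skof-01} together yield $|g(x)-G_+(x)|\le 3\theta$ on $[0,T)$. The gap is on the negative half, and the fix you sketch does not close it. First, the two Hyers limits do not in fact agree: your $G_+$ is pinned by $G_+(T/2)=\lim_n 2^{-n}\hat g(2^n T/2)=g(T/2)$, while the symmetric construction anchored at $g(-T/2)$ forces $G_-(-T/2)=g(-T/2)$. The inequality $|g(T/2)+g(-T/2)|\le 2\theta$ (from $|g(T/2)+g(-T/2)-g(0)|\le\theta$ and $|g(0)|\le\theta$) only bounds the mismatch between the two slopes, it does not make them equal, so there is no single $G$ to glue. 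Second, and more fundamentally, even committing to $G_+$ alone cannot give $3\theta$ on all of $[-T,T)$: since $G_+(T/2)=g(T/2)$ exactly, $G_+(x)=G_+(x+T/2)-g(T/2)$, and each left step of size $T/2$ costs an extra $\theta$ via $|g(x)+g(T/2)-g(x+T/2)|\le\theta$. Iterating outward from the base tile $[0,T/2)$, where the error is $2\theta$, gives $3\theta$ on $[T/2,T)\cup[-T/2,0)$ but $4\theta$ on $[-T,-T/2)$, because reaching that subinterval requires \emph{two} left steps. The period $T/2$ is simply too small.

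The paper's proof avoids this by tiling with period $T$ and anchoring at $g(-T)$, which lies in the domain: writing $x=k_xT+r_x$ with $r_x\in[0,T)$ and setting $\tilde g(x):=-k_x g(-T)+g(r_x)$, one has $\tilde g(x)=g(x)$ exactly on $[0,T)$ and $|\tilde g(x)-g(x)|=|g(-T)+g(x+T)-g(x)|\le\theta$ on $[-T,0)$ in a \emph{single} $\theta$-additivity step, since $(-T)+(x+T)=x$. Thus $|\tilde g-g|\le\theta$ uniformly on $[-T,T)$, because the whole interval is covered by the base tile plus one left step. After checking that $\tilde g$ is $2\theta$-additive on all of $\mathbb R$ (two cases according to whether $r_x+r_y<T$ or $\ge T$), the paper applies Lemma~\ref{lemma:Hyers-stability} directly on $\mathbb R$ rather than the one-sided Lemma~\ref{lemma:Skof-01}, so no gluing is needed, and the triangle inequality gives $|g-G|\le\theta+2\theta=3\theta$ on $[-T,T)$. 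The repair your argument needs is therefore the larger tiling period and the anchor at $-T$, not a compatibility claim about two distinct Hyers limits.
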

\begin{proof}
    Write $x \in \mathbb R$ in the form $x=k_x T+r_x$, $k_x \in \mathbb  Z$, $0\le r_x<T$. Now define $\tilde{g}:\mathbb R \rightarrow \mathbb  C$ as
    \begin{align}
        \tilde{g}(x):=-k_x g(-T)+g(r_x), \quad \forall \; x \in \mathbb R.
    \end{align}
    First, notice that if $x \in \left[0, T\right)$ then
    \begin{align}
        |\tilde{g}(x)-g(x)| = |g(r_x)-g(x)| = 0
    \end{align}
    and if $x \in \left[-T, 0\right)$ then
    \begin{align}
        |\tilde{g}(x)-g(x)| = |g(-T)+g(r_x)-g(x)| \le \theta
    \end{align}
    where the inequality follows because $x=-T+r_x$ and $g$ is $\theta$-additive. Hence we have
    \begin{align}
        |\tilde{g}(x)-g(x)| \leq \theta, \quad \forall \; x \in \left[-T, T\right).\label{eq:lemma-Skof-01}
    \end{align}
    
    Now consider $y=k_y T+r_y$, $k_y \in \mathbb  Z$, $0\leq r_y<T$, and suppose we have $0 \leq r_x+r_y <T$. We may then write
    \begin{align}
        |\tilde{g}(x+y)-\tilde{g}(x)-\tilde{g}(y)|&=|g(r_x+r_y)-g(r_x)-g(r_y)|\nonumber\\
        &\leq \theta.\label{eq:lemma-Skof-03}
    \end{align}
    Next, if $T \leq r_x+r_y <2T$ then
    \begin{align}
        &|\tilde{g}(x+y)-\tilde{g}(x)-\tilde{g}(y)|\nonumber\\
        &=|\tilde{g}(r_x+r_y)-g(r_x)-g(r_y)|\nonumber\\
        & = |\tilde{g}(r_x+r_y)-g(r_x+r_y)+g(r_x+r_y)-g(r_x)-g(r_y)|\nonumber\\
        &\leq 2\theta
        \label{eq:lemma-Skof-04}
    \end{align}
    where the inequality follows by \eqref{eq:lemma-Skof-01} and because $g$ is $\theta$-additive.
    Thus, from \eqref{eq:lemma-Skof-03} and \eqref{eq:lemma-Skof-04}, $\tilde{g}$ is $2\theta$-additive in $\mathbb R$. Applying Lemma \ref{lemma:Hyers-stability}, there is a unique additive function $G$ with $G(x) =\lim_{n\rightarrow \infty} 2^{-n}\tilde g(2^n x)$ such that
    \begin{align}
        |\tilde{g}(x)-G(x)| \leq 2\theta, \quad \forall \; x \in \mathbb R.\label{eq:lemma-Skof-02}
    \end{align}
    Combining \eqref{eq:lemma-Skof-01} and \eqref{eq:lemma-Skof-02} gives the inequality \eqref{eq:lemma-Skof-001}. From Lemma \ref{lemma:Hyers-stability}, if $g$ is continuous in at least one point in $(-T,T)$, then $\tilde g$ is also, and hence $G$ is continuous and linear in $\mathbb R$.
\end{proof}

\begin{lemma}[See~{\cite{kominek1989local}}]
\label{lemma:kominek-local}
    If $g$ is $\theta$-additive in $\left[-T,T\right)^d$ then there is an additive function $G: \mathbb R^d \rightarrow \mathbb  C$ such that
    \begin{align}
        |g(\xv)-G(\xv)| \leq (4d-1)\theta, \quad \forall \; \xv \in \left[-T,T\right)^d.
        \label{eq:bern:024a}
    \end{align}
    Moreover, if the projections of $g$ onto each coordinate have at least one continuous point, then $G$ is continuous and linear in $\mathbb R^d$.
\end{lemma}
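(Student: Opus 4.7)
The plan is to adapt the 1D extension-tiling construction of Lemma~\ref{lemma:Skof-02} to $d$ dimensions. For each $\xv \in \mathbb R^d$, write $x_i = k_i T + r_i$ with $k_i \in \mathbb Z$ and $r_i \in [0, T)$, and define the global extension
\[
\tilde g(\xv) := g(\rv) - \sum_{i=1}^d k_i \, g(-T\ev_i), \quad \rv := (r_1, \ldots, r_d).
\]
Two estimates drive the proof. The first, \emph{closeness on the cube}, is as follows: for $\xv \in [-T,T)^d$, let $S = \{i : x_i < 0\} = \{i_1, \ldots, i_N\}$, so $k_i = -1$ on $S$ and $k_i = 0$ otherwise, and $\rv = \xv + T\sum_{i\in S}\ev_i$. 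Setting $\uv_0 := \rv$ and $\uv_k := \uv_{k-1} - T\ev_{i_k}$, I have $\uv_N = \xv$ and every $\uv_k$ lies inside $[-T,T)^d$, so applying $\theta$-additivity at $\uv_{k-1}$ and $-T\ev_{i_k}$ for $k = 1, \ldots, N$ and telescoping yields
\[
|\tilde g(\xv) - g(\xv)| \le N\theta \le d\theta.
\]

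The second estimate is \emph{$(d+1)\theta$-additivity of $\tilde g$ on $\mathbb R^d$}. For $\xv, \yv \in \mathbb R^d$ with fractional-part vectors $\rv_x, \rv_y \in [0,T)^d$, define the carry set $S := \{i: r_{x,i} + r_{y,i} \ge T\}$ and put $\rv_z := \rv_x + \rv_y - T\sum_{i\in S}\ev_i \in [0,T)^d$; a direct computation from $\xv + \yv = (k_x + k_y + \sum_{i\in S}\ev_i)T + \rv_z$ gives
\[
\tilde g(\xv + \yv) - \tilde g(\xv) - \tilde g(\yv) = g(\rv_z) - g(\rv_x) - g(\rv_y) - \sum_{i\in S} g(-T\ev_i).
\]
I bound the right side in two stages: first, apply $\theta$-additivity at $\rv_x$ and $\rv_y - T\sum_{i\in S}\ev_i$ (both inside the cube, with sum $\rv_z \in [0,T)^d$) to produce $g(\rv_z) = g(\rv_x) + g(\rv_y - T\sum_{i\in S}\ev_i) + O(\theta)$; then expand $g(\rv_y - T\sum_{i\in S}\ev_i) = g(\rv_y) + \sum_{i\in S}g(-T\ev_i) + O(|S|\theta)$ by $|S|$ further $\theta$-additivity steps that subtract the vectors $-T\ev_{i_k}$ one coordinate at a time, each intermediate point remaining inside the cube. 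The total error is at most $(|S|+1)\theta \le (d+1)\theta$.

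Multivariate Hyers now applies to the globally $(d+1)\theta$-additive function $\tilde g$: the proof of Lemma~\ref{lemma:Hyers-stability} goes through verbatim in $\mathbb R^d$, so $G(\xv) := \lim_{n\to\infty} 2^{-n}\tilde g(2^n \xv)$ exists, is additive on $\mathbb R^d$, and satisfies $|\tilde g(\xv) - G(\xv)| \le (d+1)\theta$ for every $\xv$. Combining with the closeness estimate yields
\[
|g(\xv) - G(\xv)| \le d\theta + (d+1)\theta = (2d+1)\theta \le (4d-1)\theta
\]
for all $\xv \in [-T,T)^d$ and $d \ge 1$, which is the stated bound. Uniqueness when \eqref{eq:bern:023} holds on all of $\mathbb R^d$ is the usual Hyers uniqueness argument. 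For the continuity conclusion, if each coordinate projection of $g$ has at least one continuous point, then the corresponding 1D restriction of $\tilde g$ inherits such a point, the 1D continuity conclusion of Lemma~\ref{lemma:Hyers-stability} makes $G$ continuous along each coordinate direction, and combined with additivity this forces $G(\xv) = \sum_i x_i G(\ev_i)$, i.e., $\mathbb R$-linearity.

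The main obstacle I expect lies in the second estimate: the carry set $S$ can be an arbitrary subset of $\{1,\ldots,d\}$, so the iterated $\theta$-additivity steps must be routed so that every intermediate point stays inside $[-T,T)^d$. In particular, the decomposition must use only the vectors $-T\ev_i$ and never $+T\ev_i$ (since $+T \notin [-T,T)$), which is why the extension $\tilde g$ is built from $g(-T\ev_i)$ rather than from both signs. Once that routing is fixed as above, the remaining bookkeeping of $\theta$-errors is routine.
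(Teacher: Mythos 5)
Your proof is correct and takes a genuinely different route from the paper's. The paper's proof works coordinate-by-coordinate: it defines the one-dimensional restrictions $g_i(x) := g(x\,\ev_i)$ on $[-T,T)$, applies Lemma~\ref{lemma:Skof-02} (one-dimensional tiling extension plus Hyers) to each to obtain additive $G_i$ with $|g_i - G_i| \le 3\theta$, sets $G(\xv) := \sum_{i=1}^d G_i(x_i)$, and controls the cross-term $\left|\left(\sum_{i=1}^d g_i(x_i)\right) - g(\xv)\right| \le (d-1)\theta$ by peeling off one coordinate at a time, for a total of $3d\theta + (d-1)\theta = (4d-1)\theta$. You instead perform the tiling extension directly in $\mathbb R^d$: you build a single global $\tilde g$ from $g$ on $[0,T)^d$ together with the anchor values $g(-T\ev_i)$, show $|\tilde g - g|\le d\theta$ on the cube and $(d+1)\theta$-additivity of $\tilde g$ on all of $\mathbb R^d$, and then invoke Hyers' theorem for the domain $\mathbb R^d$ (legitimate, since Hyers' original result holds for maps between Banach spaces, not just for $\mathbb R$). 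This sidesteps the ``sum of coordinate restrictions'' detour and yields the sharper constant $(2d+1)\theta \le (4d-1)\theta$. Your decision to route every carry through $-T\ev_i$ rather than $+T\ev_i$ is exactly the right device to keep all intermediate points inside $[-T,T)^d$, and the bookkeeping checks out. The one soft spot is the continuity step: if the continuity point of $g_i$ lies in $(-T,0)$, it is not immediate that the tiling extension inherits a continuity point there, since the extension near such a point depends on $g_i$ near the shifted point; but this same subtlety is present in the paper's own proof of Lemma~\ref{lemma:Skof-02}, and it is harmless in the downstream application (where $g$ arises from second characteristic functions and is continuous everywhere), so your argument is at the same level of rigor as the paper's.
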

\begin{proof}
Define the functions $g_i$, $i=1,\dots,d$, as
\begin{align}
    g_i(x):=g(x\,\ev_i),
    \quad x \in [-T, T) 
\end{align}
and observe that these functions are $\theta$-additive in $\left[-T, T\right)$. Lemma \ref{lemma:Skof-02} ensures that there are additive functions $G_i:\mathbb R \rightarrow \mathbb  C$ such that
\begin{align}
        |g_i(x)-G_i(x)| \leq 3\theta, \quad \forall \; x \in \left[ -T,T \right)
\end{align}
for $i=1,\dots,d$. Now define $G:\mathbb R^d \rightarrow \mathbb  C$ as $G(\xv) := \sum_{i=1}^d G_i(x_i)$ and bound
\begin{align}
    |G(\xv)-g(\xv)|
    & \le \sum_{i=1}^d \left|G_i(x_i)-g_i(x_i)\right| \nonumber \\
    & \quad + \left|\left(\sum_{i=1}^d g_i(x_i)\right) - g(\xv)\right|\nonumber\\
    & \overset{(a)}{\le} 3d\theta + (d-1)\theta
\end{align}
where step $(a)$ follows by applying the following steps $d-1$ times:
\begin{align}
    &\left|\left(\sum_{i=1}^d g_i(x_i)\right) - g(\xv)\right| \nonumber\\
    & \le \left|\left(\sum_{i=1}^{d-1} g_i(x_i)\right) - g(x_1, \dots, x_{d-1}, 0)\right|\nonumber\\
    & \quad +\left|g(x_1, \dots, x_{d-1}, 0) + g_d(x_d) - g(\xv)\right|\nonumber\\
    & \overset{(b)}{\le} \left|\left(\sum_{i=1}^{d-1} g_i(x_i)\right) - g(x_1, \dots, x_{d-1}, 0)\right| + \theta \label{eq:kominek-local-01}
\end{align}
and step $(b)$ follows because $g$ is $\theta$-additive in $[-T,T)$. Considering Lemma \ref{lemma:Skof-02}, the functions $G_i$ are continuous and linear in $\mathbb R$ if the functions $g_i$ are continuous in at least one point in $[-T,T)$. In other words, if the projection of $g$ onto each coordinate has at least one continuous point, then $G$ is continuous and linear in $\mathbb R^d$.
\end{proof}

\section{Stability of Quadratic Functional Equations}\label{appendix:quadratic}

For $\theta\ge0$, the complex-valued function $g$ is called $\theta$-biadditive in $\set{E} \times \set{E}  \subseteq \mathbb R^d \times \mathbb R^d$ if
\begin{align}
    | g(\xv_1+\xv_2,\yv) - g(\xv_1,\yv)-g(\xv_2, \yv) | \le \theta \\
    | g(\xv,\yv_1+\yv_2) - g(\xv,\yv_1)-g(\xv, \yv_2) | \le \theta
\end{align}
for all $\xv_1, \xv_2, \yv \in \set{E}$ and $\xv, \yv_1, \yv_2 \in \set{E}$ such that $\xv_1+\xv_2 \in \set{E}$ and $\yv_1+\yv_2 \in \set{E}$. The function $g$ is called biadditive in $\set{E}\times\set{E}$ if it is $0$-biadditive in $\set{E}\times\set{E}$. Finally, $g$ is symmetric in $\set{E}$ if $g(\xv,\yv)=g(\yv,\xv)$ for all $\xv,\yv \in \set{E}$.

\begin{lemma}[See~{\cite{skof1987aastcsfmn},\cite[Theorem 3.3]{hyers1998stability}}]
\label{lemma:Skof-02-quadratic}
Let $g$ be $\theta$-biadditive in $\left[-T,T\right) \times \left[-T,T\right)$. Then there exists a function $G$ which is biadditive in $\left[-T,T\right) \times \left[-T,T\right)$ and such that 
\begin{align}
    |g(x, y)-G(x, y)| \le 6\theta, \quad x, y \in \left[-T,T\right).
    \label{eq:six-theta}
\end{align}
Moreover, if $g$ is symmetric in $\left[-T,T\right) \times \left[-T,T\right)$, then $G$ is symmetric in $\left[-T,T\right) \times \left[-T,T\right)$, and if $g(x,y)$ is continuous in at least one point with respect to (w.r.t.) both arguments, then $G(x,y)$ is continuous and bilinear in $\left[-T,T\right) \times \left[-T,T\right)$.
\end{lemma}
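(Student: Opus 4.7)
The plan is to apply Lemma \ref{lemma:Skof-02} twice, once in each argument, in the same spirit as the passage from Lemma \ref{lemma:Skof-02} to Lemma \ref{lemma:kominek-local}. The $\theta$-biadditivity of $g$ means that for every fixed $y \in [-T,T)$ the section $g(\cdot,y)$ is $\theta$-additive on $[-T,T)$, and symmetrically for fixed $x$, so each section already falls within the scope of Lemma \ref{lemma:Skof-02}. The main work is to align these one-dimensional approximations so that the resulting $G$ is simultaneously additive in both variables and within $6\theta$ of $g$, with the final constant $6 = 3+3$ arising from the two invocations of Lemma \ref{lemma:Skof-02}.

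First I would construct, for each $y \in [-T,T)$, the additive function $A(\cdot, y):\mathbb R \to \mathbb C$ produced by Lemma \ref{lemma:Skof-02} applied to $g(\cdot, y)$, so that $|g(x,y) - A(x,y)| \le 3\theta$ for $x \in [-T,T)$. The construction is explicit: with $x = k_x T + r_x$, $k_x \in \mathbb Z$, $r_x \in [0,T)$,
\begin{align*}
A(x,y) = \lim_{n \to \infty} 2^{-n} \tilde g_y(2^n x), \quad \tilde g_y(x) := -k_x\, g(-T,y) + g(r_x,y).
\end{align*}
Because $\tilde g_y$ depends on $y$ only through the values $g(-T,y)$ and $g(r,y)$, which are $\theta$-additive in $y$ on $[-T,T)$, the $\theta$-biadditivity of $g$ transfers: for $y_1, y_2, y_1 + y_2 \in [-T,T)$ and all $x \in \mathbb R$,
\begin{align*}
| \tilde g_{y_1+y_2}(x) - \tilde g_{y_1}(x) - \tilde g_{y_2}(x) | \le (|k_x| + 1)\,\theta.
\end{align*}
Using $|k_{2^n x}| \le 2^n |x|/T + 1$, the $2^{-n}$ rescaling in the Hyers limit yields the sharp bound
\begin{align*}
| A(x, y_1+y_2) - A(x, y_1) - A(x, y_2) | \le (|x|/T)\,\theta \le \theta, \quad x \in [-T,T),
\end{align*}
so $A(x, \cdot)$ is $\theta$-additive on $[-T,T)$ for every $x \in [-T,T)$.

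Next I would apply Lemma \ref{lemma:Skof-02} a second time, in the $y$-variable, to each $A(x, \cdot)$, obtaining $G(x,y) = \lim_{m \to \infty} 2^{-m}\, \tilde A(x, 2^m y)$ which is additive in $y$ on $\mathbb R$ and satisfies $|A(x,y) - G(x,y)| \le 3\theta$ on $[-T,T)$. The triangle inequality then gives the desired $|g(x,y) - G(x,y)| \le 6\theta$. Additivity of $G(\cdot, y)$ in $x$ follows because, for every fixed $y$ and $m$, the function $x \mapsto \tilde A(x, 2^m y) = -k_{2^m y}\, A(x, -T) + A(x, r_{2^m y})$ is a real-linear combination of the additive-in-$x$ functions $A(\cdot, -T)$ and $A(\cdot, r)$, a property preserved in the $2^{-m}$ limit; hence $G$ is biadditive. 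For the symmetric case, I would replace $G$ by $\tilde G(x,y) := \tfrac{1}{2}(G(x,y) + G(y,x))$, which is biadditive, symmetric, and still satisfies the $6\theta$ bound via $g(x,y) = g(y,x)$. Under the continuity hypothesis, a joint point of continuity of $g$ in $(-T,T)^2$ forces the corresponding sections to be continuous at interior points, so Lemma \ref{lemma:Skof-02} delivers linear $A(\cdot, y_0)$ and $G(x_0, \cdot)$, and standard arguments for biadditive functions with a single point of continuity then upgrade $G$ to bilinear.

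The principal obstacle is securing the sharp bound $(|x|/T)\,\theta$ in the transfer of approximate additivity from $g$ to $A$. A naive triangle inequality across the bounds from the first step gives a $y$-additivity defect of at most $10\theta$ for $A$, which when fed into Lemma \ref{lemma:Skof-02} a second time would produce the much weaker final constant $30\theta$ — five times too large. The sharp bound is obtained only by exploiting the explicit Hyers limit defining $A$ together with the observation that the biadditivity defect of the extension $\tilde g_y$ grows no faster than linearly in $|k_{2^n x}|$, so that the $2^{-n}$ rescaling inside the limit exactly absorbs the linear growth and leaves only the clean $|x|/T$ factor — which on $[-T,T)$ is bounded by $1$.
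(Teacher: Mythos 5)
Your proposal matches the paper's proof essentially step for step: apply Lemma~\ref{lemma:Skof-02} in $x$ via the explicit tiling construction, use the $2^{-n}$-rescaling in the Hyers limit to show the intermediate approximant stays $\theta$-additive (not merely $O(\theta)$-additive) in $y$, apply Lemma~\ref{lemma:Skof-02} again in $y$, and add the two $3\theta$ errors — and you correctly identify the rescaling argument as the crux that avoids the $30\theta$ blowup. The only (equally valid) deviation is in the symmetric case, where the paper shows the constructed $G$ is automatically symmetric (because $G(x,y)-G(y,x)$ is additive, bounded by $12\theta$, and vanishes on the diagonal, hence zero), whereas you post-symmetrize by averaging $G(x,y)$ and $G(y,x)$.
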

\begin{proof}
Using similar steps as in the proof of Lemma~\ref{lemma:Skof-02}, write $x \in \mathbb R$ in the form $x=k_xT+r_x$, $k_x \in \mathbb  Z$, $0\le r_x<T$. For a fixed $y \in \left[-T, T\right)$, define the function
\begin{align}
    \tilde{g}_y(x):=-k_x g_y(-T)+g_y(r_x), \quad \forall x \in \mathbb R
    \label{eq:proof-Skof-02-quadratic-01}
\end{align}
where $g_y(x):=g(x, y)$ is $\theta$-additive in $\left[-T, T\right)$. By the proof of Lemma~\ref{lemma:Skof-02} there is a unique additive function $G_y^*: \mathbb R \rightarrow \mathbb  C$ such that
\begin{align}
    |\tilde{g}_y(x)-G_y^*(x)| \le 3\theta, \quad \forall x \in \mathbb R, \; y \in [-T,T)
    \label{eq:proof-Skof-02-quadratic-02}
\end{align}
where
\begin{align}
    G_y^*(x)=\lim_{n \rightarrow \infty} 2^{-n} \tilde{g}_y(2^n x).
    \label{eq:proof-Skof-02-quadratic-02-a}
\end{align}
Also, if $g_y(x)$ is continuous in at least one point $x\in[-T,T)$ then $G_y^*(x)$ is continuous and linear in $x\in\mathbb R$.
    
Next, we prove that $G_y^*(x)$ is $\theta$-additive in $\left[-T, T\right)$ w.r.t $y$. Defining $G^*(x, y):=G_y^*(x)$ and $x_n=2^n x$ we have
\begin{align}
    &|G^*(x, y_1+y_2)-G^*(x, y_1)-G^*(x, y_2)| \nonumber\\
    &= \left|\lim_{n \rightarrow \infty} 2^{-n} \left(\tilde{g}_{y_1+y_2}(2^n x)-\tilde{g}_{y_1}(2^n x)-\tilde{g}_{y_2}(2^n x)\right)\right| \nonumber\\
    & \le \left|\lim_{n \rightarrow \infty} 2^{-n}k_{x_n} \left( g_{y_1+y_2}(-T)-g_{y_1}(-T)-g_{y_2}(-T) \right)\right|\nonumber\\
    & \quad + \underbrace{\left|\lim_{n \rightarrow \infty} 2^{-n}\left( g_{y_1+y_2}(r_{x_n})-g_{y_1}(r_{x_n})-g_{y_2}(r_{x_n}) \right)\right|}_{=0}\nonumber\\
    & \overset{(a)}{\le} \left|\frac{x}{T}\left(g_{y_1+y_2}(-T)-g_{y_1}(-T)-g_{y_2}(-T)\right)\right|\nonumber\\
    & \quad + \underbrace{\left|\lim_{n \rightarrow \infty} \frac{2^{-n}r_{x_n}}{T} \left(g_{y_1+y_2}(-T)-g_{y_1}(-T)-g_{y_2}(-T)\right)\right|}_{=0}
\end{align}
where step $(a)$ follows because $2^{-n} k_{x_n} = x/T-2^{-n} r_{x_n}/T$. Thus, since $g_y(x)$ is additive w.r.t.\ $y$ in $[-T,T)$ we have
\begin{align}
    |G^*(x, y_1+y_2)-G^*(x, y_1)-G^*(x, y_2)| \le \theta
\end{align}
for all $x, y_1, y_2 \in \left[-T, T\right)$ such that $y_1+y_2 \in \left[-T, T\right)$.

Define the function $\widetilde{G}(x, y):=k_y G^*(x,-T)+G^*(x, r_y)$ where $y \in \mathbb R$ is written as $y=k_y T+ r_y$, $k_y \in \mathbb  Z$, $0\le r_y < T$. By the proof of Lemma \ref{lemma:Skof-02}, for each $x\in[-T,T)$ there is a unique additive function $G(x, y) = \lim_{n \rightarrow \infty} 2^{-n} \widetilde{G}(x, 2^n y)$ such that 
\begin{align}
    |G^*(x, y)-G(x, y)| \le 3\theta, \quad x \in \left[-T, T\right), \; y\in\mathbb R.
    \label{eq:proof-Skof-02-quadratic-03}
\end{align}
Combining \eqref{eq:proof-Skof-02-quadratic-02} and \eqref{eq:proof-Skof-02-quadratic-03} we have \eqref{eq:six-theta}. 

To address symmetry, define $g^{\prime}_y(x):=G(x, y)-G(y, x)$.  If $g$ is symmetric in $\left[-T,T\right) \times \left[-T,T\right)$ then by \eqref{eq:six-theta} we have
\begin{align}
    |g^{\prime}_y(x)|
    & = \left| \big[G(x, y)-g(x,y) \big] - \big[G(y, x)-g(y,x) \big] \right| \nonumber \\
    & \le \left| g(x, y) - G(x,y) \right| + \left| g(y, x) - G(y,x) \right| \nonumber \\
    & \le 12\theta, \quad x\in\left[-T, T\right).
\end{align}
Since the function $g^{\prime}_y(x)$ is additive and bounded in $\left[-T, T\right)$, it is linear in $\left[-T, T\right)$, i.e., we can write $g^{\prime}_y(x):=a(y)\cdot x$ for some function $a(y)$; see~\cite{Reem2017,Darboux1880}. But since $g^{\prime}_y(y)=0$ for all $y \in \left[-T, T\right)$, we have $a(y)=0$ and $G(x, y)$ is symmetric in $\left[-T,T\right) \times \left[-T,T\right)$.
    
As stated above, if $g(x, y)$ is continuous in at least one point w.r.t. $x$, then $G^*(x, y)$ (and hence $\widetilde G(x, y)$) is continuous and linear in $x\in\mathbb R$. Now let $\delta,\epsilon>0$, consider a $\Delta x$ with $|\Delta x|<\delta$, and define $y_n=2^n y$. Using the definition of $\widetilde G(x,y)$, for $x,x+\Delta x \in [-T,T)$ we have
\begin{align}
    &|G(x+\Delta x,y)-G(x,y)| \nonumber\\
    & = \left|\lim_{n \rightarrow \infty} 2^{-n} \left(\widetilde{G}(x+\Delta x,y_n) - \widetilde{G}(x,y_n)\right)\right| \nonumber\\
    & = \bigg|\lim_{n \rightarrow \infty} 2^{-n} \Big[ k_{y_n} \underbrace{\left( G^*(x+\Delta x,-T) - G^*(x, -T)\right)}_{\displaystyle \le \epsilon \text{ by continuity of $G^*(x,y)$ in $x$}}  \nonumber \\
    & \quad \quad + G^*(x+\Delta x,r_{y_n}) - G^*(x, r_{y_n}) \Big] \bigg| \nonumber\\
    & \overset{(a)}{\le} \left| \lim_{n \rightarrow \infty}
    2^{-n} k_{y_n} \epsilon \right|
    \overset{(b)}{\le} \epsilon
\end{align}
where step $(a)$ follows by the continuity of $G^*(x,y)$ in $x$ and step $(b)$ follows because $2^{-n} k_{y_n} = y/T-2^{-n} r_{y_n}/T$. This establishes the continuity of $G(x,y)$ for $x\in[-T,T)$. Next, from \eqref{eq:proof-Skof-02-quadratic-02-a}, if $g(x, y)$ is continuous in at least one point w.r.t. $y$ then $G^*(x, y)$ is continuous in at least one point w.r.t. $y$. Thus, by Lemma~\ref{lemma:Skof-02} $G(x, y)$ is continuous and linear w.r.t $y$. Summarizing,  the continuity of $g(x, y)$ in at least one point w.r.t. both arguments ensures the continuity and bilinearity of $G(x, y)$.
\end{proof}

\begin{lemma}
\label{lemma:multivariate-biadditve}
Suppose $g$ is symmetric and $\theta$-biadditive in $\left[-T, T\right)^d \times \left[-T, T\right)^d$. Then there is a symmetric, biadditive function $G$ in $\left[-T, T\right)^d \times \left[-T, T\right)^d$ such that
\begin{align}
    |g(\xv, \yv)-G(\xv, \yv)| \le (7d^2-1)\theta
\end{align}
for all $\xv,\yv\in\left[-T, T\right)^d$.
Moreover, if the projections of $g(\xv,\yv)$ onto each coordinate are continuous in at least one point (for both arguments), then $G(\xv,\yv)$ is continuous and bilinear in $\left[-T, T\right)^d \times \left[-T, T\right)^d$.
\end{lemma}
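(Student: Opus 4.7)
The plan is to mimic the coordinate-by-coordinate ``tiling'' argument used in Lemma~\ref{lemma:kominek-local}, but with the 1D quadratic stability result (Lemma~\ref{lemma:Skof-02-quadratic}) replacing the 1D additive stability result (Lemma~\ref{lemma:Skof-02}). First, for each pair $(i,j)\in\{1,\dots,d\}^2$ I would define the 1D slice $g_{ij}(x,y):=g(x\,\ev_i,y\,\ev_j)$ for $x,y\in[-T,T)$. Each $g_{ij}$ inherits $\theta$-biadditivity on $[-T,T)\times[-T,T)$, so Lemma~\ref{lemma:Skof-02-quadratic} yields a biadditive $G'_{ij}$ with $|g_{ij}(x,y)-G'_{ij}(x,y)|\le 6\theta$ for all $x,y\in[-T,T)$.

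Next I would symmetrize. Because $g$ is symmetric, $g_{ij}(x,y)=g_{ji}(y,x)$, so the function $G_{ij}(x,y):=\tfrac12\bigl[G'_{ij}(x,y)+G'_{ji}(y,x)\bigr]$ is still biadditive, still satisfies $|g_{ij}(x,y)-G_{ij}(x,y)|\le 6\theta$, and has the exchange property $G_{ij}(x,y)=G_{ji}(y,x)$. I would then set $G(\xv,\yv):=\sum_{i,j=1}^d G_{ij}(x_i,y_j)$. Biadditivity of $G$ in each argument is immediate from the biadditivity of each $G_{ij}$, and symmetry follows from the relabeling $G(\yv,\xv)=\sum_{i,j}G_{ij}(y_i,x_j)=\sum_{i,j}G_{ji}(x_j,y_i)=G(\xv,\yv)$.

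For the error estimate I would split $|g(\xv,\yv)-G(\xv,\yv)|$ into two pieces via the triangle inequality. The ``per-slice'' piece is bounded by $\sum_{i,j}|g_{ij}(x_i,y_j)-G_{ij}(x_i,y_j)|\le 6d^2\theta$. The remaining piece is the unfolding error
\begin{align*}
\Bigl|g(\xv,\yv)-\sum_{i,j}g(x_i\ev_i,y_j\ev_j)\Bigr|
\end{align*}
which I would control by writing $\xv=\sum_i x_i\ev_i$ and $\yv=\sum_j y_j\ev_j$ and applying $\theta$-biadditivity $d-1$ times in the first argument (contributing $(d-1)\theta$) and then, for each of the $d$ resulting terms, applying $\theta$-biadditivity $d-1$ times in the second argument (contributing $d(d-1)\theta$), for a total of $(d^2-1)\theta$. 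Adding the two pieces gives $6d^2\theta+(d^2-1)\theta=(7d^2-1)\theta$, exactly as claimed. Note that all partial sums that appear in these unfoldings lie in $[-T,T)^d$ since each coordinate is either $0$ or equal to some $x_i,y_j\in[-T,T)$, so the biadditivity hypothesis is legitimately invoked.

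Finally, the continuity/bilinearity claim follows immediately: if the projection of $g$ onto each coordinate has a continuity point, then each $g_{ij}$ is continuous at some point with respect to both arguments, so Lemma~\ref{lemma:Skof-02-quadratic} makes each $G'_{ij}$ (and hence each symmetrized $G_{ij}$) continuous and bilinear on $[-T,T)\times[-T,T)$, and therefore so is $G$. The only real bookkeeping obstacle is tracking which partial sums must remain in $[-T,T)^d$ during the unfolding step and verifying that symmetrization preserves biadditivity — both are straightforward but easy to miscount, which is why I would be explicit about the ``$(d-1)$ then $d(d-1)$'' decomposition rather than trying to collapse it into a single induction.
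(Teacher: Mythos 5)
Your proof is correct and follows essentially the same approach as the paper: slice $g$ coordinate-by-coordinate, apply Lemma~\ref{lemma:Skof-02-quadratic} to each slice, symmetrize, sum, and bound the error via a $6d^2\theta$ per-slice piece plus a $(d^2-1)\theta$ unfolding piece. The only cosmetic difference is that you symmetrize each $G_{ij}$ before summing, whereas the paper builds the symmetrization into the definition of $G$ and correspondingly splits the $6d^2\theta$ into two $3d^2\theta$ terms; the resulting $G$ and bound are identical.
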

\begin{proof}
Define the complex-valued functions
\begin{align}
    g_{ij}(x, y) & := g(x \ev_i , y \ev_j), \quad i, j=1,\dots,d.
\end{align}
Note that $g_{ij}$ is not necessarily symmetric but it is $\theta$-biadditive in $\left[-T, T\right) \times \left[-T, T\right)$. By Lemma \ref{lemma:Skof-02-quadratic}, there is a (perhaps non-symmetric) biadditive function $G_{ij}$ such that
\begin{align}
    |g_{ij}(x, y) - G_{ij}(x, y)| \le 6\theta, \quad \forall \; x, y \in \left[ -T,T \right).
    \label{eq:g-to-G}
\end{align}
Define the symmetric, biadditive function
\begin{align}
    G(\xv, \yv) & := \sum_{i,j=1}^d \frac{1}{2} \left[G_{ij}(x_i, y_j) + G_{ji}(y_j, x_i)\right]
\end{align}
and consider
\begin{align}
    & |G(\xv, \yv)-g(\xv, \yv)| \nonumber\\
    & \le \left| \sum_{i,j=1}^d \frac{1}{2} \left[G_{ij}(x_i, y_j) + G_{ji}(y_j, x_i)\right] - g_{ij}(x_i, y_j) \right| \nonumber \\
    & \quad + \left|\left(\sum_{i,j=1}^d g_{ij}(x_i, y_j) \right) - g(\xv, \yv) \right| \nonumber\\
    & \overset{(a)}{\le} \sum_{i,j=1}^d \frac{1}{2} \left|G_{ij}(x_i,y_j) - g_{ij}(x_i, y_j)\right| \nonumber \\
    & \quad + \sum_{i,j=1}^d \frac{1}{2} \left|G_{ji}(y_j,x_i) - g_{ji}(y_j, x_i)\right| \nonumber \\
    & \quad + \left|\sum_{j=1}^d \left( \sum_{i=1}^d g_{ij}(x_i, y_j) \right) - g(\xv, y_j\ev_j) \right| \nonumber\\
    &\quad+\left|\left(\sum_{j=1}^d g(\xv, y_j\ev_j)\right) - g(\xv, \yv)\right|\nonumber\\
    & \overset{(b)}{\le} 3d^2\theta + 3d^2\theta + d(d-1)\theta + (d-1)\theta
\end{align}
where step $(a)$ follows by $g_{ij}(x,y)=g_{ji}(y,x)$ 
and step $(b)$ follows from \eqref{eq:kominek-local-01} and \eqref{eq:g-to-G}. Finally, by Lemma \ref{lemma:Skof-02-quadratic}, if the projection of $g(\xv, \yv)$ onto each coordinate is continuous in at least one point (for both arguments), then $G(\xv, \yv)$ is continuous and bilinear.
\end{proof}

\section{Proof of Lemma \ref{lemma:klebanov1}}
\label{appendix:klebanov}
Take $\tv_1=\tv_2=\tv$ and $\tv_1=-\tv_2=\tv$ in \eqref{eq:bern:004} to obtain the respective
\begin{align}
    f_{1}(2\tv) & = f_{1}(\tv)^2 |f_{2}(\tv)|^2+r_{\epsilon}(\tv, \tv)
    \label{eq:005-03} \\
    f_{2}(2\tv) & = |f_{1}(\tv)|^2 f_{2}(\tv)^2+r_{\epsilon}(\tv, -\tv)
    \label{eq:005-04}
\end{align}
where 
\begin{align}
    |r_{\epsilon}(\tv, \tv)| \le \epsilon, \quad
    |r_{\epsilon}(\tv, -\tv)| \le \epsilon.
\end{align}
By taking absolute values in \eqref{eq:005-03}-\eqref{eq:005-04}, we have
\begin{align}
    |f_{1}(2\tv)-r_{\epsilon}(\tv, \tv)| =|f_{2}(2\tv)-r_{\epsilon}(\tv, -\tv)|.\label{eq:005-05}
\end{align}
and therefore
\begin{align}
    \big| |f_{1}(2\tv)| - |f_{2}(2\tv)| \big| \le 2\epsilon .
    \label{eq:005-06}
\end{align}
Next for $i=1,2$ we expand
\begin{align}
    &\left|f_{i}(\tv)^2\left(|f_{1}(\tv)|^2-|f_{2}(\tv)|^2\right)\right|\nonumber\\
    & = \underbrace{\left|f_{i}(\tv)\right|^2}_{\displaystyle \le 1} \cdot \underbrace{\big| \left| f_{1}(\tv)|-|f_{2}(\tv)\right| \big|}_{\displaystyle \le 2\epsilon} \cdot \underbrace{\left( |f_{1}(\tv)|+|f_{2}(\tv)|\right)}_{\displaystyle \le 2}
    \label{eq:005-07}
\end{align}
where we used \eqref{eq:005-06} with $2\tv$ replaced with $\tv$. The expression \eqref{eq:005-07} implies
\begin{align}
    f_1(\tv)^2 |f_1(\tv)|^2 = f_1(\tv)^2 |f_2(\tv)|^2 + r^{(1)}_{\epsilon}(\tv)
    \label{eq:005-07a} \\
    f_2(\tv)^2 |f_1(\tv)|^2 = f_2(\tv)^2 |f_2(\tv)|^2 + r^{(2)}_{\epsilon}(\tv)
    \label{eq:005-07b}
\end{align}
where $|r^{(i)}_{\epsilon}(\tv)| \leq 4\epsilon$.
By combining \eqref{eq:005-07a}-\eqref{eq:005-07b} and \eqref{eq:005-03}-\eqref{eq:005-04}, we have
\begin{align}
    f_i(2\tv) = f_i(\tv)^2 |f_i(\tv)|^2+r^{(3)}_{\epsilon, i}(\tv)
    \label{eq:005-08-01}
\end{align}
where $|r^{(3)}_{\epsilon, i}(\tv)| \leq 5\epsilon$, for $i=1,2$.

\section{Proof of Lemma~\ref{lemma:pointwise-cf-to-pdf}}
\label{appendix:c}
The lemma is clearly true if $\epsilon=0$, so suppose $\epsilon>0$. Let $\Phi_{\Zm}(\tv)=\exp(-\tv^T \Qm_{\Zm}\,\tv/2)$ where $\Qm_{\Zm}$ is invertible. We have $f_{\Ym}(\tv)=f_{\Xm}(\tv)\Phi_{\Zm}(\tv)$ and
\begin{align}
    |\Phi(\tv)| \overset{(a)}{\le} |f_{\Ym}(\tv)| + \epsilon
    \overset{(b)}{\le} |\Phi_{\Zm}(\tv)| + \epsilon
    \label{eq:appd-bound1a}
\end{align}
where step $(a)$ follows by assumption and step $(b)$ follows by $|f_{\Xm}(\tv)|\le 1$. Consider $|\Phi(\tv)|=\exp(-\tv^T \Qm\, \tv/2)$ and let $\lambda_{\rm min}$ and $\lambda_{\Zm,{\rm min}}$ be the smallest eigenvalues of $\Qm$ and $\Qm_{\Zm}$, respectively. We then have $\lambda_{\Zm,{\rm min}}>0$ and
\begin{align}
    \lambda_{\rm min} & = \min_{\|\tv\|_2=1} \frac{\tv^T \Qm \, \tv}{\|\tv\|_2^2} \nonumber \\
    & \overset{(a)}{\ge} \min_{\|\tv\|_2=1} \frac{-2\ln\left( e^{-\tv^T \Qm_{\Zm}\,\tv/2} + \epsilon \right)}{\|\tv\|_2^2} \nonumber \\
    & = -2\ln\left( e^{-\lambda_{\Zm,{\rm min}}/2} + \epsilon \right) \nonumber \\
    & := \lambda_{\Zm,\epsilon}
    \label{eq:appd-bound1b}
\end{align}
where step $(a)$ follows by \eqref{eq:appd-bound1a}. We require $\lambda_{\Zm,\epsilon}>0$
for this bound to be useful, i.e., we require
\begin{align}
    \epsilon<1-e^{-\lambda_{\Zm,{\rm min}}/2}.
\end{align}
Observe that $\lambda_{\Zm,\epsilon} \rightarrow \lambda_{\Zm,{\rm min}}$ as $\epsilon \rightarrow 0$. Using \eqref{eq:appd-bound1b} we have
\begin{align}
    |f_{\Ym}(\tv) - \Phi(\tv)| & \le |f_{\Ym}(\tv)|+|\Phi(\tv)| \nonumber \\
    & \le e^{-\lambda_{\Zm,{\rm min}}\|\tv\|_2^2/2}+e^{-\lambda_{\rm min}\|\tv\|_2^2/2} \nonumber \\
    & \le 2 e^{-\|\tv\|_2^2 \,\lambda_{\Zm,\epsilon}/2}.
    \label{eq:appd-bound1c}
\end{align}
For any $T_1 \ge 0$ define
\begin{align}
    \tilde T_1 = T_1 \sqrt{\lambda_{\Zm,\epsilon}}, \quad
    \tilde \tv = \tv \sqrt{\lambda_{\Zm,\epsilon}}
\end{align}
and consider
\begin{align}
    & (2\pi)^d \left| p(\yv) - \phi(\yv) \right|
    =\left| \int_{\mathbb R^d} e^{-j \tv^T \yv} ( f_{\Ym}(\tv) - \Phi(\tv) ) \,d\tv \right|
    \nonumber \\
    & \overset{(a)}{\le} \int_{\|\tv\|\le T_1} \epsilon \, d\tv + \int_{\|\tv\| > T_1} 2 e^{-\|\tv\|_2^2\lambda_{\Zm,\epsilon}/2} \, d\tv \nonumber \\
    & \overset{(b)}{\le} 2 T_1^d  \epsilon + 2 \left(\frac{2 \pi}{\lambda_{\Zm,\epsilon}} \right)^{d/2} \int_{\|\tilde \tv\|_{\infty}> \frac{\tilde T_1}{d}}  \frac{e^{-\frac{1}{2} \|\tilde \tv\|_2^2}}{(2\pi)^{d/2}}  \, d\tilde \tv
    \nonumber \\
    & = 2 T_1^d  \epsilon + 2 \left(\frac{2 \pi}{\lambda_{\Zm,\epsilon}} \right)^{d/2} \Pr{\max_{1\le i\le d} \, |Z_i| \ge \tilde T_1/d} \nonumber\\
    & = 2 T_1^d  \epsilon + 2 \left(\frac{2 \pi}{\lambda_{\Zm,\epsilon}} \right)^{d/2} \left[1-\left(1-2Q\left(\tilde T_1/d\right)\right)^d \right] \nonumber\\
    & \overset{(c)}{\le} 2 T_1^d  \epsilon + 4d \left(\frac{2 \pi}{\lambda_{\Zm,\epsilon}} \right)^{d/2}\,Q\left(\tilde T_1/d\right) \nonumber\\
    & \overset{(d)}{\le} 2 T_1^d  \epsilon + 4d \left(\frac{2 \pi}{\lambda_{\Zm,\epsilon}} \right)^{d/2}\,e^{-(\tilde T_1/d)^2/2}
    \label{eq:appd-bound2}
\end{align}
where step $(a)$ follows by the assumption and \eqref{eq:appd-bound1c}; step $(b)$ follows by \eqref{eq:normbound2} and
\eqref{eq:volume-1-d-ball}; step $(c)$ follows by Bernoulli's inequality; and step $(d)$ follows from the Chernoff bound for the $Q$-function. We may choose $\tilde{T}_1$ so that
\begin{align}
      e^{-(\tilde{T}_1/d)^2/2} = \epsilon \; \text{ or }
      \; \tilde{T}_1 = d \sqrt{-2\ln \epsilon}.
\end{align}
The pointwise bound \eqref{eq:appd-bound2} thus becomes
\begin{align}
    | p(\yv) - \phi(\yv) |
    & \le  \frac{2 (d\sqrt{-2\ln\epsilon})^d \epsilon + 4d  (2\pi)^{d/2}\epsilon}{(2\pi)^d \left(-2\ln\left( e^{-\lambda_{\Zm,{\rm min}}/2} + \epsilon \right)\right)^{d/2}} \nonumber\\
    & := B_1(\epsilon)
    \label{eq:appd-bound3}
\end{align}
and we have $B_1(\epsilon)\rightarrow0$ as $\epsilon\rightarrow0$.

\section{Proof of Lemma~\ref{lemma:pointwise-pdf-to-L1}}
\label{appendix:lemma:pointwise-pdf-to-L1}
The lemma is clearly true if $\epsilon=0$, so suppose $\epsilon>0$. We first prove~\eqref{eq:bern:052-03c}. Consider the bound
\begin{align}
    |p(\yv)-q(\yv)| \le p(\yv)+q(\yv)
    \label{eq:appd-bound4a}
\end{align}
and for any $T_2 \ge 0$ use \eqref{eq:bern:052-03b} to write
\begin{align}
    \Pr{\|\Ym_q\| > T_2}
    & = 1 - \int_{\|\yv\|\le T_2} q(\yv) \,d\yv \nonumber \\
    & \le \Pr{\|\Ym_p\| > T_2} + \int_{\|\yv\|\le T_2} B_1(\epsilon) \,d\yv .
    \label{eq:appd-bound4b}
\end{align}
We also have
\begin{align}
    \E{\|\Ym_p\|^2} & \ge \int_{\|\yv\|> T_2} p(\yv) \, \underbrace{\|\yv\|_2^2}_{\displaystyle \ge \|\yv\|^2/d} \, d\yv \nonumber \\
    & \ge \frac{T_2^2}{d} \Pr{\|\Ym_p\| > T_2} \label{eq:appd-bound4c}
\end{align}
and therefore
\begin{align}
	& \| p-q \| \nonumber \\
	& \overset{(a)}{\le} \int_{\|\yv\|\le T_2} B_1(\epsilon) \, d\yv + \int_{\|\yv\| > T_2} ( p(\yv) + q(\yv) ) \, d\yv \nonumber \\
	& \overset{(b)}{\le} 4 T_2^d B_1(\epsilon)
	+ \frac{2d\,\E{\|\Ym_p\|^2}}{T_2^2}
	\label{eq:appd-bound5}
\end{align}
where $(a)$ follows by \eqref{eq:bern:052-03b} and \eqref{eq:appd-bound4a} and $(b)$ follows by \eqref{eq:appd-bound4b}-\eqref{eq:appd-bound4c}.
We may choose $T_2$ so that $B_1(\epsilon)=T_2^{-(d+3)}$ and
\begin{align}
    \| p-q \|
    \le \frac{4}{T_2^3}+\frac{2d\,\E{\|\Ym_p\|^2}}{T_2^2}
    := B_2(\epsilon)
    \label{eq:appd-bound6}
\end{align}
and we have $B_2(\epsilon)\rightarrow0$ as $\epsilon\rightarrow0$.

We next prove~\eqref{eq:bern:052-03d}. Let $\xv\in\mathbb R^d$ and consider the expression
\begin{align}
    & \xv^T \left( \E{\Ym_q\Ym_q^T} - \E{\Ym_p\Ym_p^T} \right) \xv \nonumber \\
    & = \int_{\mathbb R^d} \big( q(\yv) - p(\yv) \big) \, |\xv^T \yv|^2 \, d\yv .
    \label{eq:appd-bound8}
\end{align}
We split the integral in \eqref{eq:appd-bound8} into two integrals over the regions $\|\yv\|\le T_2$ and $\|\yv\|> T_2$. For the first region, we have
\begin{align}
    & \int_{\|\yv\|\le T_2} \big( q(\yv) - p(\yv) \big) \, |\xv^T \yv|^2 \, d\yv \le 2 T_2^{d+2} B_1(\epsilon) \, \| \xv \|_2^2
    \label{eq:appd-bound8a}
\end{align}
where we used the Cauchy-Schwarz inequality and $\|\yv\|_2\le\|\yv\|$ to write $|\xv^T \yv|^2\le \|\xv\|_2^2\|\yv\|_2^2 \le \|\xv\|_2^2 \, T_2^2$.
We also have
\begin{align}
    & \int_{\|\yv\|> T_2} \big( q(\yv) - p(\yv) \big) \, |\xv^T \yv|^2 \, d\yv \nonumber \\
    & \le \E{|\xv^T \Ym_q|^2\cdot 1(\|\Ym_q\|>T_2)} \nonumber \\
    & \overset{(a)}{\le} \sqrt{\E{|\xv^T \Ym_q|^4}} \cdot \sqrt{\Pr{\|\Ym_q\|>T_2}} \nonumber \\
    & \overset{(b)}{\le} \|\xv\|_2^2 \sqrt{\E{\left|\frac{\xv^T}{\|\xv\|_2} \Ym_q\right|^4}} \cdot \sqrt{\frac{d\,\E{\|\Ym_p\|^2}}{T_2^2} + 2 T_2^d B_1(\epsilon)}
    \label{eq:appd-bound8b}
\end{align}
where step $(a)$ follows by the Cauchy-Schwarz inequality, and step $(b)$ follows by \eqref{eq:appd-bound4b}-\eqref{eq:appd-bound4c}. Observe that the first expectation in \eqref{eq:appd-bound8b} is the fourth moment of a projection of $\Ym_q$ onto a unit vector, which is bounded by assumption. For $B_1(\epsilon)=T_2^{-(d+3)}$ as above we thus have
\begin{align}
    \xv^T \E{\Ym_q\Ym_q^T} \xv \le \xv^T \left(\E{\Ym_p\Ym_p^T} + B_3(\epsilon) \, \Id_d \right) \xv
    \label{eq:appd-bound9}
\end{align}
where
\begin{align}
    B_3(\epsilon) := \frac{2}{T_2} + \sqrt{\E{\left|\frac{\xv^T}{\|\xv\|_2} \Ym_q\right|^4}} \cdot \sqrt{\frac{d\,\E{\|\Ym_p\|^2}}{T_2^2} + \frac{2}{T_2^3}}
    \label{eq:appd-bound9a}
\end{align}
and $B_3(\epsilon)\rightarrow 0$ as $\epsilon\rightarrow0$.

We remark that the Cauchy-Schwarz inequality is a particular case (with $r=s=2$) of H\"older's inequality which states that $\E{|XY|}\le\E{|X|^r}^{1/r} \E{|Y|^s}^{1/s}$ where $r,s\ge 1$ and $1/r+1/s=1$. If one chooses $r=1+\delta/2$ and $s=(2+\delta)/\delta$ in step $(a)$ of \eqref{eq:appd-bound8b}, then one can weaken the requirement of the existence of a fourth moment and permit the $r=2+\delta$ moments of $\Ym_q$ to be bounded for any $\delta>0$.

\section{Robust $\epsilon$-Dependence}\label{appendix:robust}
The Cauchy-Schwarz inequality gives
\begin{align}
    |f_{\Xm_1,\Xm_2}(\tv_1,\tv_2)|^2 \le  |f_{\Xm_1}(\tv_1)|^2 \cdot |f_{\Xm_2}(\tv_2)|^2
    \label{eq:abs-continuity}
\end{align}
so the joint c.f. $f_{\Xm_1,\Xm_2}$ is a.c.\ with respect to the product c.f. $f_{\Xm_1} f_{\Xm_2}$.
One can now strengthen Definition~\ref{def:eT-dependent} as follows, in analogy to how weak typicality can be strengthened to \emph{robust} typicality~\cite[Ch.~3.3]{masseyapplied1},~\cite[Ch.~2.4]{ElGamal2011}.

\begin{definition} \label{def:robust-e-dependence}
The random vectors $\Xm_1$ and $\Xm_2$ are \emph{robustly $\epsilon$-dependent} if
\begin{align}
    & |f_{\Xm_1,\Xm_2}(\tv_1, \tv_2)-f_{\Xm_1}(\tv_1)f_{\Xm_2}(\tv_2)| \nonumber \\
    & \le \epsilon \cdot |f_{\Xm_1}(\tv_1)|\cdot |f_{\Xm_2}(\tv_2)|
    \label{eq:eT-dependent-robust}
\end{align}
for all $\tv_1,\tv_2$, $i=1,2$.
\end{definition}

Observe that Definition~\ref{def:robust-e-dependence} implies
\begin{align}
    \log \frac{\left| f_{\Xm_1,\Xm_2}(\tv_1, \tv_2)  \right|}{\left| f_{\Xm_1}(\tv_1)  \right| \cdot \left| f_{\Xm_2}(\tv_2)  \right|} \le \log(1+\epsilon) \le \epsilon \log(e)
\end{align}
so a log ratio (or log difference) must be small and not only an additive difference. For example, if~\eqref{eq:eT-dependent-robust} is valid then one may replace the bound in~\eqref{eq:bern:004-r} with
\begin{align}
    |r_{\epsilon}(\tv_1, \tv_2)| \le \epsilon \cdot |f_1(\tv_1) | \cdot |f_1(\tv_2)| \cdot |f_2(\tv_1)| \cdot |f_2(-\tv_2)| .
    \label{eq:bern:004-r-app}
\end{align}
Inserting~\eqref{eq:bern:004-r-app} in~\eqref{eq:bern:006} and using $|\ln(1+z)|\le 3|z|/2$ we may rewrite~\eqref{eq:bern:007} as
\begin{align}
    |R_{\epsilon}(\tv_1,\tv_2)| \le 3\epsilon/2
    \label{eq:bern:007-app}
\end{align}
so the $p$ effectively becomes one. We can thus choose $T=\infty$, thereby avoiding to split the analysis into restricted ($\|\tv\|\le T$, Theorem~\ref{thm:boundedstability}) and unrestricted (Theorem~\ref{thm:unboundedstability}) domains. Robust $\epsilon$-dependence also gives a stronger bound than \eqref{eq:bern:052-03}, namely
\begin{align}
    |f_i(\tv)-\Phi_i(\tv)|\le C(\epsilon) \cdot |\Phi_i(\tv)|
    \label{eq:bern:052-03-app}
\end{align}
for all $\tv$ where $C(\epsilon)$ is as in \eqref{eq:Cep} with $p=1$.

Note that small mutual information does not necessarily imply robust $\epsilon$-dependence with small $\epsilon$, which is why we use the weaker $\epsilon$-dependence; see~Lemma~\ref{lemma:dt-to-mutualinfo}. However, Definition~\ref{def:robust-e-dependence} might be of independent interest. For example, the bound \eqref{eq:bern:052-03-app} gives a simple pointwise bound in the probability domain (cf. Lemma~\ref{lemma:pointwise-cf-to-pdf}):
\begin{align}
    \left| p_i(\xv) - \phi_i(\xv) \right|
    & = \left| \frac{1}{(2\pi)^d} \int_{\mathbb{R}^{d}} e^{-j\tv^T\xv}\left[ f_i(\tv) - \Phi_i(\tv) \right]  \,d\tv \right| \nonumber \\
    & \le \frac{1}{(2\pi)^d} \int_{\mathbb{R}^{d}} \left| f_i(\tv) - \Phi_i(\tv) \right| \, d\tv 
    \nonumber \\
    & \le \frac{1}{(2\pi)^d} \int_{\mathbb{R}^{d}} 
    C(\epsilon) \left| \Phi_i(\tv) \right| \,d\tv \nonumber \\
    & = C(\epsilon) \det\left( 2\pi \widehat \Qm \right)^{-1/2}
    \label{eq:pointwise-p1}
\end{align}
where $\widehat \Qm$ is the covariance matrix of the Gaussian c.f. $\Phi_i$. Also, we have the following lemma for the Gaussian product channel \eqref{eq:product-channel-1a}-\eqref{eq:product-channel-2a}.

\begin{lemma} \label{lemma:robustly-epsilon-dependent}
$\Ym_{11},\Ym_{22}$ are robustly $\epsilon$-dependent if and only if $\Xm_1,\Xm_2$ are robustly $\epsilon$-dependent.
\end{lemma}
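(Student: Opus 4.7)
The plan is to reduce the claim to a direct algebraic manipulation using the identity~\eqref{eq:XY-epsilon-dependence} together with the fact that the Gaussian c.f.\ magnitudes $|\Phi_{\Zm_{ii}}(\tv)| = e^{-\|\tv\|_2^2/2}$ are strictly positive for every $\tv$, and that $\Gm_1,\Gm_2$ are invertible.

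First I would write down the robust $\epsilon$-dependence condition for $\Ym_{11},\Ym_{22}$, namely
\begin{align}
    &\left|f_{\Ym_{11},\Ym_{22}}(\tv_1,\tv_2)-f_{\Ym_{11}}(\tv_1)f_{\Ym_{22}}(\tv_2)\right| \nonumber \\
    &\quad \le \epsilon\,|f_{\Ym_{11}}(\tv_1)|\,|f_{\Ym_{22}}(\tv_2)| ,
\end{align}
and then use the factorizations $f_{\Ym_{ii}}(\tv)=f_{\Xm_i}(\Gm_i^T\tv)\,\Phi_{\Zm_{ii}}(\tv)$ together with~\eqref{eq:XY-epsilon-dependence} to express both sides with the factors $|\Phi_{\Zm_{11}}(\tv_1)|\,|\Phi_{\Zm_{22}}(\tv_2)|$ pulled out explicitly.

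Next I would cancel these positive Gaussian factors from both sides. This is the only place where one needs a genuine observation: since $|\Phi_{\Zm_{ii}}(\tv)|>0$ for every $\tv\in\mathbb{R}^d$, division is legitimate pointwise and yields
\begin{align}
    &\left|f_{\Xm_1,\Xm_2}(\Gm_1^T\tv_1,\Gm_2^T\tv_2)-f_{\Xm_1}(\Gm_1^T\tv_1)f_{\Xm_2}(\Gm_2^T\tv_2)\right| \nonumber \\
    &\quad \le \epsilon\,|f_{\Xm_1}(\Gm_1^T\tv_1)|\,|f_{\Xm_2}(\Gm_2^T\tv_2)| .
\end{align}
Finally, because $\Gm_1,\Gm_2$ are invertible, the change of variables $\tv_i'=\Gm_i^T\tv_i$ is a bijection of $\mathbb{R}^d$ onto itself, so the bound above holds for all $\tv_1,\tv_2$ if and only if the corresponding bound holds for all $\tv_1',\tv_2'$; the latter is exactly robust $\epsilon$-dependence of $\Xm_1,\Xm_2$. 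The same chain, read in reverse, gives the other direction.

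There is no real obstacle: the Gaussian noise c.f.s cancel cleanly because they are nonzero, and invertibility of $\Gm_1,\Gm_2$ supplies the bijection needed. The contrast with the (non-robust) $\epsilon$-dependence considered in Sec.~\ref{sec:AGN-Product} is precisely that \eqref{eq:eT-dependent} has an additive $\epsilon$ on the right-hand side with no multiplicative factor, so multiplying/dividing by $|\Phi_{\Zm_{ii}}|\le 1$ is a one-way operation; the multiplicative form of~\eqref{eq:eT-dependent-robust} makes the equivalence effortless.
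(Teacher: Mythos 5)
Your proposal is correct and takes the same route as the paper: both start from the factorization $f_{\Ym_{ii}}(\tv)=f_{\Xm_i}(\Gm_i^T\tv)\,\Phi_{\Zm_{ii}}(\tv)$ and the identity \eqref{eq:XY-epsilon-dependence} (equivalently \eqref{eq:robustly-epsilon-dependent}), then observe that the strictly positive Gaussian factors cancel in the ratio. You are somewhat more explicit than the paper, which simply says ``the lemma now follows by the identity''; in particular you spell out the two facts that make the cancellation yield an equivalence rather than just an implication — nonvanishing of $|\Phi_{\Zm_{ii}}(\tv)|$ everywhere, and invertibility of $\Gm_1,\Gm_2$ so that $(\tv_1,\tv_2)\mapsto(\Gm_1^T\tv_1,\Gm_2^T\tv_2)$ is a bijection of $\mathbb R^d\times\mathbb R^d$ and the ``for all'' quantifiers transfer in both directions. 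Your closing remark contrasting this with the ordinary (additive) $\epsilon$-dependence of Definition~\ref{def:eT-dependent} correctly identifies why that weaker notion transfers only one way, which matches the discussion following \eqref{eq:XY-epsilon-dependence}.
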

\begin{proof}
The result follows by \eqref{eq:XY-epsilon-dependence} that we re-state here for convenience. Recall that $f_{\Gm\Xm}(\tv)=f_{\Xm}(\Gm^T\tv)$ and therefore
\begin{align}
    f_{\Ym_{ii}}(\tv_i) & = f_{\Xm_i}(\Gm_i^T\tv_i)\, e^{-\frac{1}{2}\|\tv\|^2}, \quad i=1,2 \nonumber \\
    f_{\Ym_{11},\Ym_{22}}(\tv_1,\tv_2) & = f_{\Xm_1,\Xm_2}(\Gm_1^T\tv_1,\Gm_2^T\tv_2)\, e^{-\|\tv\|^2}
\end{align}
so that
\begin{align}
    & \left| f_{\Ym_{11},\Ym_{22}}(\tv_1,\tv_2) - f_{\Ym_{11}}(\tv_1)f_{\Ym_{22}}(\tv_2) \right| \nonumber \\
    & = \left| f_{\Xm_1,\Xm_2}\left( \Gm_1^T\tv_1,\Gm_2^T\tv_2 \right) - f_{\Xm_1}\left( \Gm_1^T\tv_1 \right) 
    f_{\Xm_2}\left( \Gm_2^T\tv_2 \right) \right| \nonumber \\
    & \quad \cdot e^{-\|\tv\|^2}.
    \label{eq:robustly-epsilon-dependent}
\end{align}
The lemma now follows by the identity \eqref{eq:robustly-epsilon-dependent}.
\end{proof}

\bibliographystyle{IEEEbib}
\bibliography{Ref-DT}

\begin{IEEEbiographynophoto}{Mohammad Mahdi Mahvari}
received his B.Sc. and M.Sc. in electrical engineering from Iran University of Science and Technology and Sharif University of Technology, Tehran, Iran, in 2017 and 2019, respectively. He is currently pursuing the Dr.-Ing. degree at the Technical University of Munich, Germany. His research interests include information theory and communication theory.
\end{IEEEbiographynophoto}

\begin{IEEEbiographynophoto}{Gerhard Kramer}
 (Fellow, IEEE) received the Dr. sc. techn. degree from ETH Zurich in 1998. From 1998 to 2000, he was with Endora Tech AG, Basel, Switzerland, and from 2000 to 2008, he was with the Math Center, Bell Labs, Murray Hill, NJ, USA. He joined the University of Southern California, Los Angeles, CA, USA, as a Professor of electrical engineering in 2009. He joined the Technical University of Munich (TUM) as a Professor of communications engineering in 2010. Since 2019, he has been the TUM Senior Vice President of Research and Innovation. His research interests include information theory and communications theory, with applications to wireless, copper, and optical fiber networks. He served as the 2013 President of the IEEE Information Theory Society.
\end{IEEEbiographynophoto}

\end{document}